\newtheorem{theorem}{Theorem}
\newtheorem{lemma}{Lemma}
\newtheorem{definition}{Definition}
\DeclareMathAlphabet{\mathcal}{OMS}{cmsy}{m}{n}
\DeclareMathAlphabet{\mathrm}{OT1}{bch}{m}{n}
\DeclareMathAlphabet{\mathit}{OT1}{bch}{m}{it}
\newcommand{\sysname}{\textsf{MVPipe}\xspace}
\renewcommand{\paragraph}[1]{\noindent\textbf{#1}}
\newcommand*{\algrule}[1][\algorithmicindent]{%
\hspace*{.2em}
\vrule 
\hspace*{\dimexpr#1-.2em-.4pt}%
}
\newcommand{\StatePar}[1]{%
    \State\parbox[t]{\dimexpr\linewidth-\ALG@thistlm}{\strut #1\strut}%
}
\renewcommand{\ALG@beginalgorithmic}{\offinterlineskip}
\def\ALG@printindent{%
    \ifnum \theALG@nested>0
    \ifx\ALG@text\ALG@x@notext
    \else
    \unskip
    \ALG@printindent@tempcnta=1
    \loop
    \algrule[\csname ALG@ind@\the\ALG@printindent@tempcnta\endcsname]%
    \advance \ALG@printindent@tempcnta 1
    \ifnum
    \ALG@printindent@tempcnta<\numexpr\theALG@nested+1\relax
    \repeat
    \fi
    \fi
}
\algrenewcommand\algorithmicend{\strut\textbf{end}}
\algrenewcommand\algorithmicdo{\strut\textbf{do}}
\algrenewcommand\algorithmicwhile{\strut\textbf{while}}
\algrenewcommand\algorithmicfor{\strut\textbf{for}}
\algrenewcommand\algorithmicforall{\strut\textbf{for all}}
\algrenewcommand\algorithmicloop{\strut\textbf{loop}}
\algrenewcommand\algorithmicrepeat{\strut\textbf{repeat}}
\algrenewcommand\algorithmicuntil{\strut\textbf{until}}
\algrenewcommand\algorithmicprocedure{\strut\textbf{procedure}}
\algrenewcommand\algorithmicfunction{\strut\textbf{function}}
\algrenewcommand\algorithmicif{\strut\textbf{if}}
\algrenewcommand\algorithmicthen{\strut\textbf{then}}
\algrenewcommand\algorithmicelse{\strut\textbf{else}}
\algrenewcommand\algorithmicrequire{\strut\textbf{Input:}}
\algrenewcommand\algorithmicensure{\strut\textbf{Output:}}
\let\oldState\State
\renewcommand{\State}{\oldState\strut}
\begin{document}

\title{MVPipe: Enabling Lightweight Updates and Fast Convergence in
Hierarchical Heavy Hitter Detection}

\author{Lu Tang,
        Qun Huang,
        Patrick P. C. Lee%
\thanks{The work was supported by Key-Area Research and Development Program
of Guangdong Province 2020B0101390001, Joint Funds of the National Natural
Science Foundation of China (U20A20179), National Natural Science Foundation
of China (62172007), the Fundamental Research Funds for the Central
Universities (20720210072), and the Natural Science Foundation of Fujian
Province of China (2021J05002).}
\thanks{L. Tang is with Department of Computer Science and Technology, Xiamen 
University, Xiamen, China (Email: tanglu@xmu.edu.cn).} 
\thanks{Q. Huang is with Department of Computer Science and Technology, Peking
University, Beijing, China (Email: huangqun@pku.edu.cn).}
\thanks{P. P. C. Lee is with the Department of Computer Science
and Engineering, The Chinese University of Hong Kong, Hong Kong, China
(Email:pclee@cse.cuhk.edu.hk).}%
\thanks{Corresponding author: Patrick P. C. Lee.}
}


\maketitle

\begin{abstract}
Finding hierarchical heavy hitters (HHHs) (i.e., hierarchical aggregates with
exceptionally huge amounts of traffic) is critical to network management, yet
it is often challenged by the requirements of fast packet processing,
real-time and accurate detection, as well as resource efficiency. Existing HHH
detection schemes either incur expensive packet updates for multiple
aggregation levels in the IP address hierarchy, or need to process sufficient
packets to converge to the required detection accuracy.  We present \sysname,
an invertible sketch that achieves both lightweight updates and
fast convergence in HHH detection.  \sysname builds on the skewness property
of IP traffic to process packets via a pipeline of majority voting executions,
such that most packets can be updated for only one or few aggregation levels
in the IP address hierarchy.  We show how \sysname
can be feasibly deployed in P4-based programmable switches subject to limited
switch resources.  We also theoretically analyze the accuracy and coverage
properties of \sysname.  Evaluation with real-world Internet traces shows that
\sysname achieves high accuracy, high throughput, and fast convergence
compared to six state-of-the-art HHH detection schemes.  It also incurs low
resource overhead in the Tofino switch deployment. 
\end{abstract}

\section{Introduction} 
\label{sec:introduction}

Network administrators often need to measure and characterize the anomalous
behaviors of IP traffic in operational networks.  IP traffic is inherently
{\em hierarchical}. It can be organized in hierarchical forms in one or
multiple dimensions. For example, it can be aggregated either by source IP
address prefixes (i.e., one-dimensional (1D)), or by the source-destination IP
address prefixes (i.e., two-dimensional (2D)).  Given the
hierarchical nature of IP traffic, finding {\em hierarchical heavy hitters
(HHHs)} (i.e., the hierarchical aggregates with exceptionally huge amounts
of traffic) is of particular interest to network measurement
\cite{Estan2003,Zhang2004,Jose2011,Moshref2013}.  One notable application of
HHH detection is to identify distributed denial-of-service (DDoS) or botnet
attacks \cite{Sekar2006,Fayaz2015}, in which the traffic aggregates of
multiple attack flows can bring substantial damage to a network.  

Unlike the classical {\em heavy hitter (HH)} detection problem
\cite{Estan2002,Karp2003,Metwally2005}, whose goal is to identify individual
large-sized flows (i.e., HHs),  HHH detection is a much more challenging task
as it needs to identify not only the HHs, but also the set of flows that have
small sizes each but have a huge aggregate size when combined together.
As there are many possibilities for aggregating traffic at different levels in
the IP address hierarchy (e.g., multiple lengths of prefixes in IP addresses),
enumerating all possible combinations of traffic aggregates is infeasible for
HHH detection.  This motivates the need for specialized algorithmic designs for
HHH detection. 

Like most network measurement tasks, practical HHH detection schemes need to
address the challenges of managing the ever-increasing speed and size of IP
traffic in modern networks. For a typical backbone link with a
bandwidth of tens or hundreds of Gigabits per second, network measurement tasks
should efficiently track millions of concurrently active flows at any time.
Maintaining per-flow states, or even any combination of traffic aggregates in
HHH detection, inevitably has tremendous resource demands.  In addition, with
the emergence of programmable networking, new network measurement solutions
(e.g., \cite{Sivaraman2017,Gupta2018,Narayana2017}) often offload packet
processing to programmable hardware switches for scalable
network measurement.  Unfortunately, the available switch resources 
are scarce (e.g., less than 2\,MB of SRAM per processing stage
\cite{Bosshart2013,Sivaraman2017}), thereby complicating the use of
HHH detection in programmable hardware switches.   To this end, HHH detection
should aim for the following design requirements: (i) {\em fast packet
processing} (i.e., keeping pace with the line rate of operational networks),
(ii) {\em real-time and accurate detection} (i.e., identify all HHHs in
real-time with low false positive/negative rates), and (iii) {\em resource
efficiency} (i.e., the computational and memory resources should be limited
within their available capacities in both hardware and software). 

HHH detection has been extensively studied in the literature for more than a
decade (see \S\ref{sec:related} for details).  
One class of HHH detection schemes is {\em streaming-based}
\cite{Lin2007,Cormode2008,Mitzenmacher2012,Zhang2004,Truong2009,Moraney2020}, in
which they use memory-efficient stream data structures to process IP traffic
and detect HHHs at short time scales, at the expense of incurring bounded
errors on detection.  However, such schemes often have high processing costs
to capture multiple aggregation levels of HHHs in stream data structures, and
hence cannot be readily scaled to line-rate processing in modern networks.
Another class of HHH detection schemes is {\em sampling-based}, by updating a
sketch instance
with only a sampled subset of packets \cite{BenBasat2017,BenBasat2020}.  One
representative example is randomized HHH (RHHH) \cite{BenBasat2017}, which
detects HHHs at long time scales. RHHH maintains multiple instances of sketches
for different aggregation levels and randomly selects one instance to update per
packet. It has high update performance, but has slow convergence, as the HHHs
cannot be detected until sufficient packets have been processed.  
Furthermore, both streaming-based and sampling-based HHH detection schemes
have been deployed in hardware
\cite{Jose2011,Moshref2013,Popescu2017,Kuvcera2020,BenBasat2020}, but 
their designs often face different limitations, such as relying on a
controller to specify what HHHs are monitored \cite{Jose2011,Moshref2013},
requiring specialized hardware (e.g., TCAM) to maintain high update throughput
\cite{Popescu2017, Kuvcera2020}, or sampling packets to trade convergence for
resource efficiency in hardware \cite{BenBasat2020}.

We present \sysname, an invertible sketch that achieves lightweight updates,
fast convergence, and resource efficiency in HHH detection, in both
software and hardware.  By ``invertible'', we mean that \sysname can
directly return all HHHs (with high accuracy) from the data structure itself.
\sysname's design builds on the observation that IP traffic is highly skewed
across multiple aggregation levels, in which most IP traffic belonging to large
flows can be aggregated in a single level, while only a small fraction of
traffic needs to be aggregated across all levels in the IP address hierarchy.
Specifically, \sysname maintains its sketch with small and static memory
allocation (i.e., the memory can be pre-allocated a priori) and tracks
aggregates via the pipelined executions of the {\em majority vote algorithm
(MJRTY)} \cite{Boyer1991}.  For most packets, \sysname only updates a single
level with a single MJRTY execution, while only for a small fraction of
packets, \sysname needs to update more levels along the pipeline with multiple
MJRTY executions (i.e., lightweight updates).  In addition, \sysname processes
all packets within its sketch, so it can detect HHHs at short time scales
(i.e., fast convergence) as opposed to sampling-based approaches.  Furthermore,
with small and static memory allocation, \sysname can be readily deployed in
programmable hardware switches.  

While we motivate our HHH detection problem from the hierarchical nature of IP
traffic, we expect that our \sysname design is also applicable to general
types of hierarchical datasets, such as geographic or temporal datasets
\cite{Mitzenmacher2012}.

The contributions of this paper are summarized as follows.
\begin{itemize}[leftmargin=*]
\item 
We design \sysname, a novel invertible sketch for HHH detection with three
major design features: lightweight updates, fast convergence, and resource
efficiency for deployment in both software and hardware.
\item
We implement \sysname on P4-based programmable switches \cite{p4} and compile
our prototype in the Tofino chipset \cite{tofino}, subject to limited
hardware resources. 
\item
We conduct theoretical analysis on \sysname, including its space and time
complexities, accuracy, and coverage.
\item 
We conduct trace-driven evaluation on \sysname in both software and hardware.
Evaluation in software shows that \sysname achieves higher detection accuracy,
faster convergence, and up to $22.13\times$ throughput gain compared to six
state-of-the-art HHH detection schemes. \sysname also incurs limited resource
overhead in the Tofino switch deployment. 
\end{itemize}

We open-source our \sysname prototype in both software and P4 at
{\bf \url{https://github.com/Grace-TL/MVPipe}}.

\section{Problem Formulation}
\label{sec:problem}

We formulate the HHH detection problem; similar formulations are also found in
\cite{Cormode2008,Mitzenmacher2012,BenBasat2017}.  
We focus on IP traffic, which can be aggregated by different prefixes in the
IP address space.  We model IP traffic as a stream of packets. Each packet is
denoted by a tuple $(f, v_f)$ and is allowed to be processed only once.  In
network measurement, $f$ identifies a flow,  and $v_f$ is
either one (for packet counting) or the packet size (for byte counting).  In
this work, we consider one-dimensional (1D) and two-dimensional (2D) HHH
detection: for 1D HHH detection, $f$ refers to a source IP address (the same
arguments hold for a destination IP address); for 2D HHH detection, $f$ refers
to a source-destination IP address pair.  

We aggregate source addresses or source-destination address pairs by the
address prefixes at either byte-level or bit-level granularities; we refer to
them as {\em 1D-byte}, {\em 2D-byte}, {\em 1D-bit}, and {\em 2D-bit}
hierarchies.  Figure~\ref{fig:hierarchy} shows the 1D-byte and 2D-byte
hierarchies.  Each {\em node} corresponds to the {\em key} of a flow at a
certain aggregation level in a hierarchy.  We define the {\em level} of a node
as the position in a hierarchy of depth $d$, where the level ranges between 0
and $d-1$.  The key at the lowest level~0 is the most specific and
refers to an exact address (1D) or an address pair (2D), while the key at the
highest level $d-1$ corresponds to the most general aggregate (i.e., all
addresses or address pairs).  In general, a key refers to an address
prefix (1D) or a pair of address prefixes (2D).  The keys of lower-level nodes
(a.k.a.  {\em descendants}) can be {\em generalized} to the keys of their
higher-level nodes (a.k.a.  {\em ancestors}); for example, the key $1.2.3.4$
can be generalized by one byte to $1.2.3.*$.  Let $\prec$ be the
generalization relation of two keys.  For any keys $x$ and $y$, we say
that $x\prec y$ if $x$ can be generalized to $y$, and that $x\preceq y$ if
$x\prec y$ or $x=y$.

\begin{figure}[!t]
\centering
\begin{tabular}{@{\ }c@{\ }c}
\includegraphics[width=0.5in]{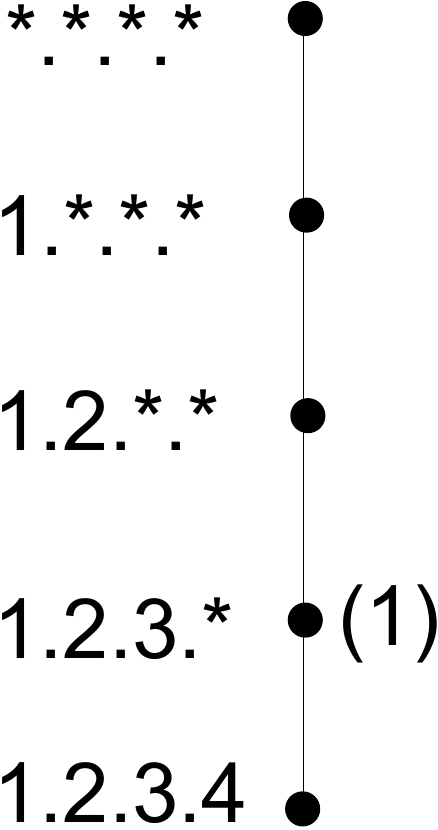} \hspace{20pt} &
\includegraphics[width=2in]{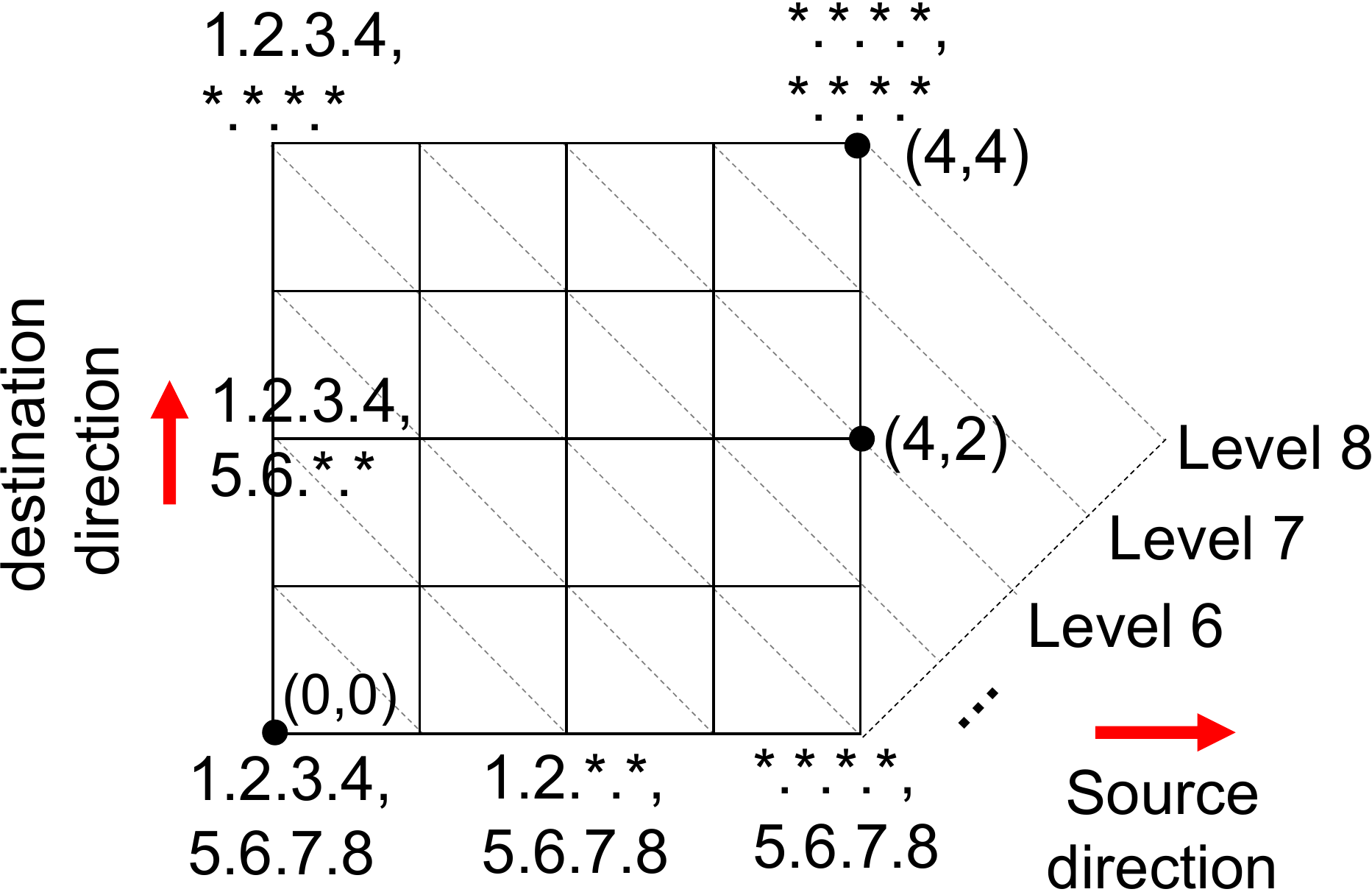} \\
{\small (a) 1D-byte hierarchy} \hspace{20pt} & 
{\small (b) 2D-byte hierarchy}
\end{tabular}
\vspace{-3pt}
\caption{1D-byte and 2D-byte hierarchies.}
\label{fig:hierarchy}
\vspace{-6pt}
\end{figure}

To quantify the level of a node, we associate each node with a {\em
coordinate} in multi-dimensional space as shown in
Figure~\ref{fig:hierarchy}. The $i$-th element of the coordinate represents
the degree of generalization in the $i$-th dimension.  Then the level of a
node is the sum of all elements of the node's coordinate.  For example, in
Figure~\ref{fig:hierarchy}(a), the node with coordinate $(1)$ is at level~1;
in Figure~\ref{fig:hierarchy}(b), the node with coordinate $(4,2)$ is at
level~6.  In multi-dimensional space, multiple nodes can reside at the same
level (e.g., see the 2D case in Figure~\ref{fig:hierarchy}(b)).  We denote the
set of nodes at level~$i$ by $\mathcal{L}(i)$.  

We now formally define HHHs.  Let $S(x) = \sum_{f\preceq x}S(f)$ be the
{\em count} of a key $x$ (i.e., packet count or byte count), 
where $S(f)$ denotes the sum of all $v_f$'s for every flow $f$ under $x$; for
example, if $x$ refers to a subnet, $S(x)$ is the total count of all flows
under the subnet.  Intuitively, a key $x$ is an HHH if $S(x)$
exceeds some pre-defined threshold.  However, if the count of a key exceeds a
threshold, so do the counts of all its ancestors, which cover the key itself.
To concisely define HHHs, we focus on the {\em conditioned count} of a key
\cite{Mitzenmacher2012,BenBasat2017}, defined as the total count of all its
associated flows that do not belong to any HHH.  Specifically, for a key $x$
and a set of HHHs $\mathcal{H}$, the conditioned count of $x$ with respect to
$\mathcal{H}$ is $S_{\mathcal{H}}(x) = \sum S(f): (f\preceq x) \wedge (\nexists
y\in \mathcal{H},$ where $f\preceq y)$.  Thus, we can formally define an HHH
in an inductive fashion: 
\begin{definition} (Hierarchical heavy hitters (HHHs) \cite{Cormode2008}).
\label{def:hhh}
Let $\mathcal{S}$ be the total count of all flows and $\phi$ be a fractional
threshold (where $0 < \phi < 1$). We define $\mathcal{H}_i$ as the set of HHHs
at level~$i$ (where $0 \le i < d$), such that:
\begin{itemize}[leftmargin=*]
\item
$\mathcal{H}_0$ is a set of flows, in which each flow $f$ has count $S(f)
\ge \phi \mathcal{S}$ (i.e., $f$ is a heavy hitter); 
\item
$\mathcal{H}_i = \mathcal{H}_{i-1} \cup \{x: (x\in \mathcal{L}(i)) 
\wedge (S_{\mathcal{H}_{i-1}}(x) \ge \phi\mathcal{S})\}$; and
\item
$\mathcal{H}_{d-1}$ is the set of all HHHs.
\end{itemize}
\end{definition}

We perform HHH detection at fixed-time intervals called {\em epochs}.  Our
goal is to find: (i) the set of all HHHs (whose conditioned counts exceed
$\phi\mathcal{S}$) at the end of each epoch and (ii) the count $S(x)$ of each
key $x$ that is identified as an HHH. 

\section{MVPipe Design}
\label{sec:overview}

\sysname is a novel invertible sketch for HHH detection, with three major
design goals: lightweight updates, fast convergence, and resource efficiency. 

\sysname builds on the skewness of IP traffic to find HHHs.  Field studies
\cite{Fang1999,Zhang2002,Sarrar2012} show that IP traffic is highly skewed, in
which a small fraction of flows accounts for a majority of traffic. We
argue that the skewness property also holds across aggregation levels.  To
justify, we evaluate the real-world IP traffic traces from CAIDA \cite{caida}
(see \S\ref{subsec:methodology} for details) on the cumulative percentage of
packet counts versus the top percentage of keys at different aggregation levels.
Figure~\ref{fig:skew} plots the results for the 1D-byte and 2D-byte
hierarchies for some aggregation levels in IPv4 traffic.  We observe that the
top-10\% of keys at each level all account for more than 65\% of IP traffic at
that level.

\begin{figure}[!t]
\centering
\begin{tabular}{@{\ }c@{\ }c}
\includegraphics[width=1.65in]{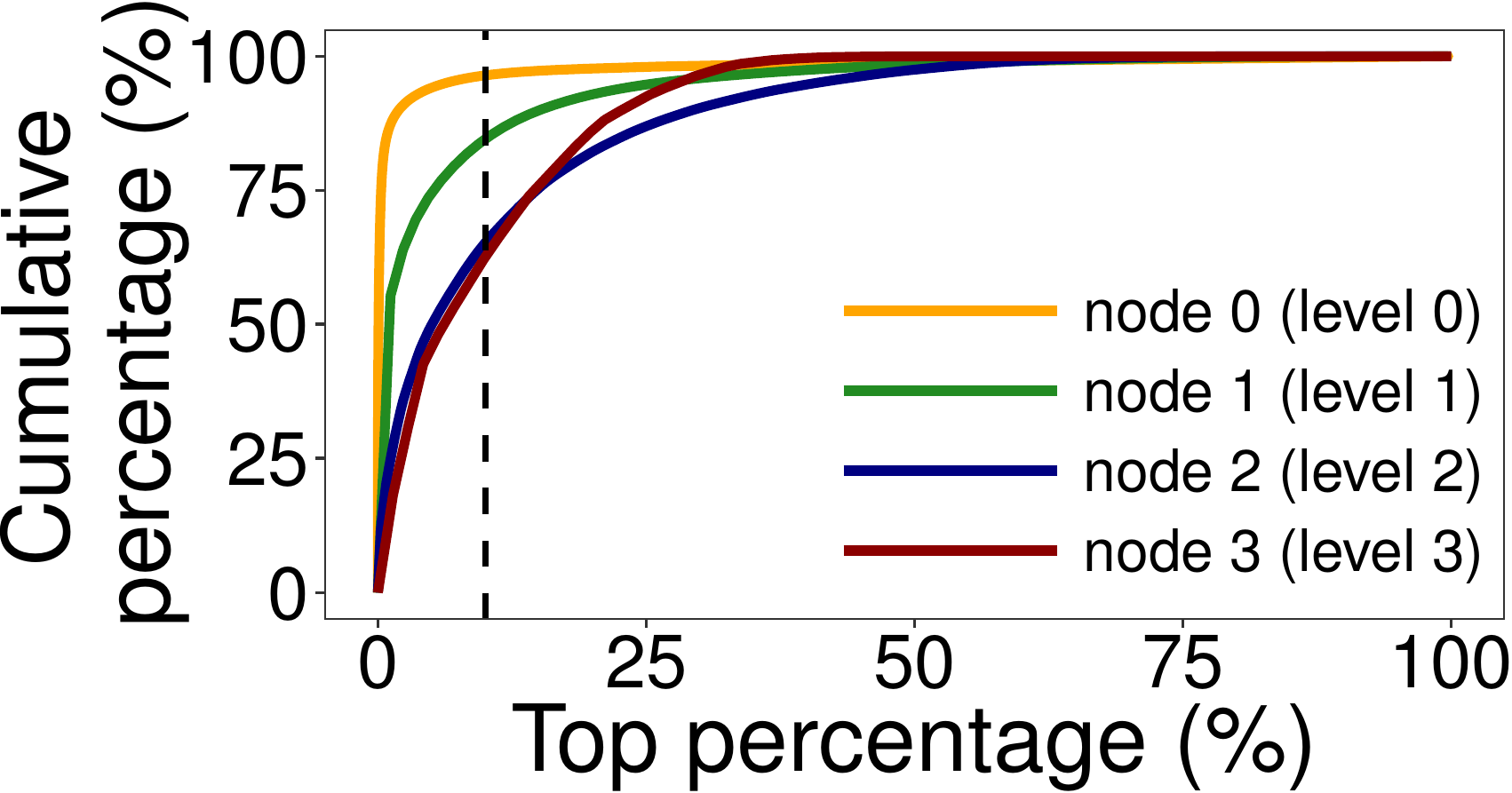} &
\includegraphics[width=1.65in]{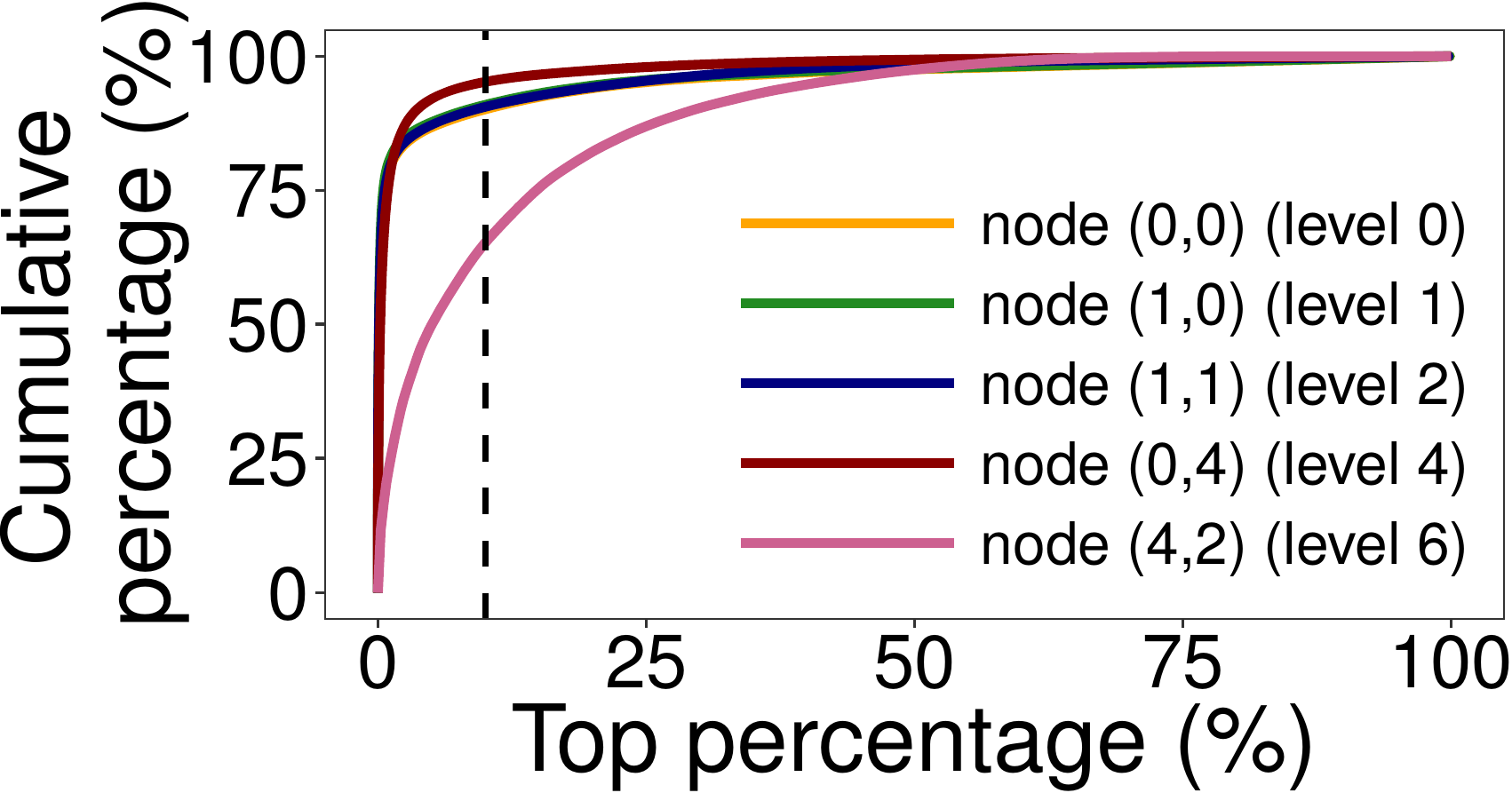} \\
{\small (a) 1D-byte hierarchy} & 
{\small (b) 2D-byte hierarchy}
\end{tabular}
\vspace{-3pt}
\caption{Cumulative percentage of packet counts versus the top-percentage of
keys at different aggregation levels. The dashed line denotes the top-10\%
mark. Here, we focus on IPv4 traffic.}
\label{fig:skew}
\vspace{-6pt}
\end{figure}

\sysname tracks the candidate HHHs that are likely the true HHHs via the
pipelined executions of the {\em majority vote algorithm (MJRTY)}
\cite{Boyer1991}. MJRTY is a one-pass, constant-memory algorithm that finds
the item that has more than half of the occurrences (i.e., the majority item)
in a data stream.  It is proven that if the majority item exists, MJRTY can
always find the majority item \cite{Boyer1991}.  

Based on MJRTY, \sysname maintains an array of buckets for each node in a
hierarchy.  Each bucket performs MJRTY to find the dominant key among all
packets that are hashed to the bucket itself (i.e., the majority item in
MJRTY) as the candidate HHH for the bucket.  Then \sysname processes each
packet $(f, v_f)$ starting from the lowest level~0 (i.e., node~(0) in the 1D
hierarchy or node~(0,0) in the 2D hierarchy) in the hierarchy.  If $f$
does not belong to any candidate HHH at a lower level, \sysname generalizes
$f$ to its ancestor at the next higher level and checks if the ancestor
is a candidate HHH at that level.  \sysname proceeds toward higher levels,
until $f$ is {\em admitted} by a candidate HHH (i.e., the value $v_f$ is
included in the count of the candidate HHH).  

We justify how \sysname achieves its design goals:
\begin{itemize}[leftmargin=*]
\item
{\em Lightweight updates:} By the skewness of IP traffic, the pipelined design
of \sysname stops processing most of the packets at lower-level arrays and
passes only a small fraction of packets to higher-level arrays.  Also, the
processing of each packet in each array of \sysname only contains one hash
computation and one memory access.  Thus, the amortized processing cost is
low. 
\item 
{\em Fast convergence:} \sysname processes every packet (without sampling) in
the same data structure and ensures that any HHH can be detected with high
accuracy at short time scales.  
\item
{\em Resource efficiency:} \sysname requires only primitive computations in
packet processing (e.g., hashing, addition, and subtraction).  Also, \sysname
supports static memory allocation (i.e., its memory space can be pre-allocated
in advance) and incurs limited memory usage.  Such features allow \sysname to
be readily implemented in both hardware and software (\S\ref{sec:imp}).  
\end{itemize}

\subsection{Data Structure}
\label{subsec:data}

Figure~\ref{fig:mvpipe} shows the data structure of \sysname. It comprises $H$
arrays, denoted by $A_0, A_1, \cdots, A_{H-1}$, where $H$ is the number of
nodes in the hierarchy.  Each array $A_i$ (where $0\le i < H$) contains $w_i$
buckets and corresponds to one node in the hierarchy.  Let $B(i,j)$ be the
$j$-th bucket in array $A_i$, where $0 \le j < w_i$. Each bucket $B(i,j)$
consists of four fields: (i) $K_{i,j}$, which stores the candidate HHH in the
bucket; (ii) $V_{i,j}$, which is the total count of all keys hashed to the
bucket;  (iii) $I_{i,j}$, which is the indicator counter that checks if the
current candidate HHH in $K_{i,j}$ should be kept or replaced by MJRTY; and
(iv) $C_{i,j}$, which is the cumulative count of the candidate HHH since it is
stored in $K_{i,j}$.  \sysname is associated with $H$ pairwise-independent
hash functions $h_0, h_1 \cdots h_{H-1}$, such that each $h_i$ (where $0\le i
< H$) hashes the generalization of the key of each incoming packet to one of
the $w_i$ buckets in $A_i$.

\begin{figure}[!t]
\centering
\includegraphics[width=3.2in]{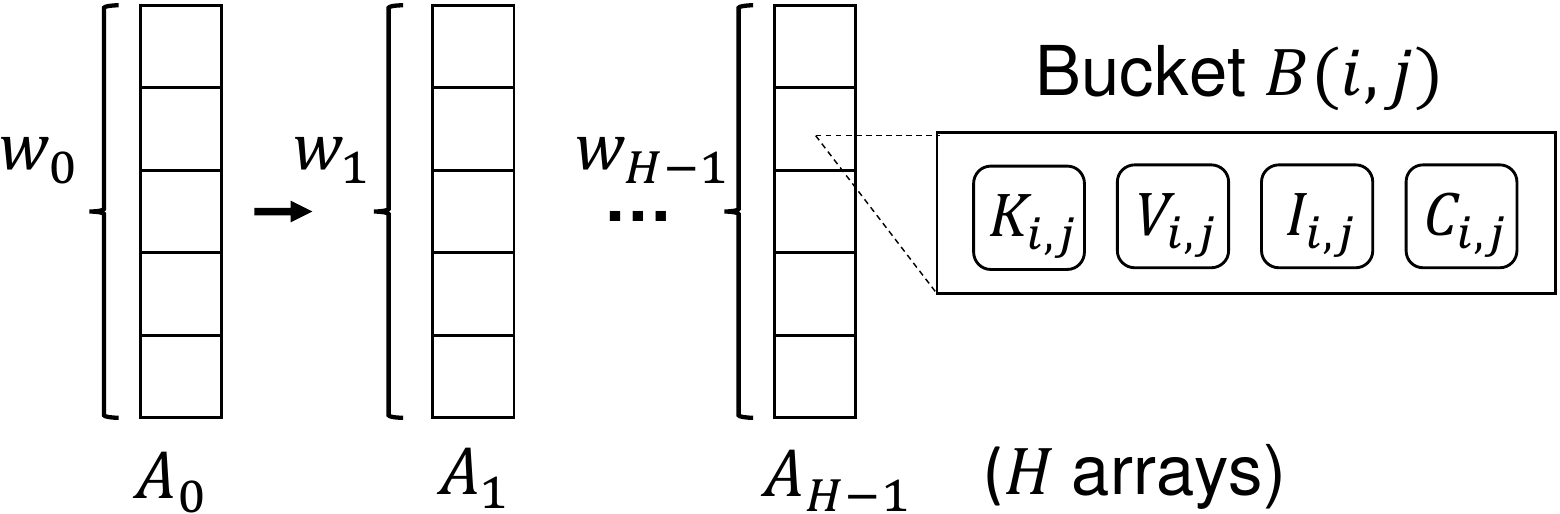}
\caption{Data structure of \sysname.  Each bucket $B(i,j)$ (where $0\le i < H$
and $0\le j < w_i$) has four fields: $K_{i,j}$, $V_{i,j}$, $I_{i,j}$,
and $C_{i,j}$.}
\label{fig:mvpipe}
\vspace{-6pt}
\end{figure}

\sysname currently associates a single array with each node in the hierarchy.
It can improve the HHH detection accuracy by associating multiple arrays with
each node in the hierarchy, at the expense of degraded update performance.  We
discuss the trade-off in \S\ref{subsec:comparisons}. 

\subsection{1D HHH Detection}
\label{subsec:1d}

We first consider the operations of \sysname in 1D HHH detection, whose
pseudo-code is shown in Figure~\ref{fig:ops1d}. \sysname
supports two major operations: (i) {\em Update}, which updates each packet
$(f, v_f)$ into the data structure; and (ii) {\em Detect}, which returns the
set of all HHHs and their respective estimated counts from the data structure. 
It also builds on two functions: (i) {\em Push},
which pushes the updates of a key and its corresponding count along the
arrays; and (ii) {\em Estimate}, which returns an estimated count of a key 
in its hashed bucket.  Note that in 1D HHH detection, each node in a hierarchy
corresponds to a distinct level.  Thus, each array $A_i$ corresponds to
level~$i$, where $0\le i < H$.

\begin{figure}[t]
\begin{algorithmic}[1]
\small
\Function{Push}{key, value, $l$} 
\State $(x, v_x) \leftarrow (\textrm{key}, \textrm{value})$
\For {$i = l$ to $H-1$}
  \State $x \leftarrow $ generalize $x$ at level $i$
  \State $V_{i,h_i(x)} \leftarrow V_{i,h_i(x)} + v_{x}$
  \If {$K_{i,h_i(x)} = x$} 
    \State $I_{i,h_i(x)} \leftarrow I_{i,h_i(x)} + v_{x}$
    \State $C_{i,h_i(x)} \leftarrow C_{i,h_i(x)} + v_{x}$
    \State \Return 
    \ElsIf {$I_{i, h_i(x)} \ge v_{x}$}
    \State $I_{i,h_i(x)} \leftarrow I_{i,h_i(x)} - v_{x}$
    \Else  \Comment{$K_{i,h_i(x)} \ne x$ and $I_{i,h_i(x)} < v_x$}
      \State $I_{i,h_i(x)} \leftarrow v_{x} - I_{i,h_i(x)}$
      \State swap $(K_{i,h_i(x)}, C_{i,h_i(x)})$ and $(x, v_x)$
  \EndIf
\EndFor
\EndFunction
\smallskip
\Function{Estimate}{$x$, $l$, $t$}
  \State $U_l\leftarrow (V_{l,h_l(x)} + I_{l,h_l(x)})/2$  
  \State $v\leftarrow C_{l,h_l(x)}$  
  \For {$i = l+1$ to $\min\{l+t, H-1\}$}
    \State $y \leftarrow $ generalize $x$ at level~$i$
    \If {$K_{i,h_i(y)}$ = $y$}
      \State $U_i = (V_{i,h_i(y)} + I_{i,h_i(y)})/2 + v$
      \State $v \leftarrow v + C_{i,h_i(y)}$
    \Else
      \State $U_i = (V_{i,h_i(y)} - I_{i,h_i(y)})/2 + v$
    \EndIf
    \EndFor
    \State \Return $\min_{l\le i\le \min\{l+t, H-1\}}\{U_i\}$
\EndFunction
\smallskip
\Procedure{Update}{$f,v_{f}$}
\State\textsc{Push}($f, v_{f}, 0$)
\EndProcedure
\smallskip
\Procedure{Detect}{}
    \State $\mathcal{A}, \mathcal{H}_o, \mathcal{H}_1, \dots , \mathcal{H}_{H-1} \leftarrow \emptyset$
\For {$i = 0$ to $H-1$}
    \For {$j = 0$ to $w_i-1$}
      \State $x \leftarrow K_{i,j}$
      \State $\hat{S}_{\mathcal{H}_{i-1}}(x) \leftarrow $ \textsc{Estimate}($x$, $i$, $t$)
      \Comment $\mathcal{H}_{i-1}=\emptyset$ for $i=0$
      \If {$\hat{S}_{\mathcal{H}_{i-1}}(x) > \phi\mathcal{S}$}
        \State $\hat{S}(x) \leftarrow \hat{S}_{\mathcal{H}_{i-1}}(x) +
        \sum_{x'\in \mathcal{H}_{i-1} \wedge x'\prec x}
        C_{i', h_{i'}(x')}$ 
		\Comment $i'$ is the level of $x'$
        \State $\mathcal{A} \leftarrow \mathcal{A} \cup {(x, \hat{S}(x))}$
      \Else
        \State\textsc{Push}($x, V_{i,j}, i+1$)
      \EndIf
    \EndFor
    \State $\mathcal{H}_{i}  \leftarrow \mathcal{H}_{i-1} \cup \mathcal{A}$
    \State $\mathcal{A}  \leftarrow \emptyset$
\EndFor
    \State \Return $\mathcal{H}_{H-1}$
\EndProcedure
\end{algorithmic}
\vspace{-3pt}
\caption{Major operations for 1D HHH detection.}
\label{fig:ops1d}
\vspace{-6pt}
\end{figure}

\paragraph{Update operation.}  We apply the Update operation to insert each
incoming packet to \sysname, starting from $A_0$.  At a high level, we hash
the flow key of each packet to one of the buckets in $A_0$ and check if the
flow key is a candidate HHH in the bucket via MJRTY. If so, we end the update; 
otherwise, we generalize the flow key to its ancestor at the next higher
level~1 and continue to insert the ancestor to $A_1$.  We update $A_1$ and the
remaining arrays in a similar way until the flow key is admitted by a
candidate HHH.  During the process, if the original candidate HHH stored in
\sysname is replaced by the current generalized flow key due to MJRTY, we
generalize the original candidate HHH and insert it into higher-level arrays. 

We elaborate on the Update operation (Lines~26-27 of Figure~\ref{fig:ops1d}) as
follows.  At the beginning of each epoch, we initialize the counters of all
buckets of \sysname to zeros.  We update each incoming packet $(f, v_f)$
starting from array $A_0$ of \sysname by calling the {\sf Push}($f, v_f, 0$)
function, which processes $f$ starting from level~0 until $f$ is admitted by a
candidate HHH in one of the levels. 

The Push function (Lines~1-14 of Figure~\ref{fig:ops1d}) takes a key, its
associated value, and the array index $l$ (where $0\le l< H$) as input.
We first initialize $(x, v_x)$ from the input, where $(x, v_x)$ is passed
along the arrays at higher levels (Line~2).  For each array $A_i$ (where $l
\le i < H$), we generalize $x$ at level~$i$ (Line~4) and hash the new $x$ into
the bucket $B(i, h_i(x))$.  We increment $V_{i,h_i(x)}$ by $v_x$ and check if
$x$ should become the candidate HHH of the bucket based on MJRTY.
Specifically, if $K_{i,h_i(x)}$ equals $x$ (i.e., $x$ is already a candidate
HHH), we increment both $I_{i,h_i(x)}$ and $C_{i,h_i(x)}$ by $v_x$ and return
(Lines~6-9); else if $x$ is not the candidate HHH and $I_{i,h_i(x)}$ is at
least $v_x$, we decrement $I_{i,h_i(x)}$ by $v_x$ (Lines~10-11); otherwise, if
$x$ is not the candidate HHH and $I_{i,h_i(x)}$ is below $v_x$, it means that
$x$ should now become the new candidate HHH in $B(i,j)$.  Then we should
update $K_{i,h_i(x)}$ and $C_{i,h_i(x)}$ with $x$ and $v_x$, respectively, and
aggregate the count of the original key in $K_{i,h_i(x)}$ to the next higher
level.  More precisely, we set $I_{i,h_i(x)}$ to $v_x - I_{i,h_i(x)}$ and swap
$(K_{i,h_i(x)}, C_{i,h_i(x)})$ and $(x, v_x)$ (Lines~12-14).  

From the Update operation, it is clear that once the candidate HHH is stored
in $K_{i,j}$, its subsequent values received at level $i$ (i.e. $C_{i,j}$) are
not pushed to higher levels. In other words, the cumulative count of a
candidate HHH is not aggregated to any of its ancestors at higher levels. 

\begin{figure}[!t]
\centering
\includegraphics[width=3.3in]{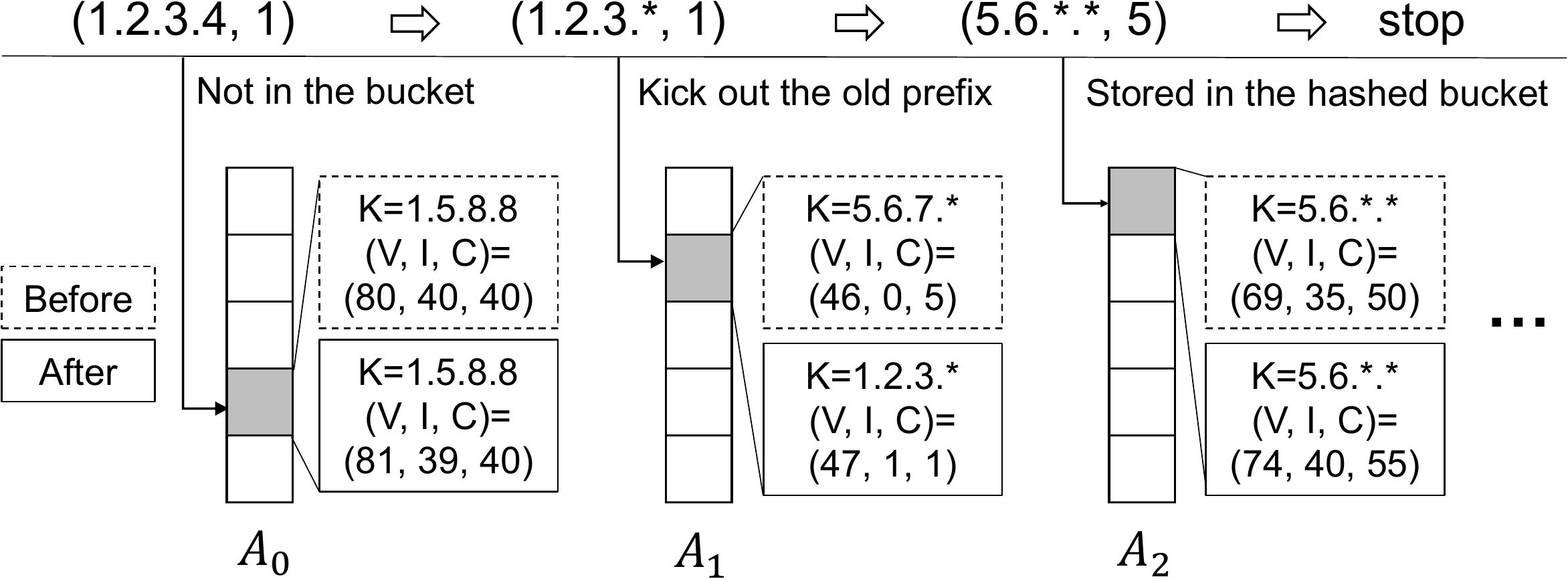} 
\vspace{-3pt}
\caption{Example of the Update operation in \sysname.}
\label{fig:update}
\vspace{-6pt}
\end{figure}

{\em Example.} Figure~\ref{fig:update} depicts the Update operation.  Suppose
that a packet $(f, v_f) = (1.2.3.4, 1)$ arrives.  If, say, 1.2.3.4 is not the
candidate HHH in the hashed bucket in $A_0$ and the branch in Lines~10-11
holds, we generalize 1.2.3.4 into 1.2.3.* and proceed to the next level.  If,
say, 1.2.3.* is not the candidate HHH in the hashed bucket in $A_1$ but the
branch in Line~12-14 holds (i.e., it now becomes the candidate HHH), we
substitute the original candidate HHH 5.6.7.* by 1.2.3.*.  We now generalize
5.6.7.* into 5.6.*.* and push (5.6.*.*, 5) to the next level.  If,
say, 5.6.*.* is the candidate HHH in $A_2$ (i.e., the branch in Lines~6-9
holds), the Update operation updates the counters and terminates (i.e., $f$ is
now admitted by 5.6.*.*). 

\paragraph{Detect operation.} We apply the Detect operation at the end of each
epoch to find all HHHs and their estimated counts.  At a high level, we
traverse all arrays of \sysname and check if the candidate HHH in each bucket
should be reported as an HHH.  We start from the array $A_0$ and check each
candidate HHH: if the estimated conditioned count of the candidate HHH exceeds
the threshold, we treat the candidate HHH as an HHH and include both the
candidate HHH and its estimated count
into the output set $\mathcal{H}_{H-1}$; otherwise, the candidate HHH should
not be reported as an HHH and its key and cumulative count should be further
pushed to higher-level arrays (as one of its ancestors may be reported as an
HHH).  

We elaborate on the Detect operation (Lines~28-41 of Figure~\ref{fig:ops1d}) as
follows.  Let $\hat{S}_{\mathcal{H}_{i-1}}(x)$ and $ \hat{S}(x)$ be the
estimated conditioned count at level~$i$ and the estimated count of key $x$,
respectively. For each bucket $B(i,j)$ (where $0 \le i < H$ and $0 \le j <
w_i$), we call the Estimate function to return the estimated conditioned count
$\hat{S}_{\mathcal{H}_{i-1}}(x)$ of key $x$ stored in $K_{i,j}$
(Lines~32-33). If $\hat{S}_{\mathcal{H}_{i-1}}(x)$ exceeds the threshold
$\phi\mathcal{S}$, we further calculate the estimated count $\hat{S}(x)$
and add $x$ to $\mathcal{A}$, where $\mathcal{A}$ is the set of detected HHHs
at level~$i$ (Lines~34-36).  Otherwise, we call the Push function to push ($x,
C_{i,j}$) to the next higher level $i+1$ (Lines~37-38). After processing array
$A_i$, we update $\mathcal{H}_i$ as the union of $\mathcal{H}_{i-1}$ and
$\mathcal{A}$, and reset $\mathcal{A}$ to empty (Line~39-40).

For each reported HHH $x$ at level~$i$, we need to calculate its
estimated count $\hat{S}(x)$.  From the Update operation, we note that the
cumulative count of each of $x$'s descendants that are reported as HHHs in
$\mathcal{H}_{i-1}$ is not pushed to the hashed bucket of $x$.  
Thus, we need to include such ``missing'' counts in the calculation of
$\hat{S}(x)$.  We calculate $\hat{S}(x)$ as the sum of: (i) the estimated
conditioned count $\hat{S}_{\mathcal{H}_{i-1}}(x)$ is returned by the Estimate
function (see details below) and (ii) the cumulative counts of $x$'s
descendants that are reported as HHHs in lower levels (Line~35).  We analyze
the error bound of $\hat{S}(x)$ in \S\ref{sec:theory}. 

The Estimate function (Lines~15-25 of Figure~\ref{fig:ops1d}) returns the
estimated conditioned count of a given key $x$ in its hashed bucket.  The
function takes a key $x$, the level $l$, and the configurable number $t$ of
the ancestors that are checked in estimation (see details below).  We start
from array $A_l$ and obtain the upper bound $U_l$ of $x$ in $B_{l,h_l(x)}$ as
$(V_{l,h_l(x)}+I_{l,h_l(x)})/2$ (Line~16); in MJRTY, the count of $x$ in
$I_{l,h_l(x)}$ is decremented by other keys in the same bucket by at most
$(V_{l,h_l(y)}-I_{l,h_l(y)})/2$.  We show that $U_l$ is the upper bound of the
true count of $x$ tracked in its hashed bucket (Lemma~\ref{lem:bound} of
\S\ref{sec:theory}).

Note that $x$ may have hash collisions with some large keys in $A_l$, and
$U_l$ severely overestimates the true count of $x$.  To reduce the collision
error, we introduce the configurable parameter $t$, through which we access
$t$ additional arrays and further check the estimated counts of the $t$
closest ancestors of $x$ from $A_{l+1}$ to $A_{l+t}$ (if $l+t$ is beyond the
maximum number of arrays $H-1$, we stop in $A_{H-1}$).  The idea here is that
when we include the cumulative count of $x$ in the estimated count of each
ancestor, if the estimated count of the ancestor is smaller than $U_l$, it
implies that $x$ collides with some large keys in $A_l$ and hence $U_l$ is
severely overestimated.  Thus, we use the minimum value of $U_l$ (the
estimated count of $x$ in $A_l$) and $U_i$'s (the estimated counts of $t$
ancestors of $x$ from $A_{l+1}$ to $A_{l+t}$) as the final estimated count of
$x$ (Line~25).  The parameter $t$ determines the performance-accuracy
trade-off in HHH detection: a large $t$ means fewer false positives, but
incurs more time to find all HHHs. 

Given $t$, the Estimate function proceeds as follows. For each array $A_i$,
where $l+1 \le i \le \min\{l+t, H-1\}$, we set $y$ as the generalization of
$x$ at level~$i$ (Line~19). We calculate the estimate of $y$:  if
$K_{i,h_i(y)}$ equals $y$, $U_i$ is $(V_{i,h_i(y)}+I_{i,h_i(y)})/2+v$
(Line~21); otherwise, $U_i$ is $(V_{i,h_i(y)}-I_{i,h_i(y)})/2+v$ (Line~24).
The term $v$, which we initialize as the cumulative counter value
$C_{l,h_l(x)}$ (Line~17), refers to the cumulative count of $x$ that should be
included in $y$ when $U_i$ is calculated.  Finally, we return the minimum
value among $U_i$'s, where $l \le i \le \min\{l+t, H-1\}$ (Line~25). 

{\em Example.} Figure~\ref{fig:detect} depicts the Detect operation.  We set
the threshold as 30 and $t=2$. At the end of an epoch, we start from $A_0$ to
check every bucket.  Suppose that we check bucket $B(0,3)$.  We first
calculate the estimated conditioned count of the candidate HHH 1.5.8.8 in
$B(0,3)$ via the Estimate function. Suppose that we have $U_0=32$ and $v=8$.
We generalize 1.5.8.8 to 1.5.8.* and find that it is the candidate HHH in the
hashed bucket in $A_1$ (i.e., the branch in Lines~20-22 holds).  We obtain
$U_1=23$ and add the cumulative count $C_{1,3}=14$ to $v$, so we now have
$v=22$.  We continue to generalize 1.5.8.* to 1.5.*.* and find that the
generalized key is not the candidate HHH (which is now 1.6.*.*) in $A_2$
(i.e., the branch in Lines~23-24 holds).  We obtain $U_2=26$.  Thus, the
returned estimated conditioned count of 1.5.8.8 is $\min\{U_0, U_1, U_2\}=23$,
which is smaller than the threshold 30 (i.e., the branch Lines~37-38 holds).
We then push 1.5.8.8 to higher levels: we generalize it to 1.5.8.* and update
the hashed bucket in $A_1$.  If, say, 1.5.8.* is already stored in the bucket,
we increment each of the three counters of the bucket by 8 and finish the
checking of $B(0,3)$. 

\begin{figure}[!t]
\centering
\includegraphics[width=3.3in]{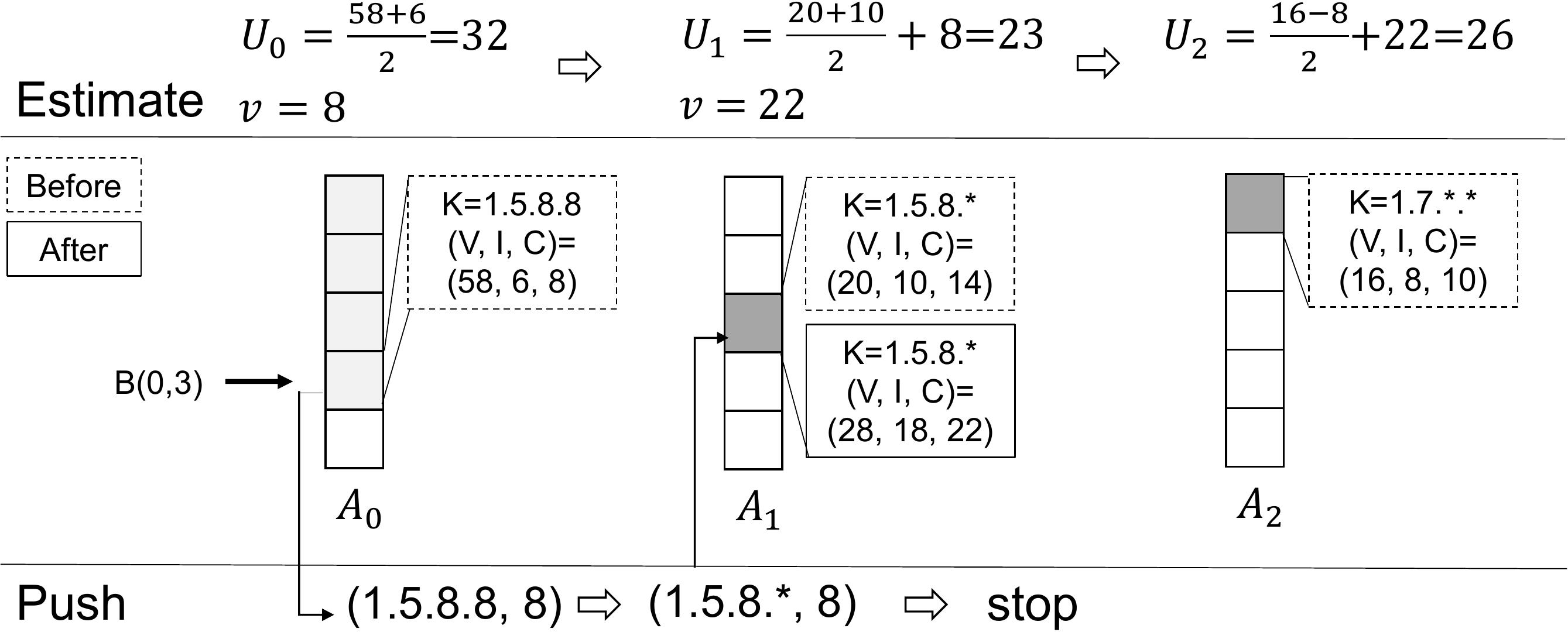} 
\vspace{-3pt}
\caption{Example of the Detect operation in \sysname.}
\label{fig:detect}
\vspace{-6pt}
\end{figure}

\subsection{2D HHH Detection}
\label{subsec:2d}

We extend \sysname to 2D HHH detection, in which the generalization relation 
now forms a lattice structure (Figure~\ref{fig:hierarchy}(b)).  Similar to
1D HHH detection, we maintain an array of buckets for each node of the lattice
to track the candidate HHHs.  We briefly describe the Update and Detect
operations in 2D HHH detection; their pseudo-code is in the supplementary
file.

\paragraph{Update operation.}  In 2D HHH detection, we need to address the
generalization order and the stop condition in the Update operation.  Unlike
1D HHH detection, which only has one generalization direction, a key in 2D HHH
detection has two
generalization directions: the source direction (i.e., from left to right) and
the destination direction (from bottom to up).  For example, the address pair
(1.2.3.4, 5.6.7.8) can be generalized to either (1.2.3.*, 5.6.7.8) or
(1.2.3.4, 5.6.7.*) by a single byte. To represent the lattice structure in
\sysname, we index the arrays of \sysname first along the destination
direction (from bottom to up), followed by along the source direction (from
left to right).  For example, array $A_0$ corresponds to node (0,0), while 
array $A_{22}$ corresponds to node (4,2) in Figure~\ref{fig:hierarchy}(b)
(recall that the number of arrays is the number of nodes in the lattice). 

We enforce that the generalization of a key along the source direction only
applies to the {\em bottom nodes} in the lattice (i.e., nodes (0,0), (0,1),
\dots, (0,4) in Figure~\ref{fig:hierarchy}(b)).  Specifically, we update each
incoming packet in \sysname starting from level~0 (i.e., node (0,0)).  If a
key is not admitted by a candidate HHH, we push the key in two generalization
directions: (i) along the destination direction from the bottom node and (ii)
the next bottom node.  We stop the generalization until a key is admitted by
the candidate HHH in a bottom node. 
  
We use Figure~\ref{fig:example} to show the idea of the Update
operation.  For the key (1.2.3.4, 5.6.7.8), we first insert it to array $A_0$
in node (0,0). If it is not a candidate HHH, we push the key along the
destination direction until it is admitted by some ancestor (e.g., node~$o$).
We also push the key to the next bottom node (1,0). If, say, the generalized
key (1.2.3.*, 5.6.7.8) is still not a candidate HHH, we push it along the
destination direction until it is admitted by some ancestor (e.g., node~$x$).
Similar operations apply to node (2,0) and its destination direction.  We
terminate until the key is admitted by a bottom node (e.g., node~$p$).   

\paragraph{Detect operation.} Since we now push the count of a key along both
the source direction along the bottom nodes and the destination direction from
the bottom nodes, the count of a key can contribute to multiple ancestors,
leading to the {\em double counting} problem
\cite{Cormode2008,Mitzenmacher2012}.  For example, the key (1.2.3.4, 5.6.7.8) is
pushed along the arrows in Figure~\ref{fig:example} according to the Update
operation. Suppose that its count is counted by both of its ancestors $x$ and
$y$ that are both reported to be HHHs.  Now, we calculate the conditioned count
of $z$, which is the common ancestor of both $x$ and $y$.  By the definition
of the conditioned count, we need to subtract the counts of both $x$ and $y$
from $z$, but doing so will deduct the count of (1.2.3.4, 5.6.7.8) twice from
$z$.

The Detect operation addresses the double counting problem based on the
inclusion-exclusion principle \cite{Cormode2008, Mitzenmacher2012}, whose idea
is that after subtracting all descendants that are HHHs, we add back the
count that is discounted twice.  At the end of an epoch, the Detect operation 
checks the candidate HHH in each bucket.  If the estimated conditioned count
of a candidate HHH exceeds the threshold, we add it to the output set;
otherwise, we push the candidate HHH to higher-level nodes, either in the
source direction along the bottom nodes or in the destination direction from
the bottom nodes. 

\begin{figure}[!t]
\centering
\includegraphics[width=1.5in]{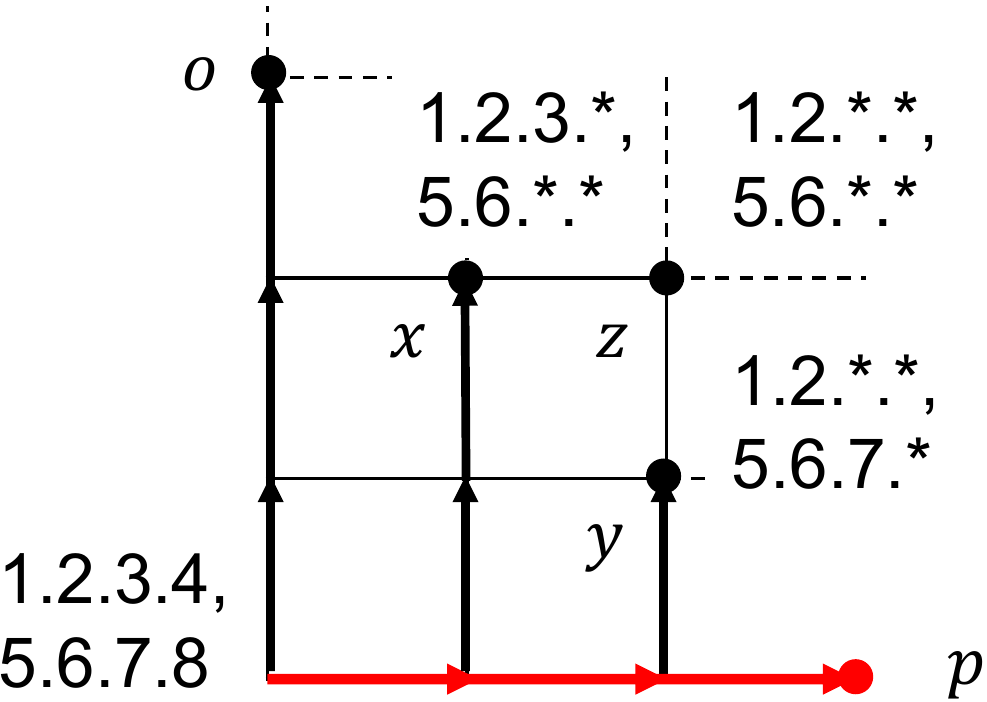} 
\vspace{-3pt} 
\caption{Illustration for the major operations in 2D HHH
detection. The generalization is applied either in the source direction
along the bottom nodes or in the destination direction from the bottom nodes.}
\label{fig:example}
\vspace{-6pt}
\end{figure}

\subsection{Discussion}

MJRTY has also been adopted in HH detection (i.e., no hierarchy
awareness) \cite{Kallitsis2016,Tang2019}.  In
particular, MV-Sketch \cite{Tang2019} maintains multiple rows of buckets, in
which each bucket tracks a candidate HH using MJRTY.  It hashes each packet
into a bucket in each of the rows, and updates the counters in each bucket.
It uses multiple rows of buckets to resolve the hash collisions of HHs into
the same bucket.  In contrast to \sysname, MV-Sketch implements a single stage
of MJRTY and is not hierarchy-aware, while \sysname forms multiple pipelined
stages of MJRTY for HHH detection. 

\sysname is much beyond a simple extension of MV-Sketch in HHH detection. A
na\"ive way to extend MV-Sketch for HHH detection is to maintain an
instance of MV-Sketch for every node of the IP address hierarchy.  For each
packet, we compute all its generalizations and insert each of them
independently into the corresponding instances, so that the HHHs can later be
recovered as the HHs of each aggregation level.  However, such a
na\"ive approach incurs substantial update overhead, as it updates all
MV-Sketch instances for each incoming packet.   In contrast, \sysname forms
multiple pipelined stages of MJRTY that correspond to different nodes in the
IP address hierarchy, and maintains one array for each node.  With the
skewness of IP traffic, \sysname only needs to update a small number of arrays
per packet (and one array for most of the time) and hence achieves lightweight
updates.  Also, to resolve hash collisions, \sysname checks
additional arrays in higher levels when estimating the value of a candidate
HHH (see the Estimate function in Figure~\ref{fig:ops1d}) and hence maintains
high accuracy.

Currently, we focus on 1D and 2D HHH detection. \sysname can be
extended for higher dimensions, yet both its space and time complexities
become the multiplication of the depths of all dimensions.  Nevertheless,
practical applications do not need to consider general HHH detection in all
dimensions \cite{Zhang2004}.

In addition, the examples presented in this paper are based on
IPv4 traffic, yet \sysname can also be directly applied to IPv6 traffic
without modification.  We present our evaluation results for IPv6 traffic in
the supplementary file.

\section{Implementation}
\label{sec:imp} 

We have implemented \sysname in both software and hardware.  In particular,
our hardware implementation of \sysname is written in P4 \cite{p4}, and
addresses the limitations of the existing hardware-based HHH detection schemes
(\S\ref{sec:introduction}) in the following ways: no reliance on a controller
for counter updates, using SRAM rather than TCAM (which is more expensive and
scarce) for HHH counting, and avoids sampling of packets or aggregation levels
for fast convergence. 

\subsection{Software Implementation}

We have built a software version of \sysname in C++ with around 800~LoC
(including both 1D and 2D HHH detection).  \sysname uses MurmurHash
\cite{murmurhash} as the hash function.  It can be integrated into the data
plane of a software switch \cite{Pfaff2015} for real-time HHH detection.
Specifically, the software switch inserts the IP packet header of each
incoming packet to an in-memory buffer, from which \sysname fetches and
processes the packet headers.  We now implement \sysname in a single thread
running on a single CPU core. 

\subsection{P4 Implementation}

We implement \sysname in P4 \cite{p4} with around 900~LoC and compile it into
a Tofino switch \cite{tofino}.  Our P4 implementation is based on 
the Protocol Independent Switch Architecture (PISA)
\cite{Bosshart2013,Bosshart2014}.  A PISA switch first extracts the header
fields of each incoming packet via a programmable parser. It then passes the
packet to an ingress pipeline of stages, each of which contains a series of
match-action tables.   Each stage matches the extracted header fields of the
packet with the entries in the match-action tables, and applies the matched
actions to modify the extracted header fields and/or update the persistent
state at the stage.  Afterwards, the switch passes the packet to a similar
egress pipeline and emits the packet.  

Programmable switches offer rich computational capability in addition to
packet forwarding, yet they pose stringent hardware resource constraints 
\cite{Basat2018}. For example, a programmable switch typically has limited
SRAM (e.g., few megabytes) and a small number of stateful arithmetic-logic
units (ALUs) per stage.  Such
constraints pose two challenges to our \sysname implementation.  First, each
stage in the ingress/egress pipeline can only access a memory block {\em once}
per packet, with only a single read-modify-write operation.  Second, each stage
only supports an if-else chain with at most two branches. 

In the following, we describe how we address the above constraints and adapt
\sysname into a programmable switch.  Here, we only focus on the 1D-byte HHH
detection due to the limited number of stages available in a switch, while the
detection for other granularities is posed as future work.  For each level
of the 1D-byte hierarchy, we create four register arrays, denoted by $K$, $V$,
$I$, and $C$, which correspond to the four bucket fields in \sysname
(Figure~\ref{fig:mvpipe}). 

\paragraph{Using pairs atoms to update dependent fields.}  One implementation
challenge is that the Push function (Figure~\ref{fig:ops1d}) makes
inter-dependent accesses to $K$ and $I$:  the write to $I$ depends on the
value of $K$ (Lines~6-7), yet the write to $K$ is conditioned on the value of
$I$ (Lines~12-14).  It is infeasible to update both $K$ and $I$ with a single
read-modify-write. 

We resolve the inter-dependency of updating both $K$ and $I$ using the {\em
pairs atoms} \cite{Sivaraman2016}, which are natively supported atomic
operations in PISA.  A pairs atom reads two 32-bit elements from a register
array,
performs conditional branching and primitive arithmetic on both elements, and
writes back the results. At the same time, it can output either the original
value of an element or the computation results to a specified metadata field.
In our case, we pack $K$ and $I$ to the upper 32 bits and lower 32 bits in a
64-bit register array, respectively.  We then update them using a pairs atom. 

\paragraph{Reducing the updates to the indicator counter.}  Referring to the
Push function in Figure~\ref{fig:ops1d}, we split the updates to the bucket
fields into three branches: Lines~6-9, Lines~10-11, and Lines~12-14.  Only the
indicator counter (i.e., the register array $I$) will be updated in each of
the three branches. This needs a three-branch if-else chain to update $I$,
which is infeasible as only a two-branch if-else chain is supported in each
stage. 

To fit the update of $I$ into a two-branch if-else chain, we discard the
update of $I$ in the third branch (i.e., Line~13 in Figure~\ref{fig:ops1d});
in other words, the indicator counter will not be updated if a key is not a
candidate HHH in $K$ but has a count larger than the current indicator counter
$I$.  The rationale here is that the condition rarely happens in skewed
workloads, in which an HHH is quickly tracked in $K$ and is unlikely (albeit
possible) substituted by a different key.  Thus, we (slightly) sacrifice
the accuracy to reduce the update of $I$ to only two (instead of three)
branches.  Now all bucket fields can be updated with a two-branch if-else
chain in a single stage. 

\begin{figure}[t]
\small
\begin{algorithmic}[1]

\Function{Push}{$l$}
\State $(x, v_x) \leftarrow $ (Meta.key$_l$, Meta.val$_l$)
\State // Stage 1: update $V_{l,h(x)}$ and $(K_{l,h(x)}, I_{l,h(x)})$ 
\State $V_{l,h(x)} \leftarrow V_{l, h(x)} + v_{x}$    
\If {$K_{l, h(x)} = x$} \Comment Case 1
    \State $I_{l, h(x)} \leftarrow I_{l, h(x)} + v_x$
\Else                                          \Comment Case 2
    \State $I_{l, h(x)} \leftarrow I_{l, h(x)} - v_x$
\EndIf
\If {$K_{l, h(x)} \neq x$ and $I_{l, h(x)} < v_x$} \Comment Case 3
    \State $K_{l, h(x)} \leftarrow x$
\EndIf
\If {$K_{l, h(x)} = x$ or $I_{l, h(x)} < v_x$} \Comment Case 1 or 3
    \State Meta.key$_{(l+1)}$ $\leftarrow K_{l, h(x)}$
\EndIf
\State // Stage 2: set metadata fields based on the last Stage  
\If {Meta.key$_{(l+1)} = 0$} \Comment Case 2 
    \State Meta.key$_{(l+1)}$ $\leftarrow $ generalize $x$ at level $l+1$
    \State Meta.val$_{(l+1)}$ $\leftarrow v_x$

\Else
    \State Meta.flag$_l \leftarrow $ Meta.key$_{(l+1)}$ xor $x$
\EndIf

\State // Stage 3: update $C_{l, h(x)}$ based on metadata set in Stage 2 
\If {Meta.flag$_l = 0$} \Comment Case 1
\State $C_{l,h(x)} \leftarrow C_{l, h(x)} + v_x$
\Else       \Comment Case 3 
    \State Meta.val$_{(l+1)} \leftarrow C_{l,h(x)}$
    \State $C_{l,h(x)} \leftarrow v_x$
    \State Meta.key$_{(l+1)} \leftarrow $ generalize Meta.key$_{(l+1)}$ at level $l+1$
\EndIf
\EndFunction
\Procedure{Update}{}
\State\textsc{Push}($0$)
\If {Meta.val$_{1} \neq 0$} \State \textsc{Push}($1$)
\EndIf
\If {Meta.val$_{2} \neq 0$} \State \textsc{Push}($2$)
\EndIf
\If {Meta.val$_{3} \neq 0$} \State \textsc{Push}($3$)
\EndIf 
\If {Meta.val$_{4} \neq 0$}
\State $V_{4} \leftarrow V_{4} +$ Meta.val$_{4}$
\EndIf
\EndProcedure
\end{algorithmic}
\vspace{-3pt}
\caption{Pseudo-code of the update operation for 1D-byte HHH detection in P4.}
\label{fig:tofino}
\vspace{-6pt}
\end{figure}

\paragraph{Putting it all together.} Figure~\ref{fig:tofino} shows the
pseudo-code of our P4 implementation.  For each level
$l$, where $0\le l \le 4$, we denote the corresponding register arrays by
$K_l$, $V_l$, $I_l$, and $C_l$, where each element of $K_l$ is initialized
as $-1$ and each element of other arrays is initialized as zero. 
We define three metadata fields for level~$l$, namely Meta.key$_l$,
Meta.value$_l$, and Meta.flag$_l$.  The metadata fields Meta.key$_l$ and
Meta.value$_l$ store the key and its value, respectively, that are pushed to
level~$l$, while Meta.flag$_l$ stores the intermediate result at level~$l$.
All metadata fields are initialized as zeros prior to the processing of each
packet. 

The {\em Push} function pushes the key Meta.key$_l$ and the 
value Meta.val$_l$ to level~$l$ in three stages:
\begin{itemize}[leftmargin=*]
\item \textbf{Stage~1:} update $(K_l, I_l)$ with a pairs atom; 
\item \textbf{Stage~2:} set the metadata value using the results in
Stage~1; and
\item \textbf{Stage~3:} update $C_l$ and prepare the key and value that should 
be pushed to the next level based on the metadata value from Stage~2.
\end{itemize}
We denote the three branches in the Push function (i.e., Lines~6-9,
Lines~10-11, and Lines~12-14 in Figure~\ref{fig:ops1d}) by Case~1, Case~2 and
Case~3, respectively.  In Stage~1, we issue a pairs atom to perform
conditional branching on $K_{l, h_l(x)}$ and $I_{l, h_l(x)}$ and update their
values accordingly. Each time when Case~1 or Case~3 happens, we output the
original value of $K_{l, h_l(x)}$ to Meta.key$_{l+1}$ (Lines~11-12).  Note
that all conditional branches in a pairs atom are executed simultaneously.  In
Stage~2, if Meta.key$_{(l+1)}$ equals zero (i.e., neither Case~1 nor Case~3
happens), we set Meta.key$_{(l+1)}$ as the generalization of $x$ at
level~$l+1$ and Meta.val$_{(l+1)}$ as $v_x$ (Line~14-16).  Otherwise, we set
Meta.flag$_l$ as 0 if $x$ equals Meta.key$_{(l+1)}$ (Case~1), or 1 if they are
different (Case~2) (Line~18). In Stage~3, we update $C_{l, h(x)}$ based on the
value of Meta.flag$_l$ (Lines~20-25).   

To realize the Update procedure of \sysname in P4,  we call 
{\sf Push($0$)} to update each packet from level~$0$ (Line~27). If
Meta.val$_1$ has a non-zero value in {\sf Push($0$)} (i.e., either Case~2
or Case~3 happens), we continue to call {\sf Push($1$)} to update level~$1$
(Lines~28-29).  We have a similar process for level~2 and level~3
(Lines~30-33).  For level~4, we maintain only one register to count the value
of Meta.val$_4$, as there is only one fully generalized key (i.e., any
address) (Lines~34-35). 

\section{Theoretical Analysis} 
\label{sec:theory}

We present theoretical analysis on \sysname for both 1D and 2D HHH
detection.  Our analysis configures \sysname with $\tfrac{2}{\epsilon}$
buckets per array on average and the number of ancestors being checked in
estimation $t=\log\tfrac{1}{\delta}$ ($t$ is defined in \S\ref{subsec:1d}),
where $\epsilon$ ($0< \epsilon \le \phi <1$) is the approximation parameter,
$\delta$ ($0 < \delta < 1$) is the error probability, and the logarithm base
is 2.  Each key is represented in $\log n$ bits, where $n$ is the maximum
value of a key.  We use the same $n$ for both 1D and 2D cases. 

Our analysis assumes $H\ge t = \log\tfrac{1}{\delta}$ (where $H$ is the number
of nodes in the hierarchy, or the number of arrays in \sysname, as defined in
\S\ref{subsec:data}). That is, \sysname has sufficient memory to cover all
nodes in the hierarchy for accurate HHH detection. 

\subsection{Main Results}

Our goal is to show that \sysname maintains the following two properties.
\begin{itemize}[leftmargin=*]
\item 
\textbf{Accuracy:} $\Pr[\hat{S}(x) - S(x) \le k\epsilon\mathcal{S}] >
1-\tfrac{1}{4k}$, for some constant $k \ge 1$.  This property states that the
estimated count of a key in \sysname is close to its true count with a high
probability. 
\item 
\textbf{Coverage:} For each key $x\notin \mathcal{H}$, $S_{\mathcal{H}}(x) <
\phi\mathcal{S}$.  This property states that any key not in the output
set of HHHs $\mathcal{H}$ must have a conditioned count with respect to
$\mathcal{H}$ less than $\phi\mathcal{S}$.  
\end{itemize}

We first bound the count of a key in a bucket to which the key is hashed.  Let
$\Delta(x)$ be the true count of $x$ tracked in its hashed bucket $B(i,j)$.
Lemma~\ref{lem:bound} gives both the upper and lower bounds of $\Delta(x)$.
Lemma~\ref{lem:est} further shows that $U(x)$ given by the Estimate function
is an upper bound of $\Delta(x)$. 

\begin{lemma} 
\label{lem:bound}
Consider the bucket $B(i,j)$ to which key $x$ is hashed.  If $K_{i,j} $ equals
$x$, then $C_{i,j}\le \Delta(x)\le \tfrac{V_{i,j}+I_{i,j}}{2}$; otherwise,
$0\le \Delta(x) \le \tfrac{V_{i,j}-I_{i,j}}{2}$.
\end{lemma}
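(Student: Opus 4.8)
The plan is to reduce the statement to a self-contained claim about a single bucket and then argue by induction on the updates that reach it. Fix the bucket $B(i,j)$ and let $\Delta(y)$ denote the total weight of all Push-arrivals whose key (after generalization to level~$i$) equals $y$ and hashes to this bucket, so that $V_{i,j}=\sum_y \Delta(y)$ and $\Delta(x)$ is exactly the quantity in the statement. The key observation is that, restricted to $B(i,j)$, the evolution of $(K_{i,j},V_{i,j},I_{i,j},C_{i,j})$ induced by the Push function is precisely a \emph{weighted majority-vote process}: $V_{i,j}$ tracks the total arriving weight, $(K_{i,j},I_{i,j})$ are the MJRTY candidate and indicator, and $C_{i,j}$ accumulates the candidate's weight since it last became the candidate. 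Both the fresh packets from Update and the candidates ejected upward from lower levels enter $B(i,j)$ as ordinary weighted arrivals, so the per-bucket dynamics are independent of the rest of the pipeline, and the lemma becomes a purely local claim.

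First I would prove, by induction on the number of arrivals to $B(i,j)$, the three invariants: (A) $\Delta(K_{i,j})\le \tfrac{V_{i,j}+I_{i,j}}{2}$; (B) $\Delta(y)\le \tfrac{V_{i,j}-I_{i,j}}{2}$ for every $y\ne K_{i,j}$; and (C) $C_{i,j}\le \Delta(K_{i,j})$, together with the auxiliary fact $I_{i,j}\ge 0$. The base case (empty bucket, all counters zero) is immediate. For the inductive step I would process one arrival $(x,v_x)$ and split into the three branches of Push: the match case (Lines~6--9), the decrement case (Lines~10--11), and the swap case (Lines~12--14). In each branch the new counters are explicit linear functions of the old ones ($V'=V+v_x$ and $I'\in\{I+v_x,\,I-v_x,\,v_x-I\}$), so each invariant reduces to an elementary inequality derived from the matching pre-update invariant; I would verify them one branch at a time, distinguishing within each branch the sub-cases $y=x$, $y=K_{i,j}$, and all other $y$. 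Note that arbitrary positive weights $v_x$ are handled by the branch conditions $I\ge v_x$ versus $I<v_x$, so the argument is the weighted analogue of the classical unweighted majority-vote invariant.

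The hard part will be the swap case, where the stored candidate is replaced by $x$ and the old candidate $K_{i,j}$ is ejected: after the swap the old candidate must satisfy the \emph{non-candidate} bound (B) rather than (A). This transfer works because the swap sets $I'=v_x-I$, giving $V'-I'=V+I$, so the old invariant (A), $\Delta(K_{i,j})\le\tfrac{V+I}{2}$, becomes exactly the required bound $\Delta(K_{i,j})\le\tfrac{V'-I'}{2}$; symmetrically the incoming $x$ inherits (A) from its old (B) bound, since $V'+I'=V-I+2v_x$. I would also check that the swap condition $I<v_x$ forces $I'=v_x-I>0$, preserving $I_{i,j}\ge 0$, and that resetting $C'=v_x\le \Delta(x)+v_x=\Delta'(x)$ keeps (C). Once the invariants are established, the lemma follows by reading them off at the final state: if $K_{i,j}=x$, then (A) and (C) give $C_{i,j}\le\Delta(x)\le\tfrac{V_{i,j}+I_{i,j}}{2}$; if $K_{i,j}\ne x$, then (B) together with $\Delta(x)\ge 0$ gives $0\le\Delta(x)\le\tfrac{V_{i,j}-I_{i,j}}{2}$.
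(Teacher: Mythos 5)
Your proposal is correct, and the invariant bookkeeping checks out in all three branches (in the swap case, $V'-I'=V+I$ and $V'+I'=V-I+2v_x$ do exactly the transfer you describe, and $I<v_x$ gives $I'>0$). The route is genuinely different from the paper's, though: the paper proves the two MJRTY bounds by citation, invoking the prior analysis of MV-Sketch (Lemma~2 of \cite{Tang2019}) for $I_{i,j}\le\Delta(x)\le\tfrac{V_{i,j}+I_{i,j}}{2}$ when $K_{i,j}=x$ and $0\le\Delta(x)\le\tfrac{V_{i,j}-I_{i,j}}{2}$ otherwise, and then only adds the one-line observation that $C_{i,j}$ accumulates a subset of $x$'s arrivals since $x$ became the candidate, so $\Delta(x)\ge C_{i,j}\ge I_{i,j}$. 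You instead re-derive the weighted majority-vote invariants from scratch by induction on arrivals to the bucket, folding the cumulative-counter bound in as a third invariant. What your version buys is self-containment and, more importantly, an explicit justification of the localization step that the paper leaves implicit: in \sysname{} a bucket receives not only fresh packets but also candidates ejected from lower-level arrays (and keys re-pushed during Detect), and your observation that all of these enter as ordinary weighted arrivals is exactly what licenses applying a single-bucket MJRTY lemma in the pipelined setting. What the paper's version buys is brevity and a direct reduction to an already-published result. Both establish the same statement; yours is the more careful argument if one does not want to lean on \cite{Tang2019}.
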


\begin{proof} 
We can bound $\Delta(x)$ with the values
of $K_{i,j}$ and $I_{i,j}$ based on the prior analysis
\cite[Lemma~2]{Tang2019} on MJRTY \cite{Boyer1991}.  If $x$ equals $K_{i,j}$, $I_{i,j}\le
\Delta(x) \le \tfrac{V_{i,j}+I_{i,j}}{2}$; otherwise, $0\le \Delta(x) \le
\tfrac{V_{i,j}-I_{i,j}}{2}$.  In \sysname, if $x$ equals $K_{i,j}$, we use
$C_{i,j}$ to track the cumulative count of $x$ since it is stored in
$K_{i,j}$, which implies $\Delta(x)\ge C_{i,j} \ge I_{i,j}$. 
\end{proof}

\begin{lemma} 
\label{lem:est}
The returned estimate $U(x)$ of key $x$ by the Estimate function is an upper
bound of $\Delta(x)$. 
\end{lemma}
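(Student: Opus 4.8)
The plan is to exploit that the Estimate routine returns $U(x)=\min_{l\le i\le \min\{l+t,H-1\}}U_i$, so it suffices to prove that \emph{each} individual estimate $U_i$ already satisfies $U_i\ge\Delta(x)$; the minimum of a family of quantities that are all at least $\Delta(x)$ is itself at least $\Delta(x)$. The base level $i=l$ is immediate, since the Estimate routine is only invoked on a stored candidate, i.e.\ $x=K_{l,h_l(x)}$ (Line~31 of Figure~\ref{fig:ops1d} sets $x\leftarrow K_{i,j}$), so Lemma~\ref{lem:bound} gives $U_l=(V_{l,h_l(x)}+I_{l,h_l(x)})/2\ge\Delta(x)$. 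The substance of the argument therefore lies at the ancestor levels $i>l$.

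First I would fix notation: let $y_i$ be the generalization of $x$ at level~$i$ (so $y_l=x$), and observe that when $U_i$ is computed the running variable equals $v_i=C_{l,h_l(x)}+\sum C_{i',h_{i'}(y_{i'})}$, where the sum ranges over the intermediate levels $l<i'<i$ at which $y_{i'}$ is the stored candidate, since Line~22 adds $C_{i',h_{i'}(y_{i'})}$ to $v$ exactly in that case. Applying Lemma~\ref{lem:bound} to $y_i$'s bucket in both branches of Estimate yields $U_i\ge\Delta(y_i)+v_i$: the matching branch uses $(V_{i,h_i(y_i)}+I_{i,h_i(y_i)})/2\ge\Delta(y_i)$ and the non-matching branch uses $(V_{i,h_i(y_i)}-I_{i,h_i(y_i)})/2\ge\Delta(y_i)$. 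Thus it remains to establish the key inequality $\Delta(x)\le\Delta(y_i)+v_i$.

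The core is a count-conservation (flow) argument along the generalization path. For any key $z$, I would argue that Push loses none of $z$'s mass: inspecting the three branches shows that an arriving $z$-item is retained in the cumulative counter $C$ (Cases~1 and 3, where $z$ is or becomes the candidate) or is generalized and forwarded to the next level (Case~2, together with the displaced former candidate in Case~3), and in no branch is any mass discarded. This gives, for each $i'$, the identity $\Delta(y_{i'})=C_{i',h_{i'}(y_{i'})}+P_{y_{i'}}$ when $y_{i'}$ is the stored candidate and $\Delta(y_{i'})=P_{y_{i'}}$ otherwise, where $P_{y_{i'}}$ is the mass forwarded out of level~$i'$. Since everything forwarded from $y_{i'}$ is generalized to $y_{i'+1}$ and routed to the very bucket whose count is $\Delta(y_{i'+1})$, I obtain $P_{y_{i'}}\le\Delta(y_{i'+1})$. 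Telescoping these relations from level~$l$ (where $x$ is the candidate) up to level~$i$ collapses the forwarded terms and leaves precisely $\Delta(x)\le v_i+\Delta(y_i)$; combined with $U_i\ge\Delta(y_i)+v_i$ this yields $U_i\ge\Delta(x)$, and hence $U(x)\ge\Delta(x)$.

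The main obstacle I anticipate is making the conservation bookkeeping rigorous, and in particular handling Case~2 correctly: there the incoming item both cancels one unit of the resident candidate's indicator counter \emph{and} is forwarded upward, so I must verify that this dual role neither creates nor destroys $x$-mass, and that the cancellation perturbs only the \emph{bounds} (through $I$) rather than the true per-key counts $\Delta(\cdot)$. A secondary point of care is that $v_i$ accumulates $C$ only at levels where $y_{i'}$ is the stored candidate; I would check that this matches the telescoping exactly, namely that a non-candidate intermediate ancestor retains zero mass so that omitting its $C$ is correct rather than an under-count.
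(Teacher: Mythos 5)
Your proposal is correct and follows essentially the same route as the paper's proof: decompose $U(x)$ as a minimum, apply Lemma~\ref{lem:bound} to each bucket, and use the inequality $\Delta(y_i)+v\ge\Delta(x)$ to handle the ancestor levels. The paper asserts that last inequality with only the remark that $v$ is the sum of cumulative counts of $y_i$'s descendants, whereas you make explicit the count-conservation and telescoping argument (and the observation that the indicator counter $I$ affects only the bounds, not the true per-key masses) that justifies it.
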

\begin{proof} We focus on 1D HHH detection, while
the proof for 2D HHH detection is identical.  Denote the $t$ closest ancestors
of $x$ by $y_i$, where $1\le i\le t$.  Let $U_x$ and $U_i$ be the temporary
estimates calculated for $x$ and $y_i$ in the Estimate function, respectively.
By Lemma~\ref{lem:bound} and the Estimate function, we have $U_x \ge
\Delta(x)$ and $U_i\ge \Delta(y_i)+v\ge \Delta(x)$, where $v$ is the sum of
cumulative counts of $y_i$'s descendants.  Thus, $U(x) =\min_{1\le i\le
t}{\{U_x, U_i\}} \ge \Delta(x)$.
\end{proof}

We first consider 1D HHH detection. Theorem~\ref{the:1dcom} states the space
and time complexities of \sysname. Theorem~\ref{the:1d} shows that \sysname
satisfies both accuracy and coverage properties.  Theorem~\ref{the:1dfp}
further presents the bounds of \sysname for 1D HHH detection under certain
conditions. 

\begin{theorem}
\label{the:1dcom}
In 1D HHH detection,
\sysname finds HHHs in $O(\tfrac{H}{\epsilon}\log{n})$ space. The update time
is $O(H)$. The detection time is
$O(\tfrac{H(d-1)}{\epsilon}\log{\tfrac{1}{\delta}})$.  Note that the space and
time complexities of \sysname are implicitly related to the error
probability $\delta$, as we assume $H\ge \log{\tfrac{1}{\delta}}$.
\end{theorem}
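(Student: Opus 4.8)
The plan is to prove the three bounds independently, each as a counting argument over the $H$ arrays and the $\tfrac{2}{\epsilon}$ buckets per array fixed by the analysis configuration. For the space bound I would first count the total number of buckets: since each of the $H$ arrays holds $\tfrac{2}{\epsilon}$ buckets on average, the total is $\tfrac{2H}{\epsilon}$. Each bucket stores the key field $K_{i,j}$ in $\log n$ bits together with the three counters $V_{i,j}, I_{i,j}, C_{i,j}$; treating the counters as $O(\log n)$-bit (equivalently, machine words absorbed into the $\log n$ term), each bucket occupies $O(\log n)$ space, so the total is $\tfrac{2H}{\epsilon}\cdot O(\log n)=O(\tfrac{H}{\epsilon}\log n)$.

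For the update bound I would trace a single invocation of {\sf Push} triggered by {\sf Update}. The loop in {\sf Push} advances through the arrays one level at a time, and each iteration performs only one hash evaluation, one bucket access, and a constant number of comparisons, additions/subtractions, and at most one swap (the MJRTY step), so each level costs $O(1)$. A swap merely replaces the key being carried upward, so at any moment there is a single active key propagating; the packet (and any key it evicts) can traverse at most all $H$ arrays before being admitted, giving worst-case update time $O(H)$. I would note that skewness makes the typical cost far smaller, but the stated bound is on the worst case.

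For the detection bound I would iterate over all $\tfrac{2H}{\epsilon}$ buckets via the outer double loop of {\sf Detect}. Processing each bucket invokes {\sf Estimate} once, which inspects the bucket plus its $t=\log\tfrac{1}{\delta}$ closest ancestors at $O(1)$ work apiece, for $O(\log\tfrac{1}{\delta})$ per call; this accounts for $O(\tfrac{H}{\epsilon}\log\tfrac{1}{\delta})$ overall. When a candidate fails the threshold test it is pushed upward, and such propagation can re-examine and re-push the candidate across up to $d-1$ higher levels, each at $O(1)$ MJRTY work. Combining the $\tfrac{2H}{\epsilon}$ buckets, the $(d-1)$ levels of propagation, and the $\log\tfrac{1}{\delta}$ cost of each estimate yields the combined upper bound $O(\tfrac{H(d-1)}{\epsilon}\log\tfrac{1}{\delta})$.

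The main obstacle I anticipate is the detection-time accounting, specifically justifying the $(d-1)$ factor. The subtlety is that pushing a non-HHH candidate during {\sf Detect} mutates the higher-level buckets through MJRTY swaps, so a key may reappear as a candidate and be re-estimated when the outer loop later reaches those levels. I would need to argue that the number of such re-processings charged to any bucket is bounded by the hierarchy depth, i.e.\ at most $d-1$, and that the cost of computing $\hat{S}(x)$ for reported HHHs (the sum over HHH descendants on Line~35) is dominated by this term rather than introducing a separate dependence on the number of HHHs. Getting the interaction between the $(d-1)$ propagation factor and the $\log\tfrac{1}{\delta}$ estimation factor right—so that the product form is a legitimate upper bound on what is really a sum of the two cost sources—is the delicate step.
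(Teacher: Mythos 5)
Your proposal is correct and follows essentially the same decomposition as the paper's proof: counting $\tfrac{2H}{\epsilon}$ buckets of $O(\log n)$ bits each for space, bounding a single \textsc{Push} traversal by the $H$ arrays for update time, and charging each bucket in \textsc{Detect} one \textsc{Estimate} call over $t=\log\tfrac{1}{\delta}$ ancestors plus a push across at most $d-1$ higher levels. The obstacle you flag (whether the product $\tfrac{H(d-1)}{\epsilon}\log\tfrac{1}{\delta}$ legitimately bounds what is really a per-bucket sum of the two cost sources) is handled implicitly in the paper as well; since each bucket's candidate is estimated exactly once when the outer loop reaches it, the true cost is $O(\tfrac{H}{\epsilon}(\log\tfrac{1}{\delta}+d))$, of which the stated product is a valid (if loose) upper bound.
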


\begin{proof}  
We maintain an array of buckets for each of the $H$ nodes in the hierarchy.
Each bucket stores a $\log n$-bit candidate HHH and three counters.  Thus, the
space usage is $O(\tfrac{H}{\epsilon}\log{n})$.  
Each per-packet update accesses at most $H$ buckets, and hence takes $O(H)$
time in the worst case.  
We traverse all $Hw$ buckets to return the set of HHHs. For each candidate HHH
$x$ in a bucket, we obtain $\hat{S}_{\mathcal{H}}(x)$ by checking the $t$
closest ancestors of $x$. If $\hat{S}_{\mathcal{H}}(x)$ is below the
threshold,  we push $x$ to at most $d-1$ higher-level arrays.  Thus, it takes
$O(\tfrac{H(d-1)}{\epsilon}\log{\tfrac{1}{\delta}})$ time to return all HHHs.
\end{proof}        

\begin{theorem}
\label{the:1d}
The main operations of \sysname in 1D HHH detection (Figure~\ref{fig:ops1d} in
\S\ref{subsec:1d}) satisfy the accuracy and coverage properties.
\end{theorem}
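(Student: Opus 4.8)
The plan is to establish the two properties separately, building on Lemma~\ref{lem:bound} and Lemma~\ref{lem:est} and treating the randomness only through the hash functions. For \textbf{accuracy}, I would first argue that $\hat{S}(x)$ never underestimates, i.e. $\hat{S}(x)\ge S(x)$. The true count $S(x)$ is split by the Update operation across the bucket at $x$'s own level (held in $C_{l,h_l(x)}$ once $x$ becomes the candidate) and the buckets of $x$'s ancestors, to which the residual mass is pushed, while the mass of $x$'s descendants that become HHHs is the ``missing'' count added back in Line~35. Using the lower bounds of Lemma~\ref{lem:bound} along the ancestor chain together with the accumulation of $v$ in the Estimate function, I would show that each temporary estimate, and hence the reported $\hat{S}(x)$, is at least $S(x)$. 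Next I bound the overestimation: since $\hat{S}(x)$ is a minimum of terms each equal to a per-bucket MJRTY estimate plus exact cumulative counts, the gap $\hat{S}(x)-S(x)$ is at most the \emph{collision error} at the level realizing the minimum, namely half the total count of the other keys hashed into that bucket (Lemma~\ref{lem:bound}). With $2/\epsilon$ buckets per array and pairwise-independent hashing, the expected colliding mass is at most $\epsilon\mathcal{S}/2$, so $\mathbb{E}[\hat{S}(x)-S(x)]\le \epsilon\mathcal{S}/4$, and Markov's inequality gives $\Pr[\hat{S}(x)-S(x)>k\epsilon\mathcal{S}]<\tfrac{1}{4k}$, exactly the claimed bound.

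For \textbf{coverage}, I would argue the contrapositive by induction on the level $i$, bottom-up: if a key $x$ at level~$i$ has true conditioned count $S_{\mathcal{H}_{i-1}}(x)\ge\phi\mathcal{S}$, then $x$ is reported, so every unreported $x$ satisfies $S_{\mathcal{H}}(x)<\phi\mathcal{S}$. Two cases arise. If $x$ is the candidate HHH in its bucket but is not reported, then its estimated conditioned count is at most $\phi\mathcal{S}$; since Lemma~\ref{lem:est} makes the estimate an upper bound on the true conditioned count, we obtain $S_{\mathcal{H}}(x)\le\phi\mathcal{S}$. If $x$ is \emph{not} the candidate HHH in its bucket, I would invoke the MJRTY guarantee: because $\phi\ge\epsilon$ and there are $2/\epsilon$ buckets, a key carrying conditioned mass $\ge\phi\mathcal{S}$ exceeds the combined mass of the keys colliding with it in its bucket, so by Lemma~\ref{lem:bound} it must win the majority vote and be tracked as the candidate, reducing this case to the previous one.

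The \textbf{main obstacle} is this second coverage case: rigorously ruling out that a key with large conditioned count fails to become its bucket's candidate because it was evicted by, or lost votes to, a colliding key under MJRTY, while simultaneously keeping the conditioned-count bookkeeping (subtraction of descendant HHHs and the pushed-up residual mass) consistent across levels via the inductive hypothesis. Making this deterministic where the MJRTY margin forces it and probabilistic where collisions intervene, and reconciling the strict versus non-strict threshold comparison ($>\phi\mathcal{S}$ for reporting in the Detect operation versus $<\phi\mathcal{S}$ in the statement), is the delicate part; the accuracy argument, by contrast, reduces cleanly to a single Markov bound on the expected collision error.
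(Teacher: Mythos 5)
Your accuracy argument is essentially the paper's: the identity $\hat{S}(x)-S(x)=U(x)-\Delta(x)$, the bound $U(x)-\Delta(x)\le\tfrac{V_{i,j}-\Delta(x)}{2}$ from Lemma~\ref{lem:bound} (valid whether or not $x$ is the candidate), and one application of Markov's inequality to the colliding mass, whose expectation is at most $\epsilon\mathcal{S}/2$ with $2/\epsilon$ buckets. (Minor point: bound the minimum by the term at $x$'s \emph{own} level rather than by ``the level realizing the minimum,'' which is a random variable; the constant $\tfrac{1}{4k}$ then comes out exactly as you compute.)

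The coverage argument, however, has a genuine gap, and it is precisely the case you flag as the main obstacle. You try to prove the contrapositive in the form ``if $S_{\mathcal{H}_{i-1}}(x)\ge\phi\mathcal{S}$ then $x$ is reported,'' and in the case where $x$ is not the candidate in its bucket you want MJRTY to force $x$ to win because its mass exceeds the colliding mass. That deterministic claim is false: the bound $\epsilon\mathcal{S}/2\le\phi\mathcal{S}/2$ on the colliding mass holds only \emph{in expectation}, and a single heavy key colliding with $x$ can evict it with constant probability. Indeed, the paper's Theorem~\ref{the:1dfp} only bounds the probability of missing such an $x$ by $\delta$, so no deterministic version of your Case~2 can exist, and a probabilistic version would not yield the coverage property, which is stated deterministically. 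The resolution is that coverage is a statement about the \emph{output} set $\mathcal{H}$, not the true HHH set, so you need not show that $x$ itself is reported. The paper argues by contradiction: if $x\notin\mathcal{H}$ and $S_{\mathcal{H}}(x)\ge\phi\mathcal{S}$, then since all mass of unreported descendants is pushed into $x$'s bucket during Detect, $\Delta(x)\ge S_{\mathcal{H}}(x)\ge\phi\mathcal{S}$ and hence $\hat{S}_{\mathcal{H}}(x)=U(x)\ge\phi\mathcal{S}$ by Lemma~\ref{lem:est}; so $x$ can fail to be reported only by losing the majority vote, in which case its mass is pushed onward until \emph{some ancestor} of $x$ is reported (the topmost wildcard node is always reported if the mass reaches it). But an ancestor of $x$ in $\mathcal{H}$ forces $S_{\mathcal{H}}(x)=0$ by the definition of conditioned count --- a contradiction. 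This sidesteps your Case~2 entirely; no claim that a heavy key must win MJRTY is needed.
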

\begin{proof} We first prove the
accuracy property. Let $B(i,j)$ be the bucket to which $x$ is hashed.
Consider the sum of all keys in $B(i,j)$ except $x$. Its expectation is
$\mathbf{E}[V_{i,j}-\Delta(x)] = \mathbf{E}[\sum_{y\neq x,
h_{i}(y)=h_{i}(x)}\Delta(y)] \le \tfrac{\mathcal{S}-\Delta(x)}{w_{i}}\le
\tfrac{\epsilon\mathcal{S}}{2}$. By Markov's inequality, 
\begin{equation} \label{eq:markov} \Pr[V_{i,j}-\Delta(x) \ge
    k\epsilon\mathcal{S}] \le \tfrac{1}{2k}.
\end{equation} 
 
We study the difference between $S(x)$ and $\hat{S}(x)$.  When we start
checking $B(i,j)$ in the Detect operation, the counts of the descendants of $x$
that are not in $\mathcal{H}$ are all pushed to $B(i,j)$. At this time, $S(x)$
consists of two parts in \sysname: $\Delta(x)$; and the sum of the cumulative
counts of $x$'s descendants in $\mathcal{H}$.  That is,
$S(x)=\Delta(x)+\sum_{x'\in\mathcal{H}\wedge x'\prec x}C_{i',h_{i'}(x')}$, where
$i'$ is the level of $x'$.  By the Detect operation, $\hat{S}(x) = U(x) +
\sum_{x'\prec x \wedge x\in\mathcal{H}}C_{i',h_{i'}(x')}$.  We then have
$\hat{S}(x)- S(x) = U(x)-\Delta(x)$.  By Lemma~\ref{lem:bound} and the Estimate
function, if $K_{i,j}$ equals $x$, $U(x)-\Delta(x)\le
\tfrac{V_{i,j}+I_{i,j}}{2}-\Delta(x)\le\tfrac{V_{i,j}-\Delta(x)}{2}$; otherwise,
if $K_{i,j}\neq x$,
$U(x)-\Delta(x)\le\tfrac{V_{i,j}-I_{i,j}}{2}-\Delta(x)\le\tfrac{V_{i,j}-\Delta(x)}{2}$.
Combining both cases, we have $\Pr[U(x)-\Delta(x)\ge k\epsilon\mathcal{S}]\le
\Pr[\tfrac{V_{i,j}-\Delta(x)}{2}\ge k\epsilon\mathcal{S}] \le \tfrac{1}{4k}$ by
Equation~\ref{eq:markov}.  Thus, $\Pr[\hat{S}(x)-S(x)\le
k\epsilon\mathcal{S}]=\Pr[U(x)-\Delta(x) \le k\epsilon\mathcal{S}]  \ge
1-\tfrac{1}{4k}$.

We prove the coverage property by contradiction.  Suppose that
$S_{\mathcal{H}}(x) \ge \phi\mathcal{S}$. As the counts of the descendants of
$x$ that are not in $\mathcal{H}$ must be pushed to $B(i,j)$, we have
${S}_{\mathcal{H}}(x) \le \Delta(x)$. Then, $\phi\mathcal{S}\le
{S}_{\mathcal{H}}(x) \le \Delta(x) \le U(x) = \hat{S}_{\mathcal{H}}(x)$. We do
not report $x$ as an HHH only if $x$ is not stored in $K_{i,j}$. In this case,
the count of $x$ in $B(i,j)$ is further pushed to its ancestors until $x$ is
admitted by an HHH.  Thus, we must add at least one of the ancestors of $x$ to
$\mathcal{H}$.  By the definition of the conditioned count,
$S_{\mathcal{H}}(x)=0$, which is a contradiction. 
\end{proof}

\begin{theorem}
\label{the:1dfp}
In 1D HHH detection, if key $x$ and each of its $t=\log\tfrac{1}{\delta}$
closest ancestors have counts at most 
$(\phi-\tfrac{\epsilon}{2})\mathcal{S}$, \sysname falsely reports $x$ as an
HHH with a probability at most $\delta$; if $x$ is at level~$l$ with
$S_{\mathcal{H}_{l-1}}(x)\ge \phi\mathcal{S}$, \sysname misses $x$ and reports
its ancestor at a level higher than $l+t$ as an HHH with a probability at most
$\delta$. 
\end{theorem}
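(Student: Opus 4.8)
The plan is to reduce both the false-positive and the false-negative claim to a product of independent per-level collision events, each occurring with probability at most $\tfrac12$ by Equation~\ref{eq:markov}; since $t=\log\tfrac1\delta$ gives $(\tfrac12)^t=\delta$, this yields the claimed bound. The only probabilistic ingredient is the Markov estimate of Equation~\ref{eq:markov}, together with the fact that \sysname draws the hash functions $h_0,\dots,h_{H-1}$ independently across levels, so that the collision masses $V_{i,h_i(y)}-\Delta(y)$ at distinct levels are mutually independent random variables and their probabilities may be multiplied.

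For the false-positive claim, I would first note that \sysname reports $x$ (at its level $l$) only when the returned estimate $U(x)=\min_{l\le i\le l+t}U_i$ exceeds $\phi\mathcal S$, which forces \emph{every} temporary estimate $U_i$ — the one for $x$ together with one for each of its $t$ closest ancestors $y_i$ — to exceed $\phi\mathcal S$. For a fixed level $i$, Lemma~\ref{lem:bound} and the Estimate function give $U_i\le (\Delta(y_i)+v)+\tfrac{V_{i,h_i(y_i)}-\Delta(y_i)}{2}$, and the tracked-plus-pushed quantity $\Delta(y_i)+v$ is at most the true count $S(y_i)\le(\phi-\tfrac\epsilon2)\mathcal S$ by hypothesis. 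Hence $U_i>\phi\mathcal S$ forces the collision mass $V_{i,h_i(y_i)}-\Delta(y_i)>\epsilon\mathcal S$, an event of probability at most $\tfrac12$ by Equation~\ref{eq:markov} with $k=1$. Multiplying over the $t+1$ independent levels gives a false-positive probability at most $(\tfrac12)^{t+1}\le(\tfrac12)^t=\delta$.

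For the false-negative claim, I would reuse the dichotomy underlying the coverage argument of Theorem~\ref{the:1d}: a key whose conditioned count is at least $\phi\mathcal S$ and that is the MJRTY candidate in its own bucket is always reported, since $U\ge\Delta\ge\phi\mathcal S$ by Lemma~\ref{lem:est}. Consequently, for the mass of $x$ (with $S_{\mathcal H_{l-1}}(x)\ge\phi\mathcal S$) to escape past level $l+t$, the key carrying this mass must fail to be the candidate — i.e.\ be evicted by MJRTY — at each of the $t$ ancestor levels $l+1,\dots,l+t$. Since the pushed mass is conserved, each such bucket accumulates at least $\phi\mathcal S\ge\epsilon\mathcal S$, so eviction requires a colliding mass of at least $\phi\mathcal S$, an event of probability at most $\tfrac{\epsilon}{2\phi}\le\tfrac12$ by Equation~\ref{eq:markov} (taking $k=\phi/\epsilon\ge1$). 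Independence across the $t$ levels then yields a probability at most $(\tfrac12)^t=\delta$ for missing $x$ and reporting an ancestor beyond level $l+t$.

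The main obstacle I anticipate is making the mass-propagation bookkeeping in the false-negative case rigorous: I must show that, in the bad event where $x$ and every intermediate ancestor are evicted, the full conditioned count $\ge\phi\mathcal S$ of $x$ is actually carried into the bucket of the next ancestor rather than being split off together with the evicting key, so that the ``candidate with count $\ge\phi\mathcal S$ is always reported'' step applies level by level. Pinning down exactly which quantity the Detect operation pushes (the cumulative count $C_{i,j}$ or the bucket total $V_{i,j}$) and verifying that it preserves at least $\phi\mathcal S$ at each step is the delicate part; once that conservation is established, the per-level Markov bound and the cross-level independence are routine.
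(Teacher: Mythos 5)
Your proposal follows essentially the same route as the paper's proof: the necessary condition that all $t+1$ (resp.\ $t$) per-level estimates exceed the threshold, a per-level failure probability of $\tfrac12$ via Equation~\ref{eq:markov}, and a product over the independently hashed levels. The mass-conservation issue you flag in the false-negative case is resolved exactly as in the coverage part of Theorem~\ref{the:1d}: every unit of count not absorbed by a reported descendant HHH is pushed, under the correct generalized key, into the ancestor's hashed bucket, so $\Delta(y)\ge S_{\mathcal{H}}(y)\ge\phi\mathcal{S}$ holds at each successive level and the per-level Markov bound applies as you intend.
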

\begin{proof}
We first show that \sysname reports a small key $x$ that is at
level $l$ with a small probability.
Suppose that $x$ is hashed to bucket $B(l,j)$.  A necessary condition of
reporting $x$ as an HHH is that $\hat{S}_{\mathcal{H}_{l-1}}(x)=U(x) \ge
\phi\mathcal{S}$.  We get $U(x) = \min\{U_k\}$, where $0\le k \le t$ and $U_0,
U_1, \dots, U_t$ are the estimate of $x$ and its $t$ ancestors $ y_1, \dots,
y_t$, respectively, in the Estimate function.  We have
$U_{k}\ge\phi\mathcal{S}$ for each $0\le k \le t$.  Consider $U_0$ first. We
have $U_0-S(x)\ge
\phi\mathcal{S}-(\phi-\tfrac{\epsilon}{2})\mathcal{S}=\tfrac{\epsilon\mathcal{S}}{2}
$.  Then, $ \Pr[U_0-S(x) \ge \tfrac{\epsilon\mathcal{S}}{2}] \le
\Pr[U_0-\Delta(x) \ge\tfrac{\epsilon\mathcal{S}}{2}]\le
\Pr[\tfrac{V_{i,j}-\Delta(x)}{2}\ge \tfrac{\epsilon\mathcal{S}}{2}]\le
\tfrac{1}{2}$ by Equation~\ref{eq:markov} and the proof in Theorem~\ref{the:1d}.
Similarly, we can get $\Pr[U_{k}-S(y_{k})\ge\tfrac{\epsilon\mathcal{S}}{2}]\le
\tfrac{1}{2}$.    Thus, $\Pr[U(x) \ge \phi\mathcal{S}] =
\prod_{k=0}^{t}\Pr[U_{k}-S(y_{k})\ge \tfrac{\epsilon\mathcal{S}}{2}] \le
\tfrac{1}{2^{t+1}} \le \delta$. 

We show that \sysname misses an HHH $x$ at level $l$ but reports
its ancestor at much higher levels with a small probability.  Given
$S_{\mathcal{H}_{l-1}}(x)\ge \phi\mathcal{S}$, we have
$\hat{S}_{\mathcal{H}_{l-1}}(x)=U(x)\ge\Delta(x)\ge S_{\mathcal{H}_{l-1}}(x)\ge
\phi\mathcal{S}$ by the Detect operation and Lemma~\ref{lem:est}.  We do not
report $x$ as an HHH when checking its hashed bucket $B(l,j)$ if $x$ is not
stored in $K_{l,j}$. By MJRTY, the count of $x$ in $B(l,j)$ does not account for
more than half of the total count in that bucket.  We have
$\Pr[\Delta(x)\le\tfrac{V_{l,j}}{2}] = \Pr[V_{l,j} - \Delta(x) \ge \Delta(x)]
\le \Pr[V_{l,j}-\Delta(x) \ge S_{\mathcal{H}}(x)] \le \Pr[V_{l,j}-\Delta(x) \ge
\phi\mathcal{S}] \le \Pr[V_{l,j}-\Delta(x) \ge \epsilon\mathcal{S}] \le
\tfrac{1}{2}$ by Equation~\ref{eq:markov}.  Similarly, the probability that we
miss the next ancestor of $x$ is also smaller than $\tfrac{1}{2}$.  Thus, the
probability that we miss all $t-1$ closest ancestors of $x$ is smaller than
$\tfrac{1}{2^{t}}=\delta$.
\end{proof}

We also consider 2D HHH detection.  Theorem~\ref{the:2dcom} states the space
and time complexities.  Theorem~\ref{the:2d} shows that \sysname satisfies
both the accuracy and coverage properties.  In the interest of space, we
present the major operations of 2D HHH detection and the proofs of the
theorems in the supplementary file. 

\begin{theorem}
\label{the:2dcom}
In 2D HHH detection, \sysname finds HHHs in $O(\tfrac{H}{\epsilon}\log{n})$
space. The update time is $O(d)$ in the worst case. The detection time is
$O(\tfrac{(d-1)H}{\epsilon}\log{\tfrac{1}{\delta}})$.
\end{theorem}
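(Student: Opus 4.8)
The plan is to establish the three bounds separately, reusing the structure of the 1D analysis (Theorem~\ref{the:1dcom}) and isolating only the places where the lattice structure of 2D detection changes the argument. For the space bound I would argue exactly as in 1D: \sysname still allocates one array of $\tfrac{2}{\epsilon}$ buckets on average for each of the $H$ nodes of the lattice, and each bucket stores one $\log n$-bit candidate key together with the three counters $V$, $I$, and $C$. Summing over all nodes gives $O(\tfrac{H}{\epsilon}\log n)$, with no extra dependence on the dimensionality beyond the value of $H$ itself. This part is routine and I would dispatch it first.

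For the detection time I would again mirror the 1D proof. The Detect operation scans all $O(\tfrac{H}{\epsilon})$ buckets; for the candidate key in each bucket it invokes Estimate, which inspects $t=\log\tfrac{1}{\delta}$ ancestors, and if the estimated conditioned count falls below $\phi\mathcal{S}$ it pushes the key to higher-level nodes. The one new ingredient is that a push in the 2D lattice may proceed along both the source and destination directions, and that the conditioned-count computation relies on the inclusion--exclusion correction to handle double counting. I would check that each such push still touches only $O(d-1)$ nodes, i.e.\ a monotone chain of generalizations, and that the inclusion--exclusion correction is absorbed into this cost. Multiplying the per-bucket cost $O((d-1)\log\tfrac{1}{\delta})$ by the $O(\tfrac{H}{\epsilon})$ buckets then yields $O(\tfrac{(d-1)H}{\epsilon}\log\tfrac{1}{\delta})$.

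The genuinely 2D-specific step is the update bound $O(d)$, and I would treat it last and most carefully. The key observation is that every generalization, whether in the source or the destination direction, strictly increases the level of a node, and the levels of the lattice range only over $0,\dots,d-1$; hence any single monotone chain of pushes visits at most $d$ nodes. The update of a packet is a sequence of such pushes: it walks the bottom nodes along the source direction and, whenever a key is evicted by MJRTY, climbs the destination direction from that bottom node. I would bound the number of bottom nodes visited together with the lengths of the induced destination climbs, and argue that their total remains $O(d)$ in the worst case, in contrast to the $O(H)$ bound that a na\"ive count of every touched bucket would give.

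The hard part will be precisely this accounting of the two-directional pushes. Unlike the 1D case, where $H=d$ and a single climbing chain trivially bounds everything, the 2D update branches, so I must show that the evicted keys pushed up the various destination columns, together with the traversal of the bottom row, still sum to $O(d)$ rather than to the worst-case product of the two dimension depths. This is where I would lean on the detailed 2D Update pseudocode to argue that the packet effectively follows one monotone path through the lattice of length at most $d$, with each eviction-induced destination push charged against that same path; making this charging rigorous is the crux of the proof.
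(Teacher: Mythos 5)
Your space and detection-time arguments are fine and essentially mirror the paper's 1D analysis (the paper defers the 2D proofs to its supplementary file, so there is no in-text proof to compare against); the $O(\tfrac{H}{\epsilon}\log n)$ space count and the bucket-by-bucket accounting of Detect carry over with no 2D-specific difficulty. The genuine problem is exactly where you locate it: the $O(d)$ update bound. You correctly observe that a naive count of touched buckets gives $O(H)=O(d_1 d_2)$ (where $d_1,d_2$ are the per-dimension depths and $d=d_1+d_2-1$), but you then defer the resolution---``making this charging rigorous is the crux of the proof''---without supplying it, so the one claim that is specific to the 2D setting is not actually proved.

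Moreover, the particular charging scheme you sketch cannot work as stated. You propose to show that the packet ``effectively follows one monotone path through the lattice of length at most $d$,'' with every eviction-induced destination push charged against that path. But the 2D Update deliberately does \emph{not} follow a single monotone path: whenever a key is not admitted at a bottom node, it is pushed both up that node's destination column \emph{and} to the next bottom node, so its count can be deposited at up to $d_1$ distinct ancestors---this branching is precisely why the Detect operation needs the inclusion--exclusion correction for double counting. In the worst case each of the $d_1$ destination climbs can run the full height $d_2-1$ of its column, giving $\Theta(d_1 d_2)=\Theta(H)$ bucket accesses, and no injective charging into a path of length $d=d_1+d_2-1$ can absorb that. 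To obtain $O(d)$ you need a structural property of the 2D Push itself---for instance, that at most one destination climb proceeds beyond its first bucket, or that a climb terminates as soon as the pushed key fails to evict---extracted from the actual 2D pseudocode rather than by analogy with the single 1D chain.
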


\begin{theorem}
\label{the:2d}
The main operations of \sysname in 2D HHH detection satisfy the accuracy and
coverage properties.
\end{theorem}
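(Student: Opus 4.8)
The plan is to reduce the 2D case to the 1D argument of Theorem~\ref{the:1d} as much as possible, isolating the lattice-specific double-counting bookkeeping as the one genuinely new ingredient. First I would note that the per-bucket collision bound of Equation~\ref{eq:markov} is purely a statement about keys hashed into a single bucket $B(i,j)$, so it is insensitive to whether the hierarchy is a chain (1D) or a lattice (2D); it carries over verbatim once we fix the bucket $B(i,j)$ to which $x$ is hashed. Likewise, Lemmas~\ref{lem:bound} and~\ref{lem:est} already hold in 2D (the proof of Lemma~\ref{lem:est} is stated to be identical), so $U(x)$ is still an upper bound of $\Delta(x)$ and is controlled by $\tfrac{V_{i,j}-\Delta(x)}{2}$.

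For the accuracy property, the crux is re-expressing $S(x)$ and $\hat{S}(x)$ under the lattice. In 1D we had $S(x)=\Delta(x)+\sum_{x'\in\mathcal{H}\wedge x'\prec x}C_{i',h_{i'}(x')}$ together with a matching expression for $\hat{S}(x)$, so the difference collapsed to $U(x)-\Delta(x)$. In 2D a descendant of $x$ lying in $\mathcal{H}$ can be reachable along more than one generalization path, so naively summing cumulative counts overcounts exactly the counts shared by two HHH descendants. I would show that the Detect operation's inclusion-exclusion correction (the ``add back the count discounted twice'' step described in \S\ref{subsec:2d}) inserts precisely the same correction terms into $\hat{S}(x)$ that the true conditioned count $S(x)$ carries, so the two sets of correction terms cancel in the difference. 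This again yields $\hat{S}(x)-S(x)=U(x)-\Delta(x)$, after which the bound $\Pr[\hat{S}(x)-S(x)\le k\epsilon\mathcal{S}]\ge 1-\tfrac{1}{4k}$ follows exactly as in Theorem~\ref{the:1d}.

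For the coverage property I would again argue by contradiction, assuming $S_{\mathcal{H}}(x)\ge\phi\mathcal{S}$ for some $x\notin\mathcal{H}$. The first step is to re-establish $S_{\mathcal{H}}(x)\le\Delta(x)$: every descendant of $x$ not covered by an HHH has its count pushed into $x$'s hashed bucket $B(i,j)$ before $B(i,j)$ is examined, and here I must check that the two-direction push rule of the 2D Update/Detect operations (source direction along the bottom nodes, destination direction from the bottom nodes, per Figure~\ref{fig:example}) routes every such descendant to $B(i,j)$ without dropping it. Given $S_{\mathcal{H}}(x)\le\Delta(x)\le U(x)=\hat{S}_{\mathcal{H}}(x)$, the assumption forces $\hat{S}_{\mathcal{H}}(x)\ge\phi\mathcal{S}$, so $x$ fails to be reported only if it is not stored in $K_{i,j}$; then its count is pushed upward along the lattice until admitted by some ancestor, which must therefore lie in $\mathcal{H}$. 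By the definition of the conditioned count this gives $S_{\mathcal{H}}(x)=0$, the desired contradiction.

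I expect the main obstacle to be the double-counting bookkeeping in the accuracy step: verifying that the inclusion-exclusion terms produced by Detect match, term for term, the conditioned-count subtractions in the definition of $S(x)$ over the lattice, so that they cancel cleanly in $\hat{S}(x)-S(x)$. This requires a careful accounting of which cumulative counts $C_{i',h_{i'}(x')}$ are pushed to which ancestors under the 2D push rule, and is where the argument departs most from the 1D case; everything else is a direct transcription of Theorem~\ref{the:1d}.
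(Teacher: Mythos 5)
Your proposal is correct and follows essentially the same route the paper takes (its full proof is deferred to the supplementary file, but the main text makes the strategy clear): carry over Lemmas~\ref{lem:bound} and~\ref{lem:est} and the per-bucket Markov bound of Equation~\ref{eq:markov} unchanged, repeat the accuracy and coverage arguments of Theorem~\ref{the:1d}, and isolate the lattice-specific double-counting correction via inclusion-exclusion as the only genuinely new step. You correctly identify that the remaining work is verifying that the inclusion-exclusion terms in Detect cancel against the overlaps among HHH descendants and that the two-direction push rule delivers every uncovered descendant's count to $x$'s bucket, which is precisely the bookkeeping the paper's supplementary proof carries out.
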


\subsection{Comparisons with Existing Schemes}
\label{subsec:comparisons}

\noindent
{\bf Update time.} Compared to previous studies, except for RHHH
\cite{BenBasat2017} with $O(1)$ update time, \sysname achieves the same or even
a lower update time complexity even in the worst case.  For example, the schemes
in \cite{Mitzenmacher2012} process each packet at each of the $H$ nodes in the
hierarchy, and the update time complexity is $O(H\log\tfrac{1}{\epsilon})$ (for
heap-based implementation) and $O(H)$ (using unitary updates).  The schemes in
\cite{Cormode2008} have an amortized update time complexity
$O(H\log{(\epsilon\mathcal{S})})$. 

Note that our update time analysis for \sysname in Theorem~\ref{the:1dcom}
focuses on the worst case (i.e., each packet update takes $O(H)$ time), yet
our evaluation (i.e.  Experiments~4 and 5 in \S\ref{sec:evaluation})
shows that \sysname can terminate the Update operation for most packets at low
levels under the skewness of IP traffic. The amortized update time of \sysname
in our evaluation for real-world traces is much smaller than $O(H)$.

\paragraph{Accuracy.} 
Prior studies pose strong accuracy guarantees for the estimated count
$\hat{S}(x)$ of key $x$.  For example, the HHH detection schemes in 
\cite{Cormode2008, Mitzenmacher2012} guarantee that the error between
$\hat{S}(x)$ and $S(x)$ is at most $\epsilon\mathcal{S}$, while randomized
HHH (RHHH) \cite{BenBasat2017} keeps the error within $\epsilon\mathcal{S}$
with a probability of at least $1-\delta$.  In contrast, \sysname relaxes the
accuracy guarantee for high update performance, in which $\hat{S}(x)$ is close
to $S(x)$ and deviates much from $S(x)$ with a small probability.  

We can improve the accuracy of \sysname by maintaining multiple rows of
buckets for each node of the hierarchy, at the expense of degrading the
processing speed and increasing the update complexity.   The reasons are 
two-fold.  First, the process of each key in an array of \sysname requires
multiple memory accesses, as \sysname needs to insert each processed key into
multiple arrays to reduce the errors caused by hash collisions. Second, there
could be multiple keys being kicked out from an array after a key is updated, and
if it happens, we need to further check whether each of the kicked-out keys
remains in the other rows of the array except for the row where the key is
kicked out. When we push the kicked-out keys that are no longer in the array
to the subsequent higher-level arrays, each of these keys may also kick out
multiple keys.  Thus, maintaining multiple rows of buckets for each node of
the hierarchy incurs high update overhead.  Nevertheless, even though the
one-row-array design of \sysname relaxes the accuracy guarantee, our
evaluation (i.e., Experiments~1 and 2 in \S\ref{sec:evaluation}) shows
that \sysname achieves high accuracy.

\section{Evaluation} 
\label{sec:evaluation}

We compare \sysname with
six state-of-the-art HHH detection schemes, including: trie-based HHH
detection (TRIE) \cite{Zhang2004}, full ancestry (FULL) \cite{Cormode2008},
partial ancestry (PARTIAL) \cite{Cormode2008}, heap-based Space Saving (HSS)
\cite{Mitzenmacher2012}, unitary-update-based Space Saving (USS)
\cite{Mitzenmacher2012}, and randomized HHH (RHHH) \cite{BenBasat2017}; the
first five schemes are streaming-based, while RHHH is sampling-based
(\S\ref{sec:introduction}).  We show that \sysname achieves (i) high detection
accuracy, (ii) high update throughput, (iii) small convergence time, and (iv)
limited resource usage in a Tofino switch \cite{tofino}.  

\subsection{Methodology}
\label{subsec:methodology}

\noindent
{\bf Traces.} We use the real-world traces from CAIDA \cite{caida}, captured on
an OC-192 backbone link in January 2019. Note that CAIDA traces are also used
for evaluating HHH detection in both networking \cite{BenBasat2017} and
database \cite{Cormode2008,Mitzenmacher2012} communities. 
By default, we use the first five minutes of
the traces for evaluation and divide them into five one-minute epochs, each of
which has 36.7\,M packets and 1.1\,M unique IPv4 addresses on average; in
Experiment~6, we vary the number of epochs and the epoch length. We perform
HHH detection in each epoch and obtain the average results over all epochs.  

We also consider IPv6 traffic from both the CAIDA traces and the
IPv6 traces from MAWI's WIDE project\cite{wide}.  We find that \sysname shows
similar trends on both IPv4 and IPv6 traffic compared with state-of-the-arts.
For brevity, we focus on IPv4 traffic in the CAIDA traces in this section, and
report the findings for IPv6 traffic in the supplementary file.

\paragraph{Parameter settings.} We configure the number of buckets (i.e.,
$w_i$) in each array $A_i$ of \sysname for a given available memory size,
where $0\le i< H$.  We first calculate the average number of buckets, denoted
by $w_{avg}$, for each array of \sysname based on the bucket size (e.g, a
bucket consumes 16~bytes, with four bytes for each field, for 1D-byte HHH
detection) and the number of nodes $H$ in the hierarchy (e.g., five nodes for
1D-byte HHH detection).  We set $w_i$ for each array $A_i$, starting from the
top level~$H-1$ of the hierarchy.  If a level has a key space size smaller
than $w_{avg}$ (e.g., the highest level $H-1$ has only the wildcard element),
we set $w_i$ as the key space size and update $w_{avg}$ by averaging the
residual available memory size among the remaining arrays; otherwise, we set
$w_i=w_{avg}$. 

We configure the memory sizes for HSS, USS, and RHHH based on the fractional
threshold $\phi$ to ensure that there is enough memory to store the maximum
possible number of HHHs; a large $\phi$ implies a small number of true HHHs in
an epoch, which also implies less memory usage to store all HHHs.  We also
configure the maximum memory sizes for FULL and PARTIAL based on $\phi$.  Note
that the memory sizes for FULL and PARTIAL vary in an epoch, as they
dynamically expand and shrink their counter arrays during packet processing to
keep only the large keys in the arrays.

We consider 1D-byte, 1D-bit, 2D-byte, and 2D-bit HHH detection.  
We only present the results for packet counting (i.e., $v_f=1$) in the interest
of space, while \sysname shows similar results for byte counting.  For TRIE, we
only evaluate it for 1D cases, due to its high space complexity and low
accuracy for 2D cases. We implement hash functions using MurmurHash
\cite{murmurhash} in all schemes.  

\subsection{Results}
\label{subsec:results}

\begin{figure*}[!t]
\centering
\begin{tabular}{@{\ }c@{\ }c@{\ }c@{\ }c}
\multicolumn{4}{c}{\includegraphics[width=3.2in]{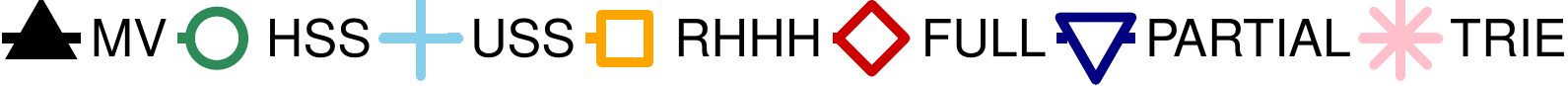}} \\
\includegraphics[width=1.7in]{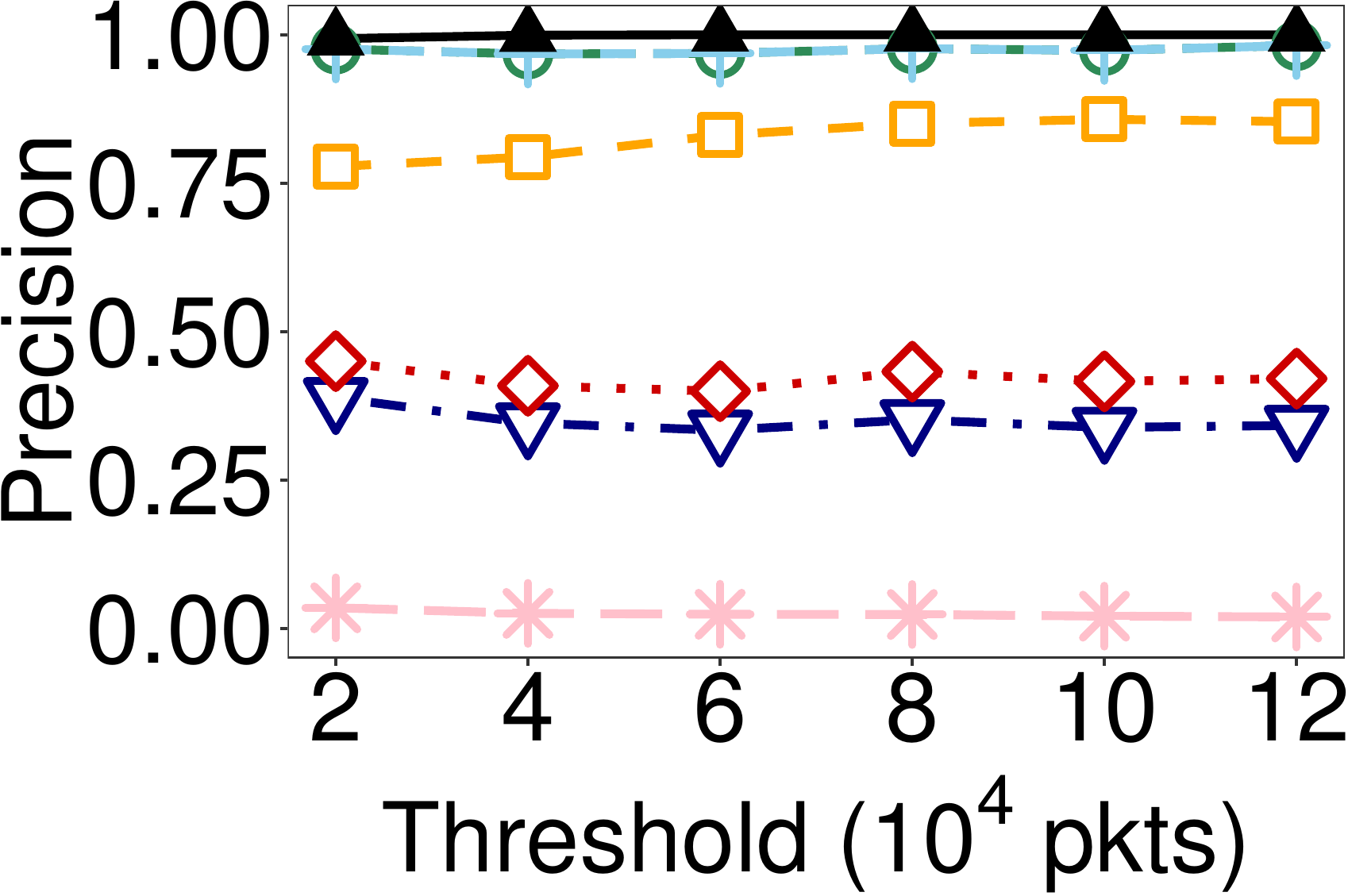} &
\includegraphics[width=1.7in]{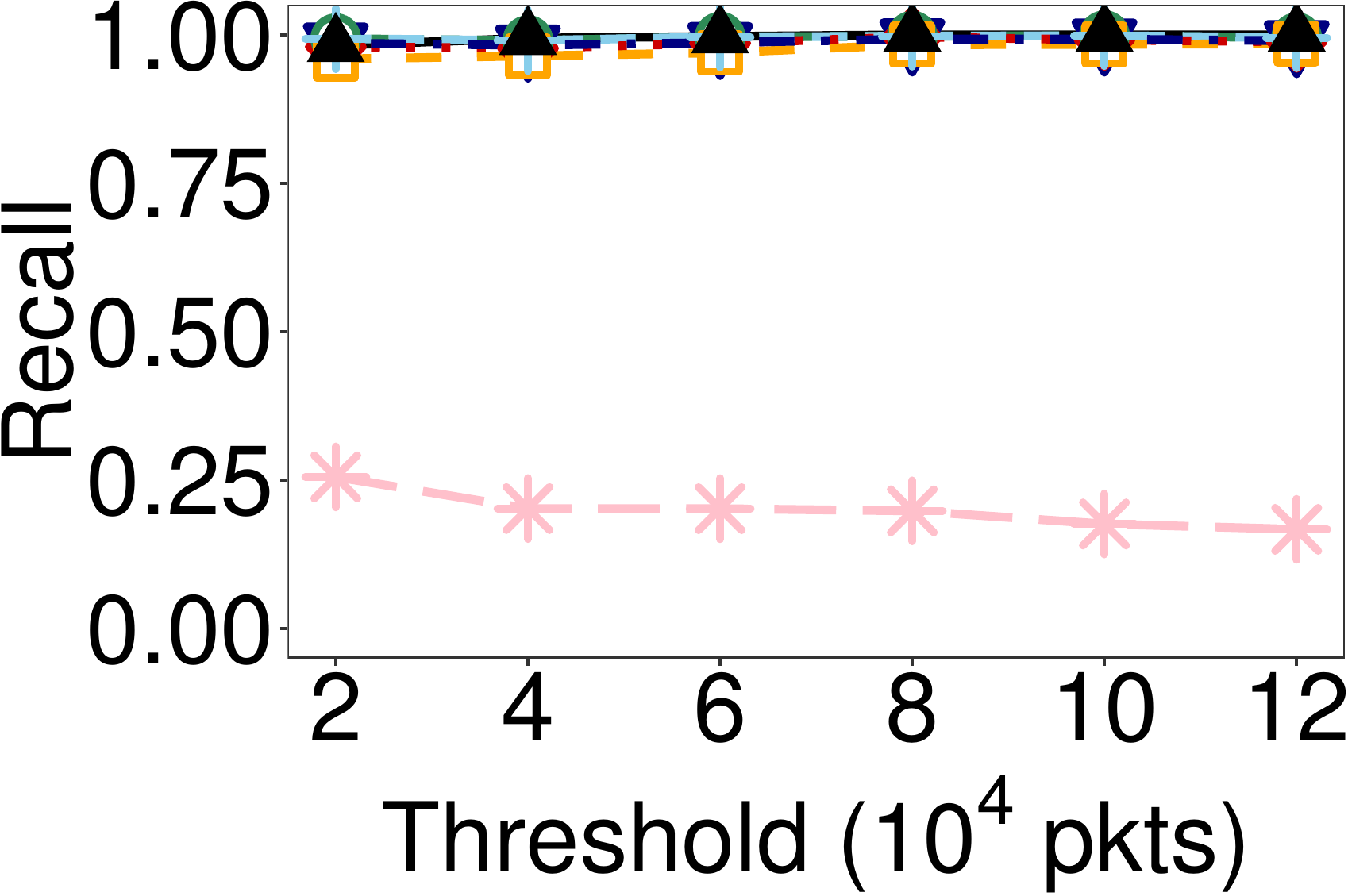} &
\includegraphics[width=1.7in]{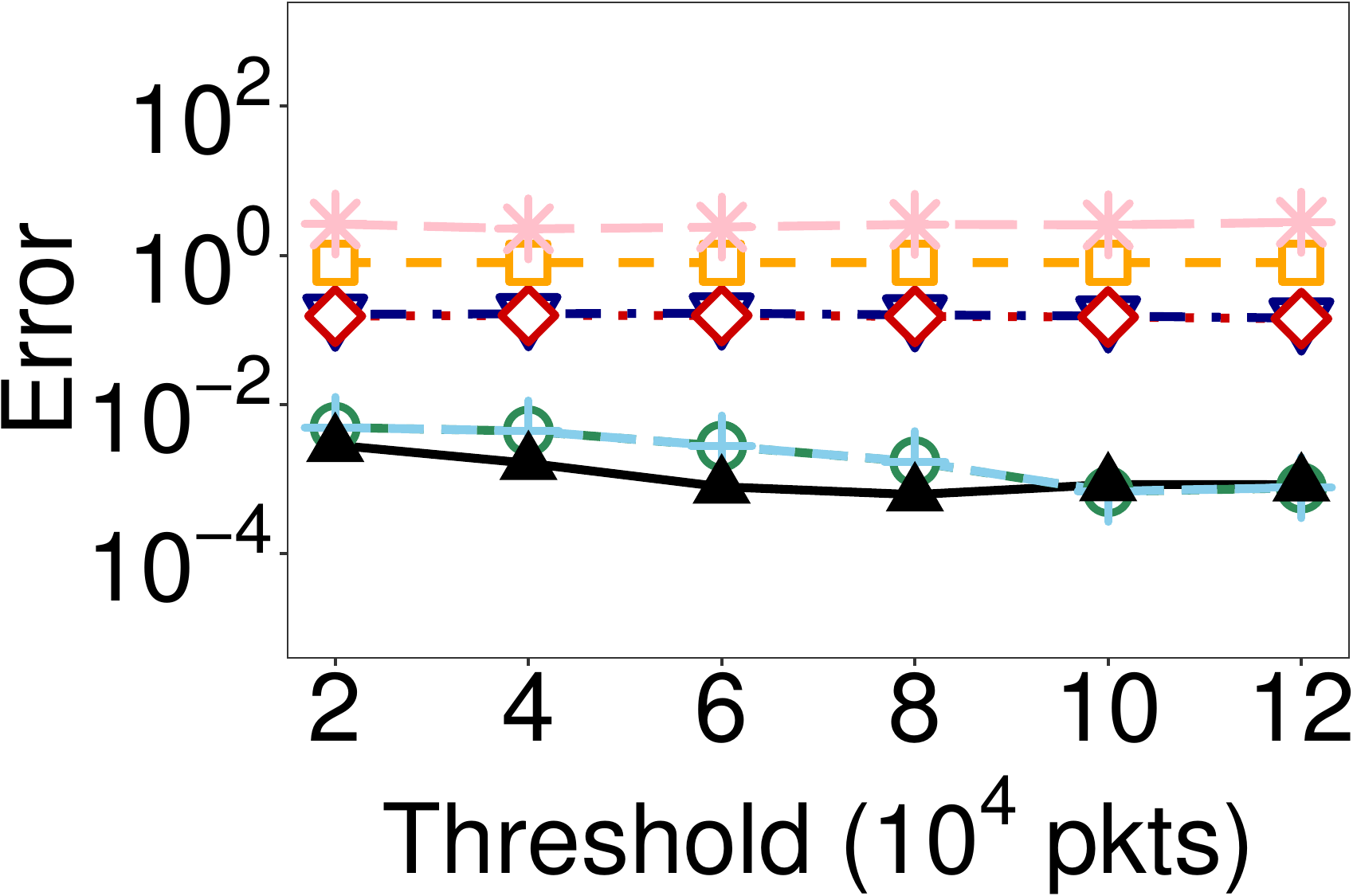} &
\includegraphics[width=1.7in]{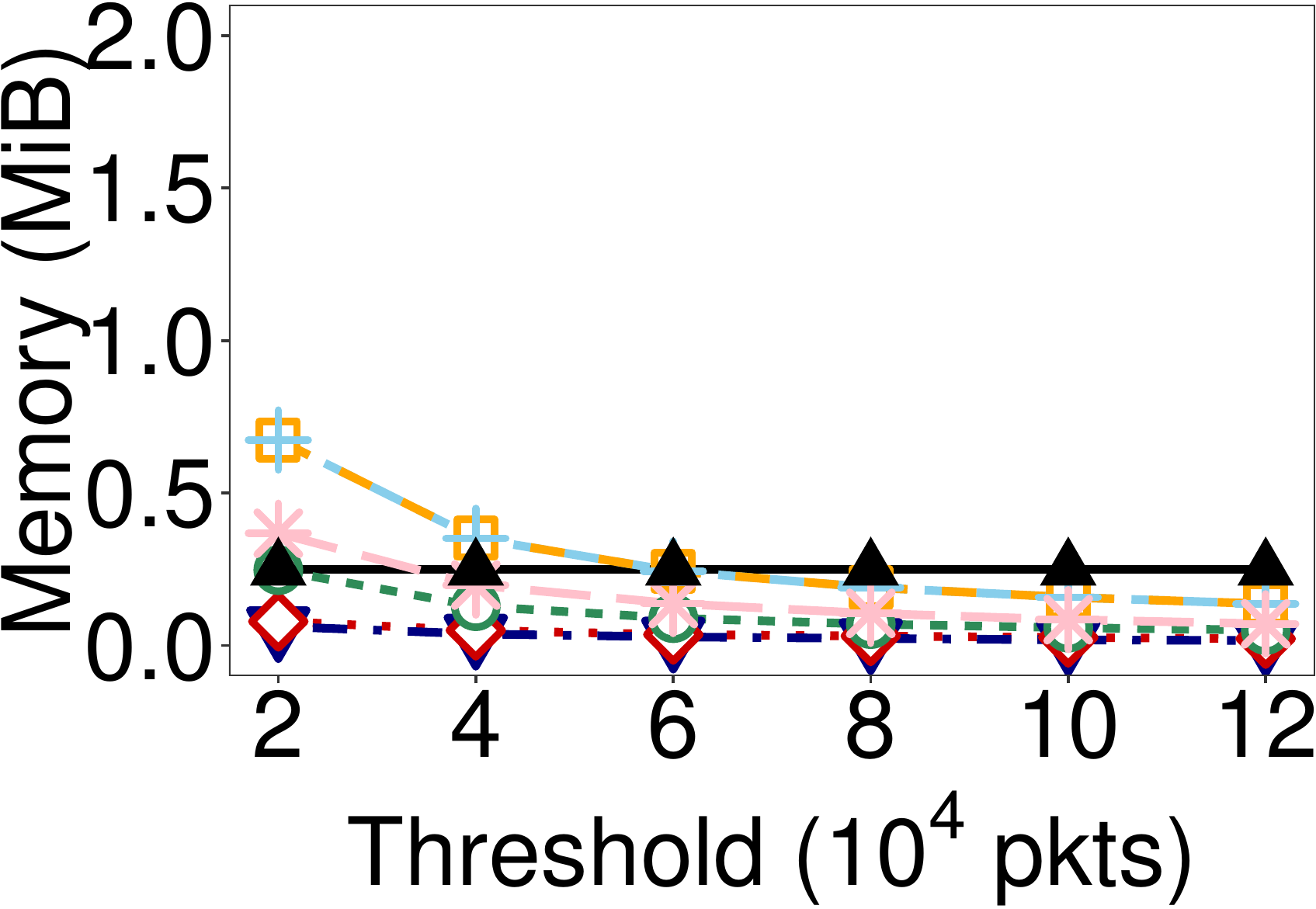}
\vspace{-3pt}\\
{\small (a) Precision for 1D-byte} & 
{\small (b) Recall for 1D-byte} &
{\small (c) Error for 1D-byte} &
{\small (d) Memory for 1D-byte}  
\vspace{3pt}\\
\includegraphics[width=1.7in]{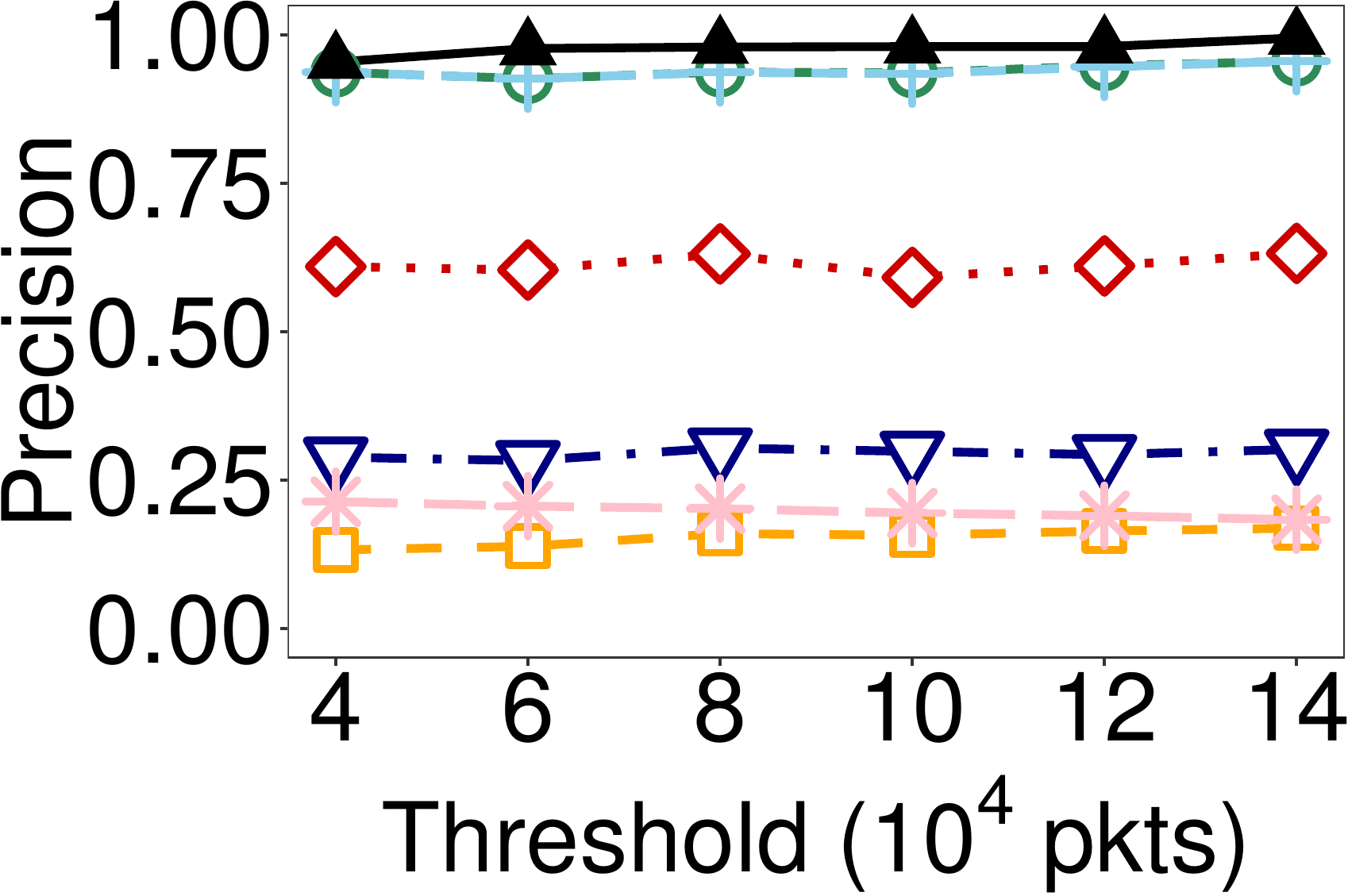} &
\includegraphics[width=1.7in]{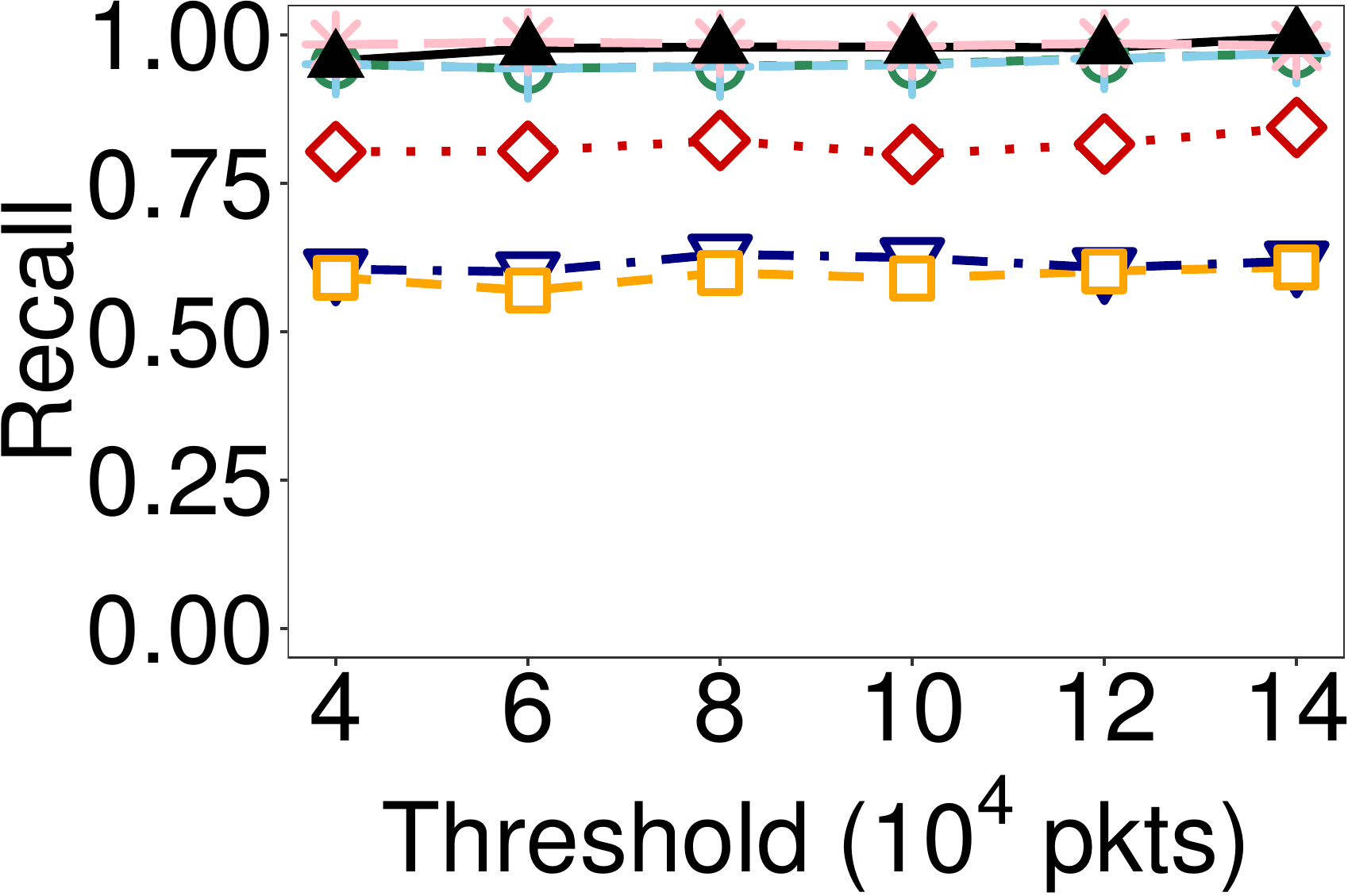} &
\includegraphics[width=1.7in]{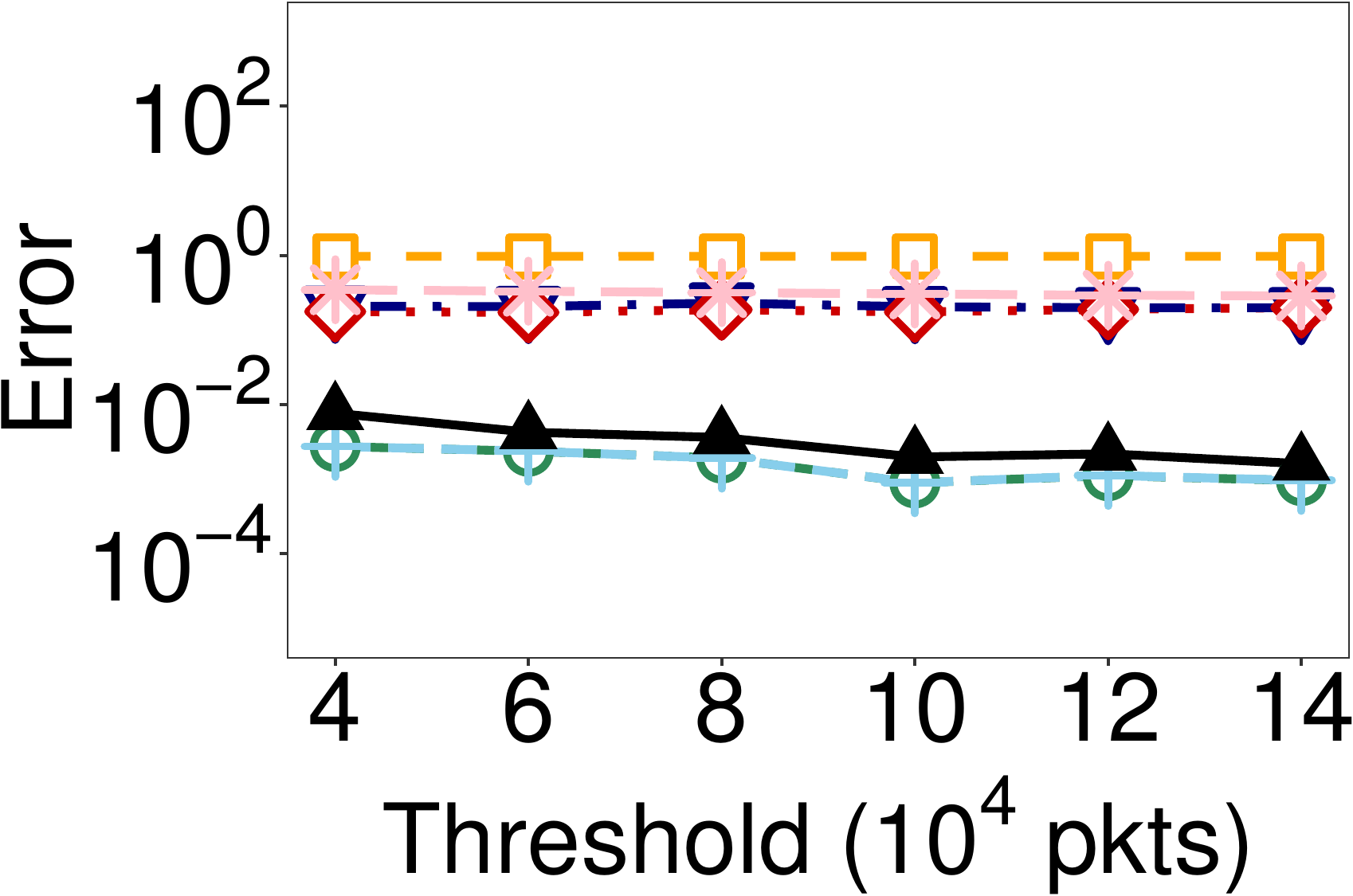} &
\includegraphics[width=1.7in]{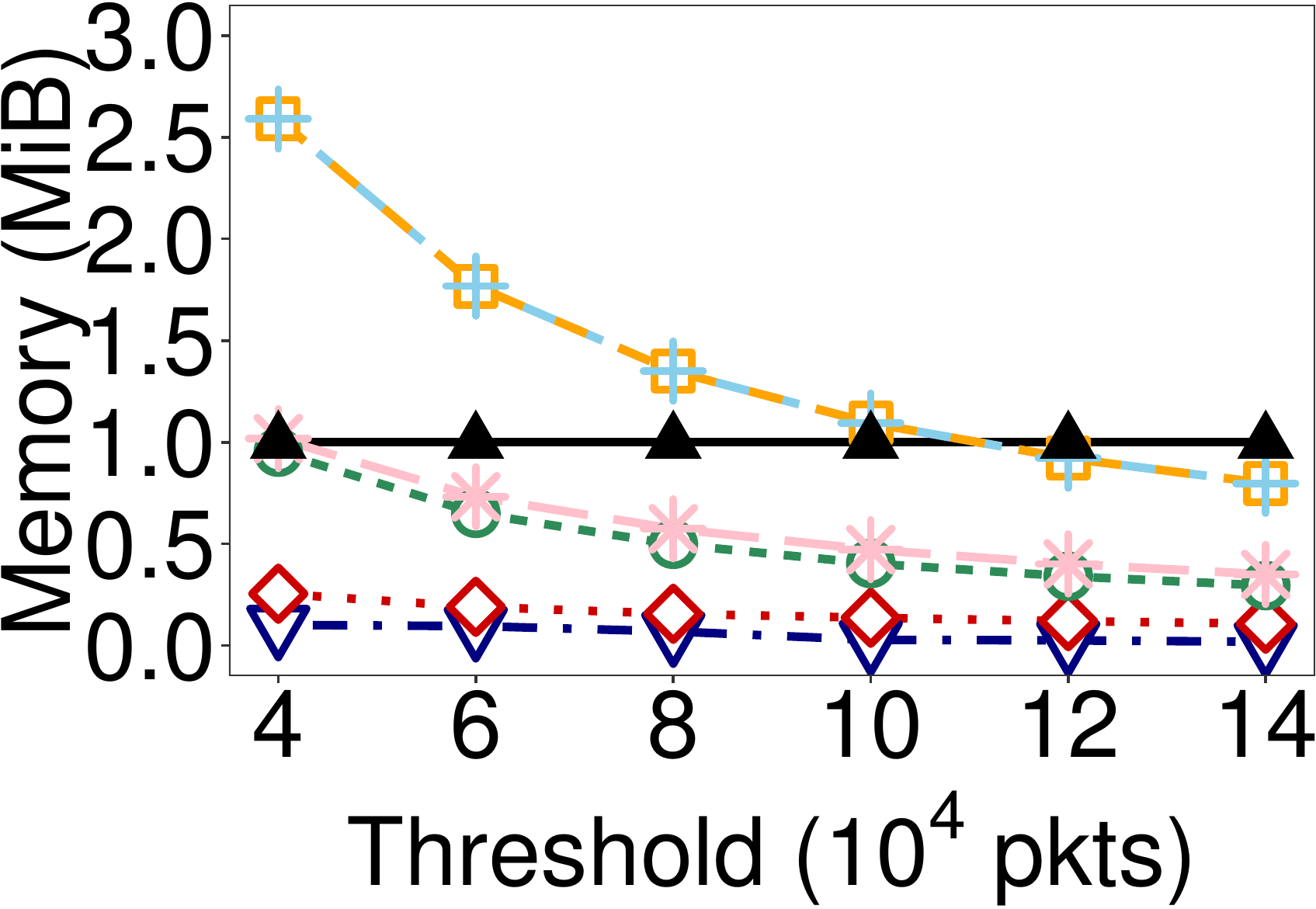}
\vspace{-3pt}\\
{\small (e) Precision for 1D-bit} & 
{\small (f) Recall for 1D-bit} &
{\small (g) Error for 1D-bit} &
{\small (h) Memory for 1D-bit}  
\vspace{3pt}\\
\includegraphics[width=1.7in]{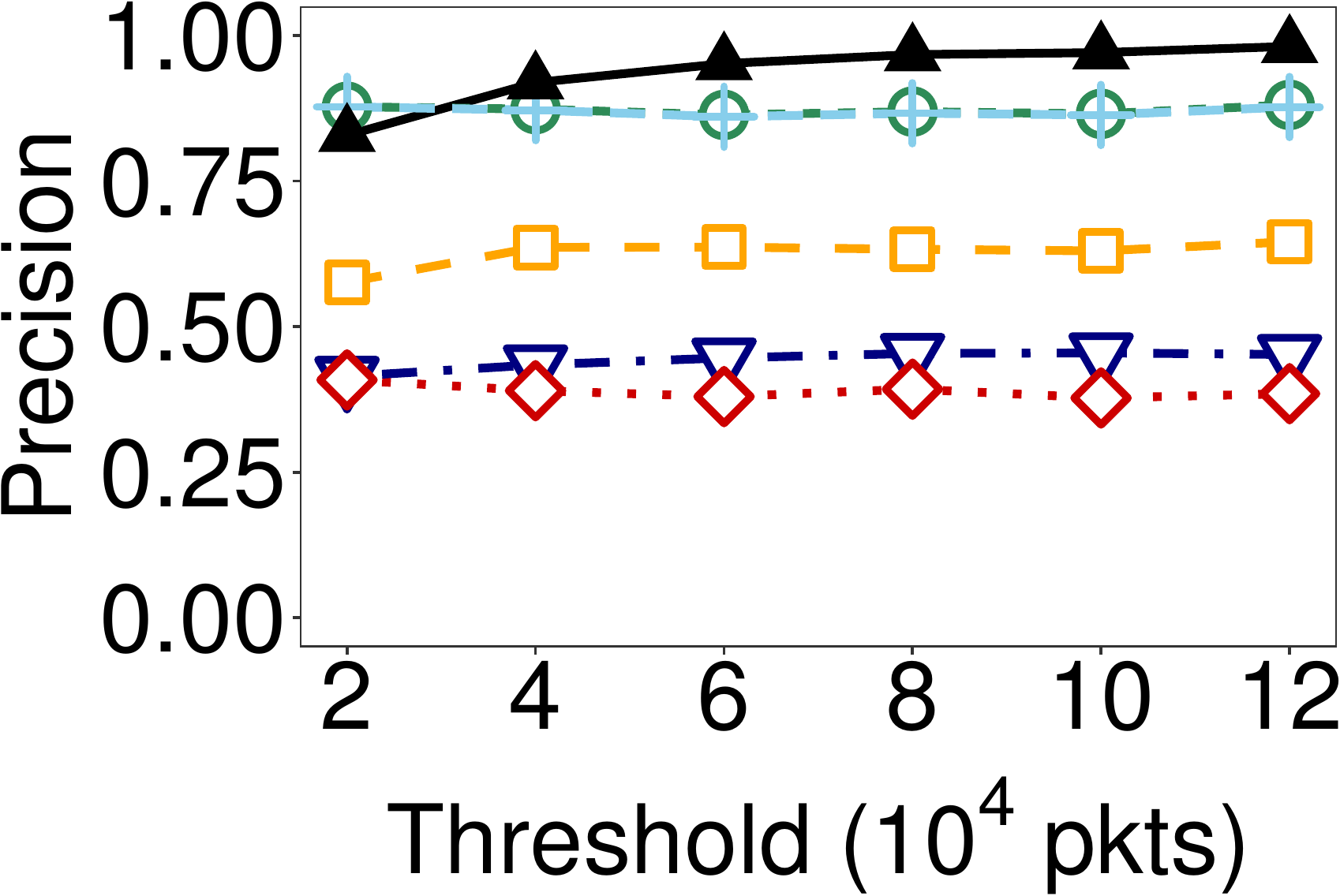} &
\includegraphics[width=1.7in]{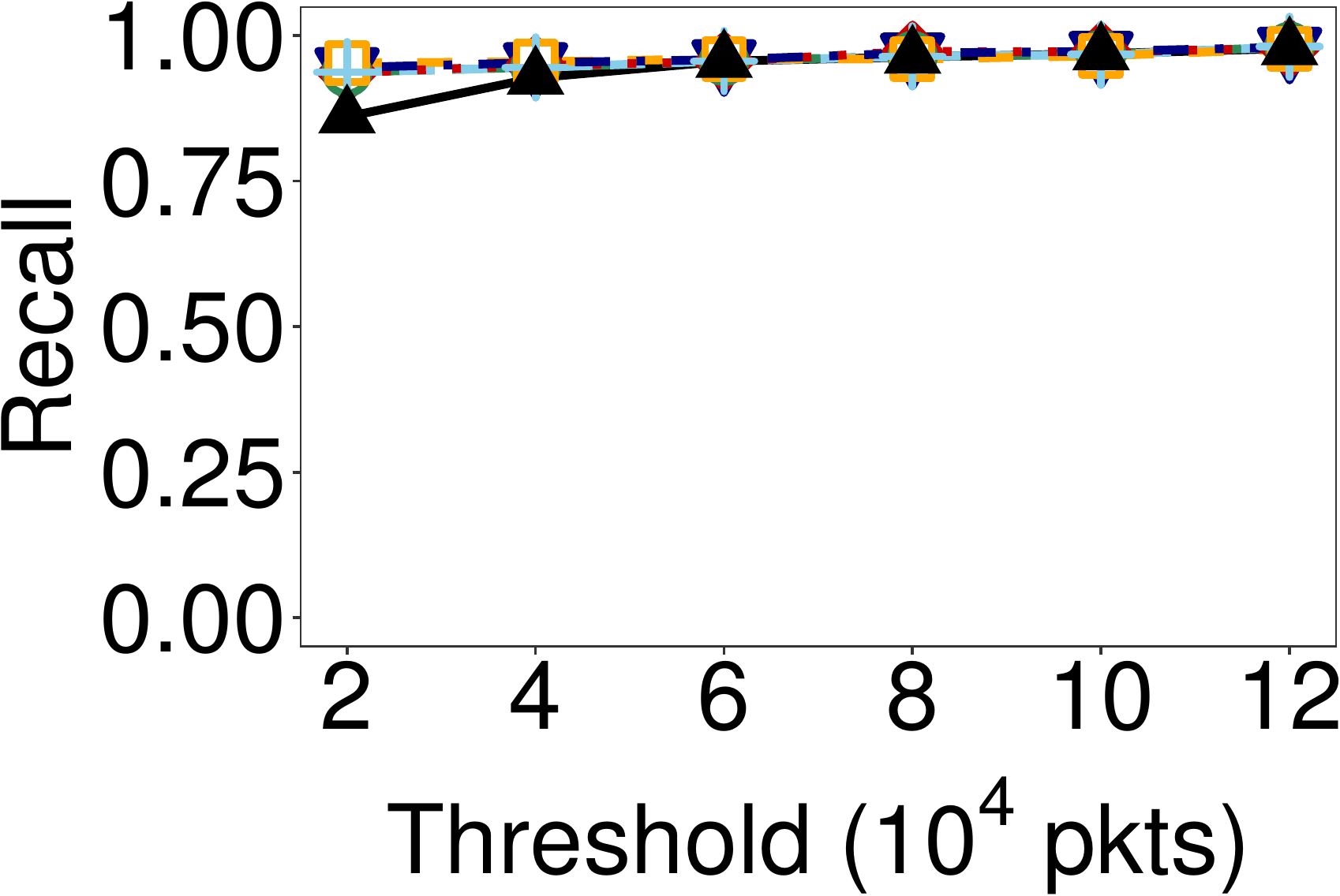} &
\includegraphics[width=1.7in]{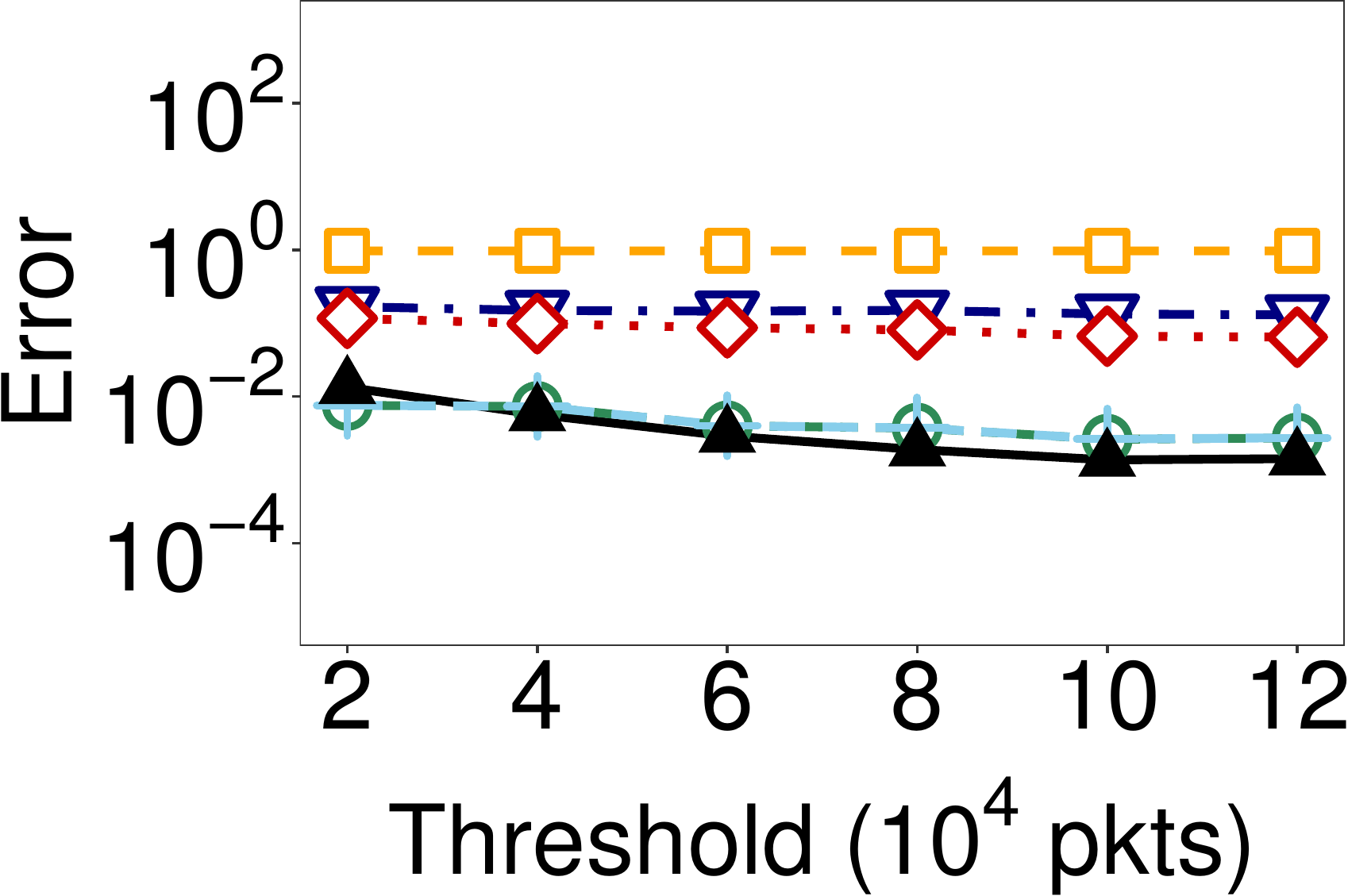} &
\includegraphics[width=1.7in]{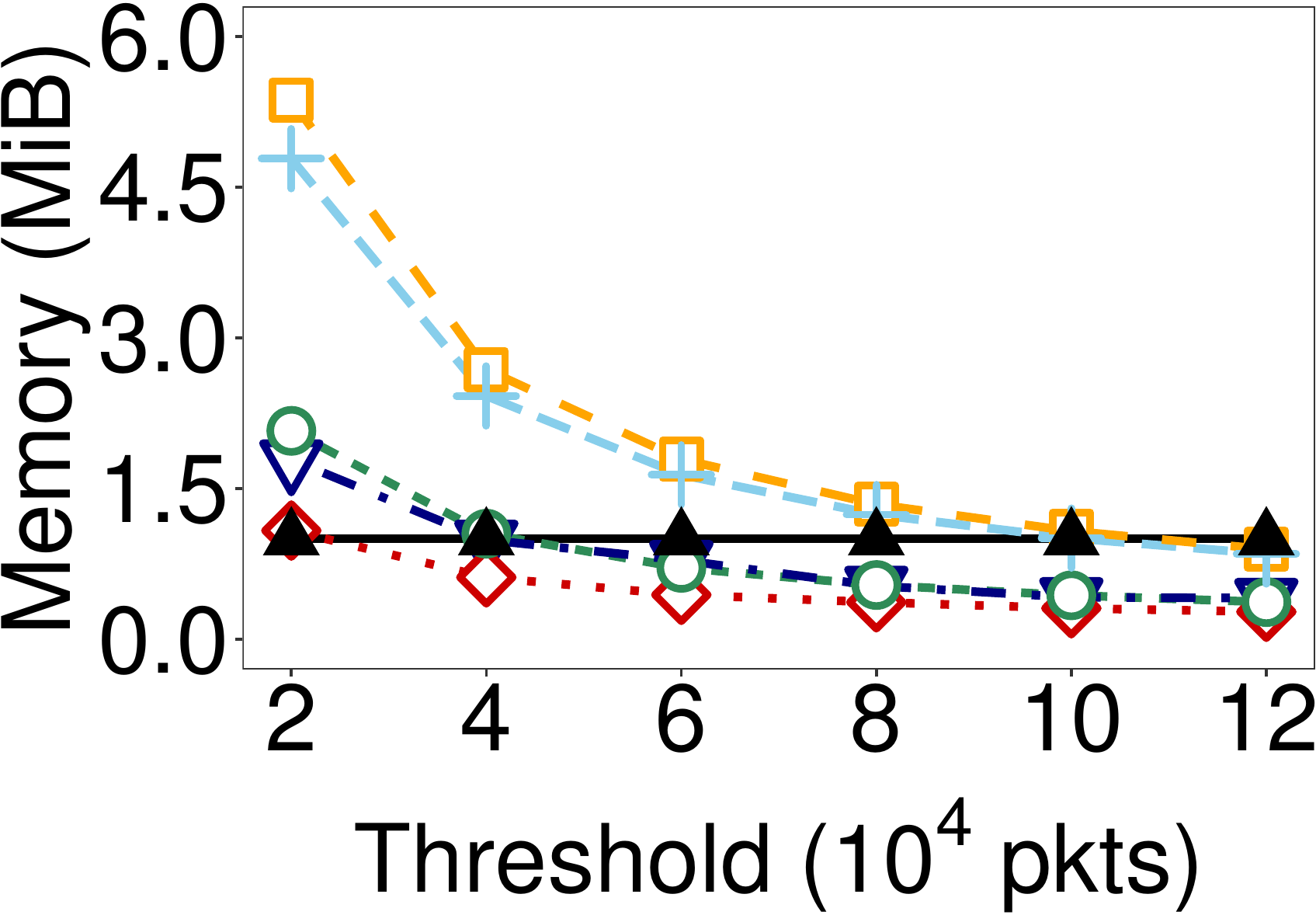}
\vspace{-3pt}\\
{\small (i) Precision for 2D-byte} & 
{\small (j) Recall for 2D-byte} &
{\small (k) Error for 2D-byte} &
{\small (l) Memory for 2D-byte}  
\vspace{3pt}\\
\includegraphics[width=1.7in]{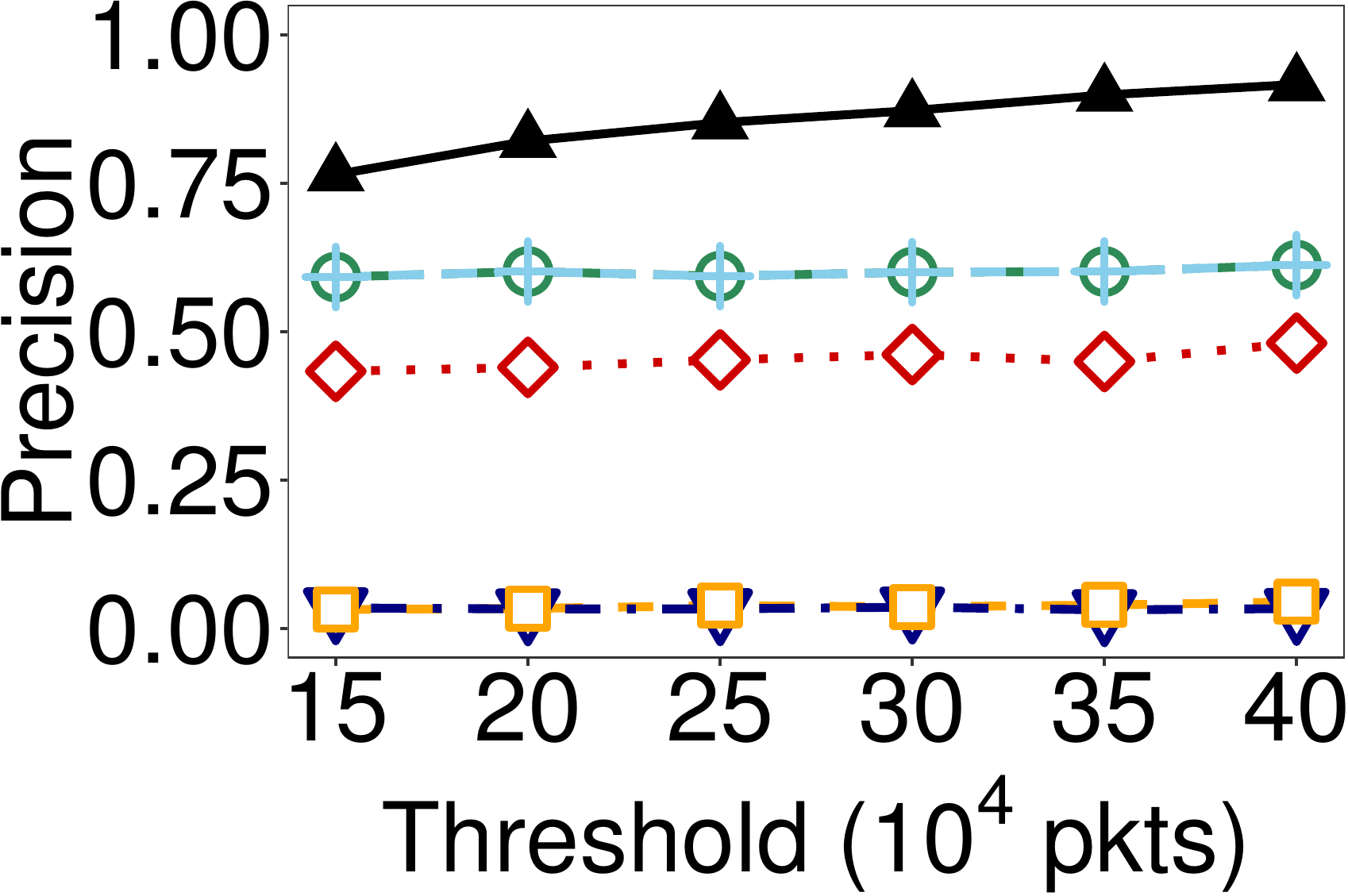} &
\includegraphics[width=1.7in]{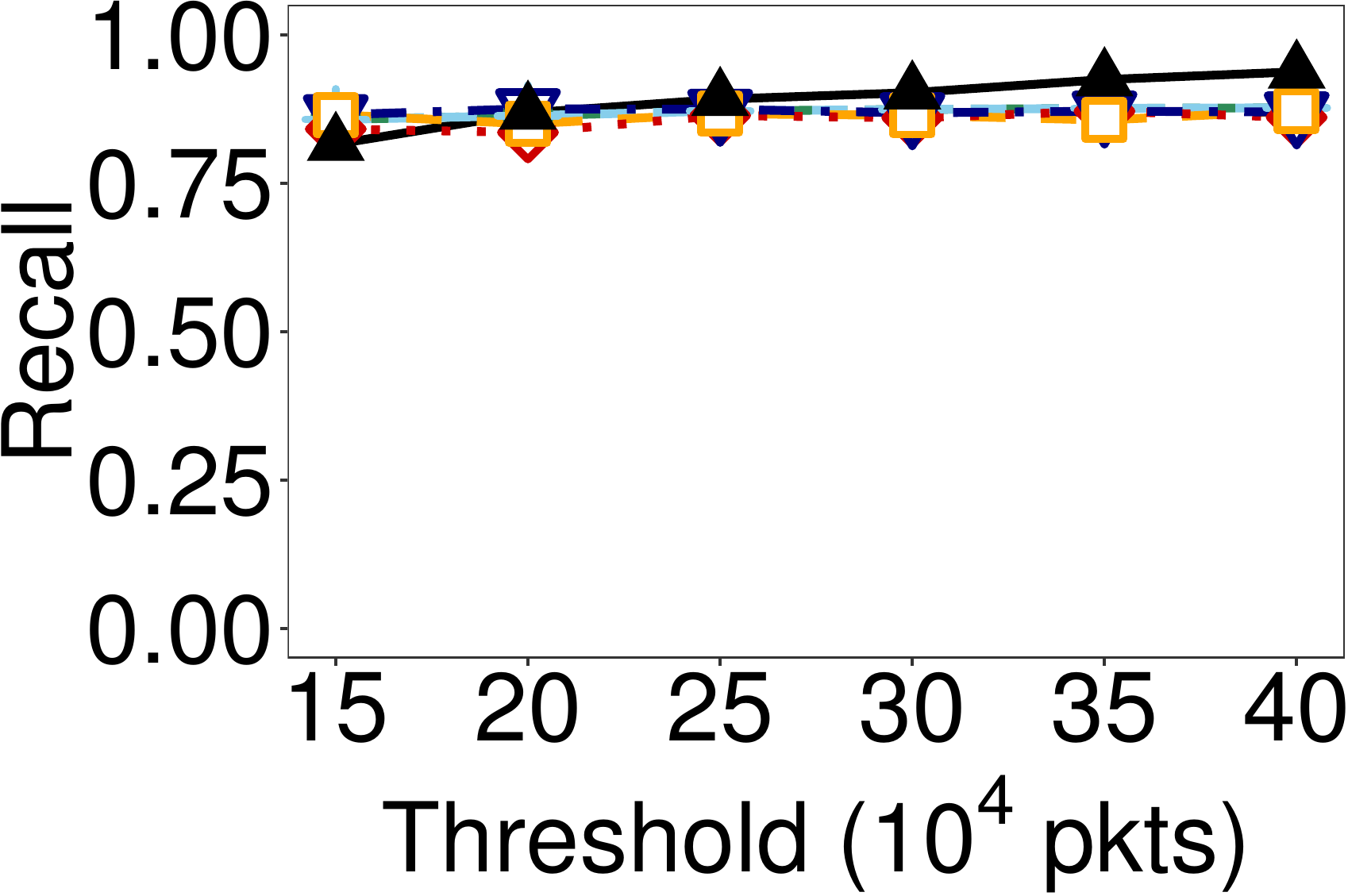} &
\includegraphics[width=1.7in]{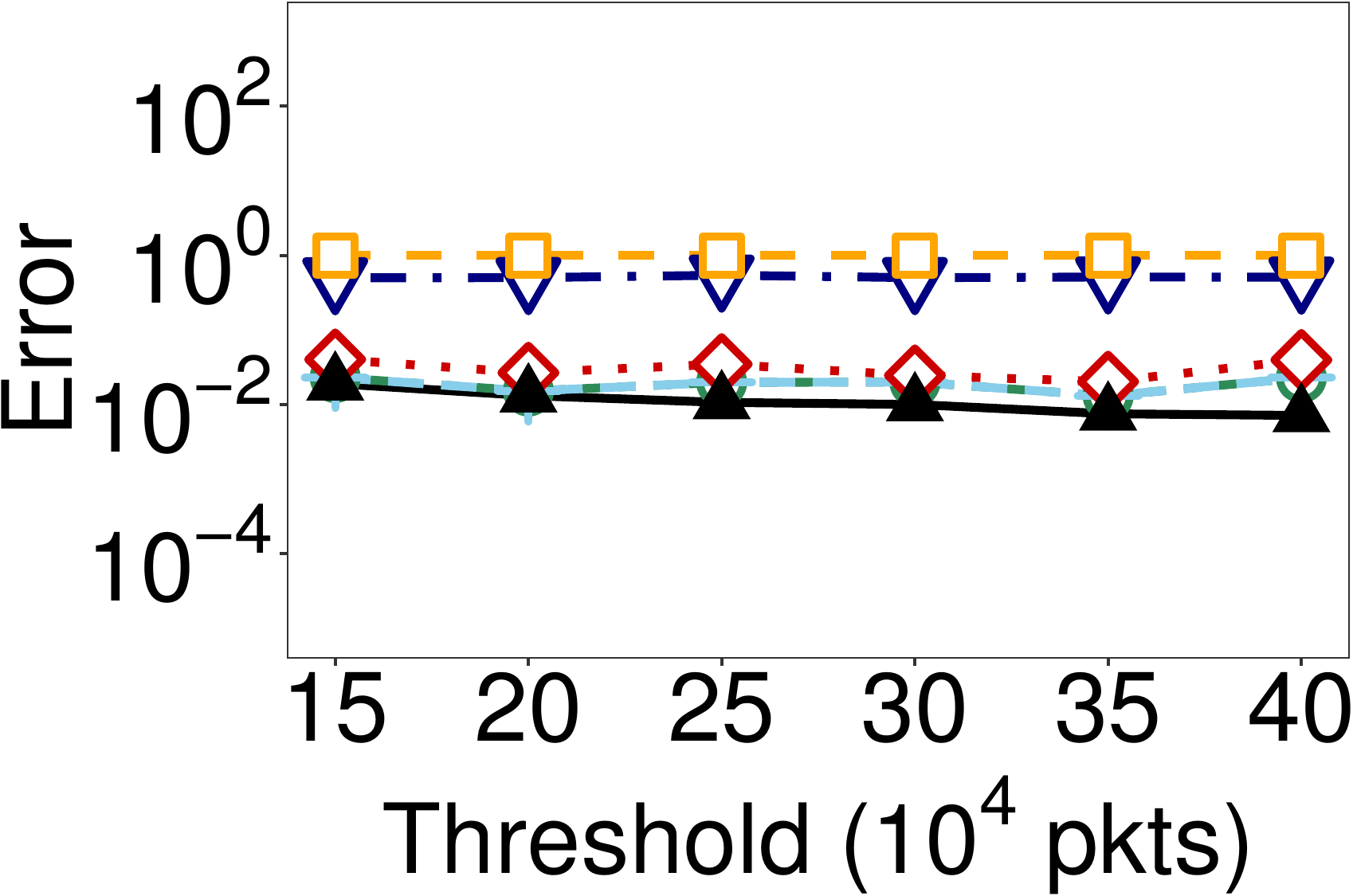} &
\includegraphics[width=1.7in]{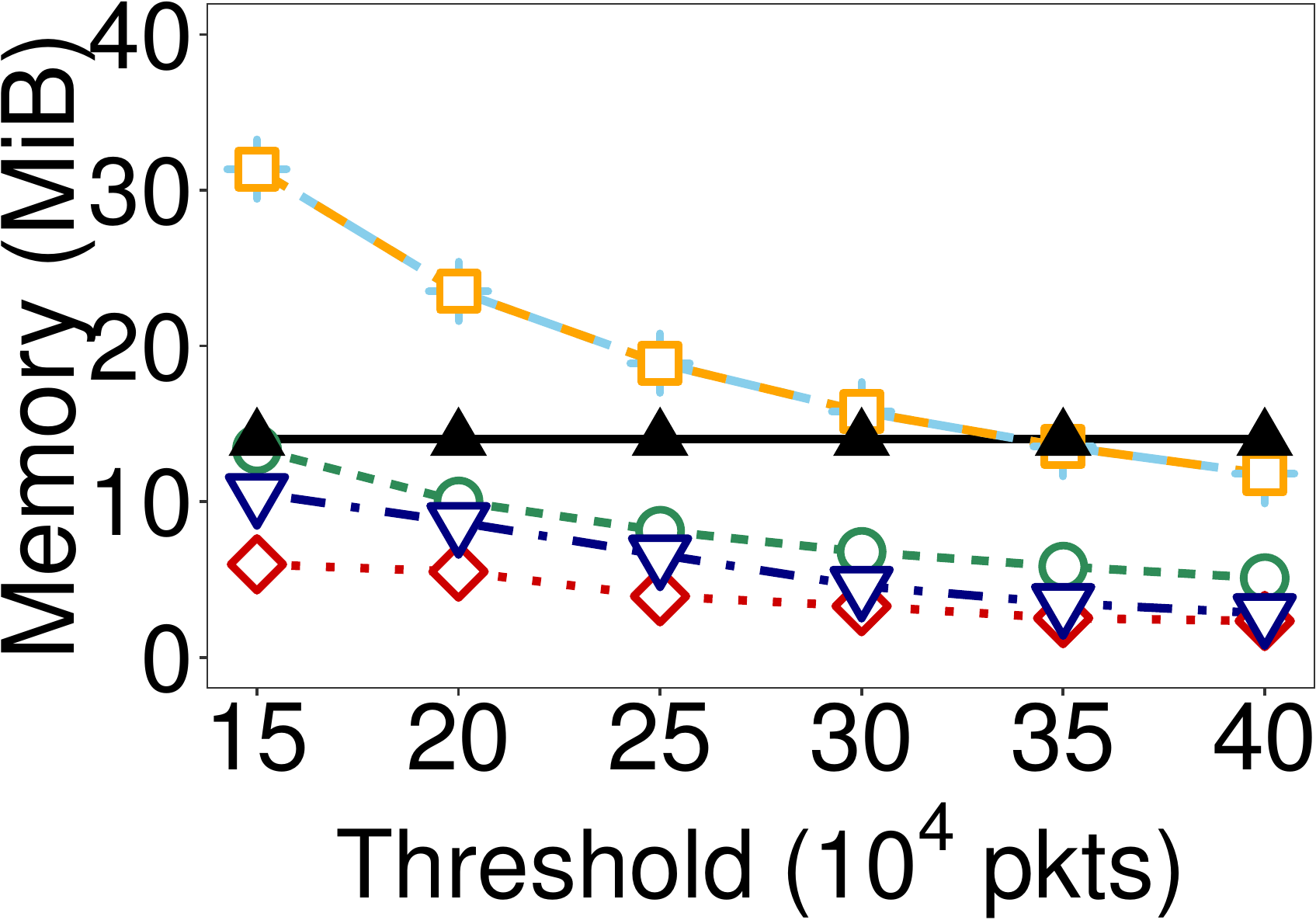}
\vspace{-3pt}\\
{\small (m) Precision for 2D-bit} & 
{\small (n) Recall for 2D-bit} &
{\small (o) Error for 2D-bit} &
{\small (p) Memory for 2D-bit}  
\end{tabular}
\vspace{-3pt}
\caption{(Experiment~1) Accuracy comparisons.} 
\label{fig:exp-1}
\vspace{-9pt}
\end{figure*}

\noindent
{\bf (Experiment~1) Accuracy comparisons.} We compare different HHH
detection schemes on accuracy versus different values of the 
absolute threshold $\phi\mathcal{S}$. We fix the memory space of \sysname as
256\,KiB, 1\,MiB, 1\,MiB, and 16\,MiB for 1D-byte, 1D-bit, 2D-byte, and 2D-bit
HHH detection, respectively.  We consider different absolute thresholds, such
that the number of true HHHs per epoch varies between 200 and 1,000.  

We consider three accuracy metrics: (i) {\em precision}, the ratio of true
HHHs reported over all reported HHHs (the denominator includes all true and
false HHHs); (ii) {\em recall}, the ratio of true HHHs reported over
all true HHHs (the denominator includes all reported and non-reported true
HHHs); and (iii) {\em relative error}, defined as
$\tfrac{1}{|\mathcal{H}|}\sum_{x\in \mathcal{H}}
\tfrac{|\hat{S}(x)-S(x)|}{S(x)}$, where $\mathcal{H}$ is the set of true HHHs
reported.  Note that an HHH is identified by both its prefix and subnet mask.
For example, it is treated as an error if an HHH 1.2.3.4/32 is reported as
1.2.3.4/31 in 1D-bit HHH detection. 

We also measure the memory usage of each scheme based on the number of
counters allocated in its data structure.  As both FULL and PARTIAL
dynamically allocate memory space in each epoch, we report their peak memory
usage.  

Figure~\ref{fig:exp-1} shows the results.  \sysname achieves higher accuracy
in most cases compared to others in all cases.
RHHH achieves a precision below 0.85 and 0.25 for byte-level and
bit-level HHH detection, respectively, with a relative error of around 100\%.
The reason is that RHHH has slow convergence and needs to process sufficient
packets in order to converge to high accuracy (see Experiment~6 for further
analysis).  Both HSS and USS have comparable accuracy to \sysname in 1D-byte
and 1D-bit HHH detection, yet their precisions are significantly lower than
\sysname in 2D-byte and 2D-bit HHH detection (e.g., their precisions are
around 0.6 in 2D-bit precision), mainly because they estimate the conditioned
count of a key in a more conservative way.  TRIE, FULL, and PARTIAL have low
accuracy in all settings. We observe that the accuracy of
\sysname increases with the threshold (i.e., fewer HHHs), while those of other
schemes remain almost the same for all thresholds.  The reason is that
\sysname adopts static memory allocation and its memory size is fixed for all
thresholds, while the memory sizes of other schemes decrease as the threshold
increases (\S\ref{subsec:methodology}).

For memory usage, \sysname maintains a medium size of memory usage among
all schemes.  RHHH and USS have the highest memory usage in most cases, as
they implement multiple Space Saving instances \cite{Metwally2005}, each of
which comprises a hash table and multiple doubly linked lists.  FULL and
PARTIAL have the smallest memory usage, as they dynamically kick out small
keys and keep only large keys in their counter arrays; however, such dynamic
memory allocation incurs high update overhead (Experiment~3). 

\paragraph{(Experiment~2) Robustness of \sysname under various memory sizes.}
We evaluate \sysname versus the absolute threshold
$\phi\mathcal{S}$ by varying the memory size allocated for \sysname.
We configure the memory size in the range from 256\,KiB to
2\,MiB for 1D-byte, 1D-bit, and 2D-byte HHH detection, while increasing the
memory size to the range from 8\,MiB to 14\,MiB for 2D-bit HHH detection for 
tracking many more nodes in the 2D-bit hierarchy.

Figure~\ref{fig:exp-2} shows the results. As expected, \sysname achieves
higher accuracy with larger memory sizes.  Also, the accuracy of \sysname is
fairly robust in different cases. For example, with a memory size of 1\,MiB,
both the precision and recall of \sysname are above 0.9 for most of the
absolute threshold settings in 1D-bit, 1D-byte, and 2D-byte HHH detection.  

\begin{figure*}[!t]
\centering
\begin{tabular}{@{\ }c@{\ }c@{\ }c@{\ }c}
\includegraphics[width=1.7in]{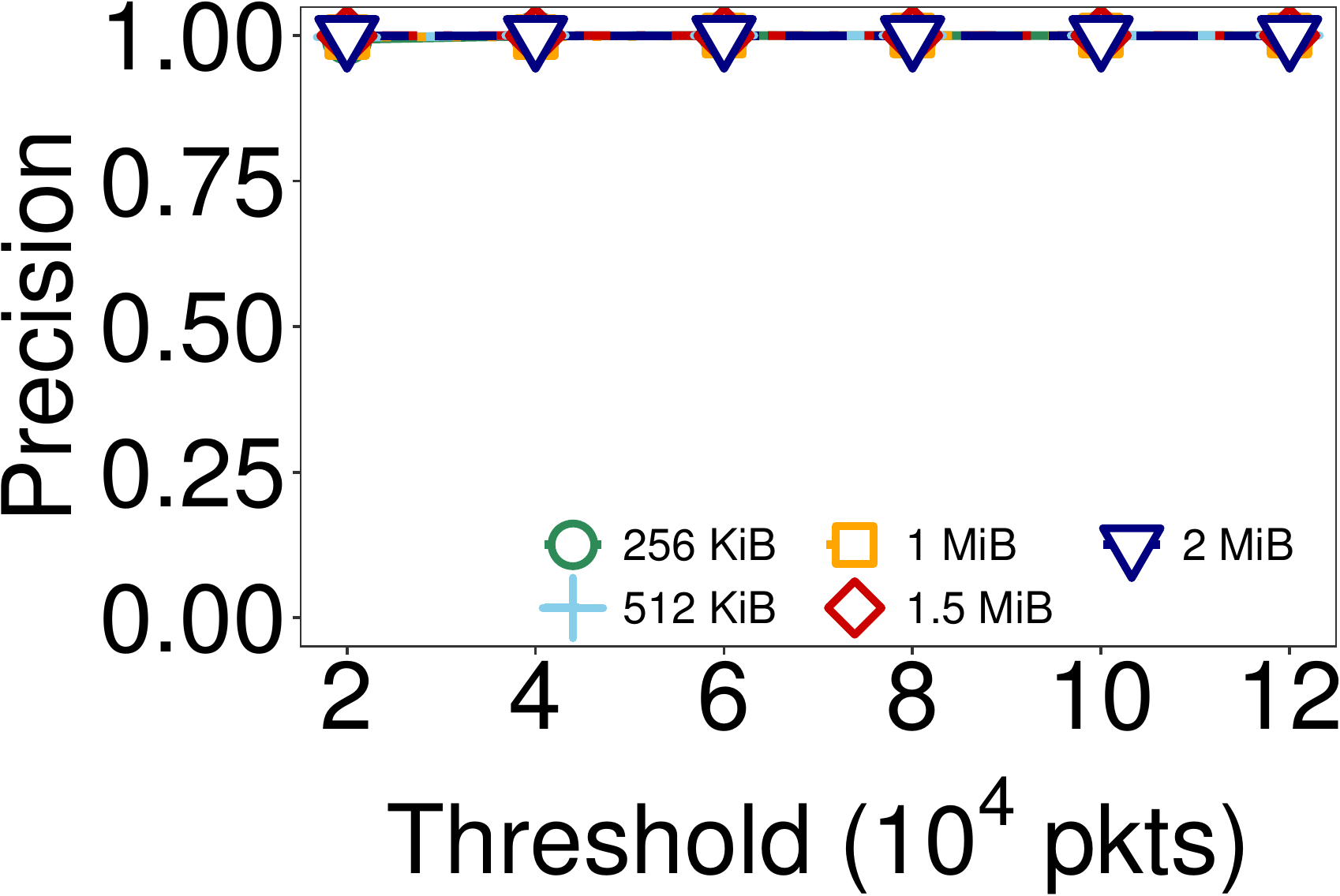} &
\includegraphics[width=1.7in]{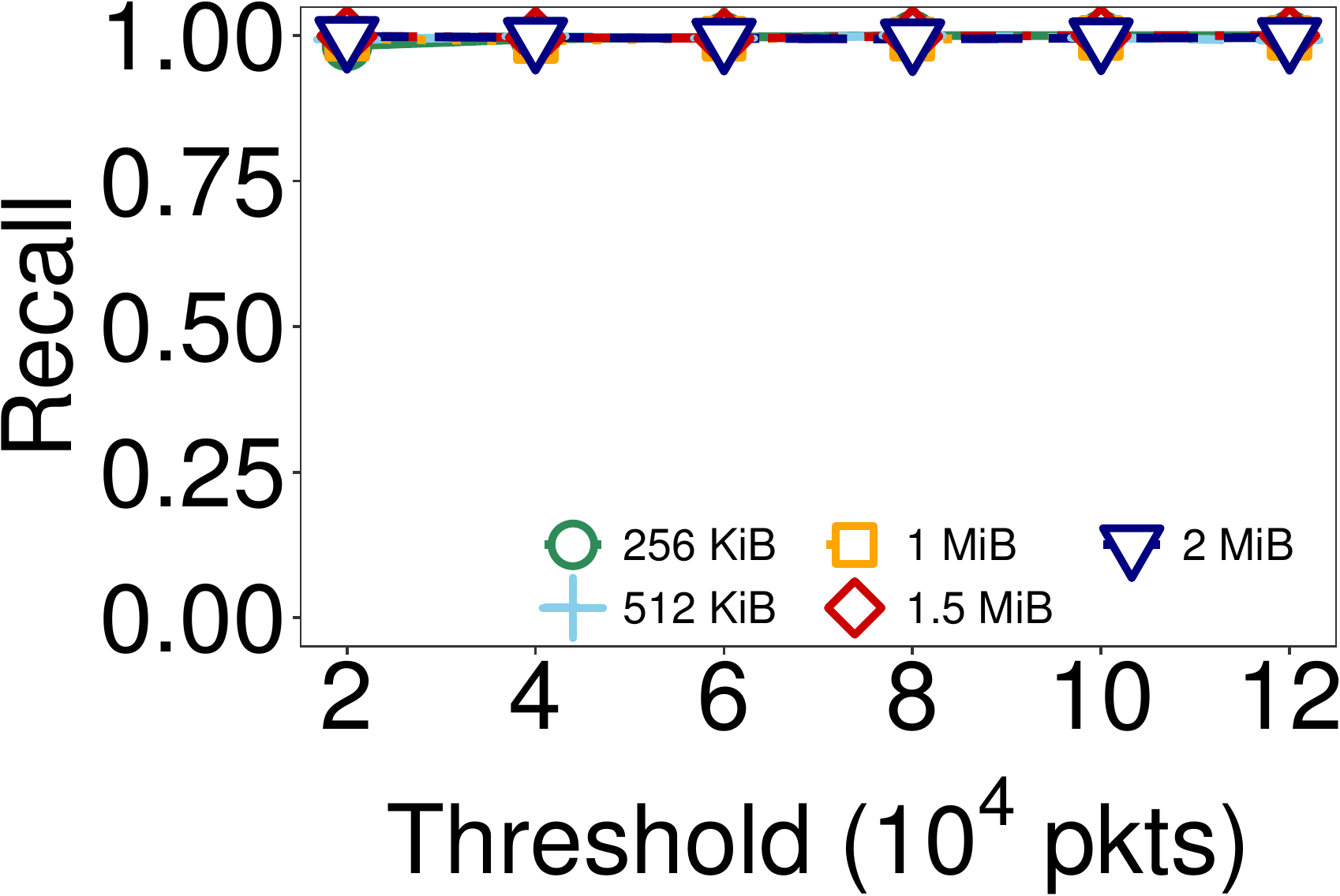} & 
\includegraphics[width=1.7in]{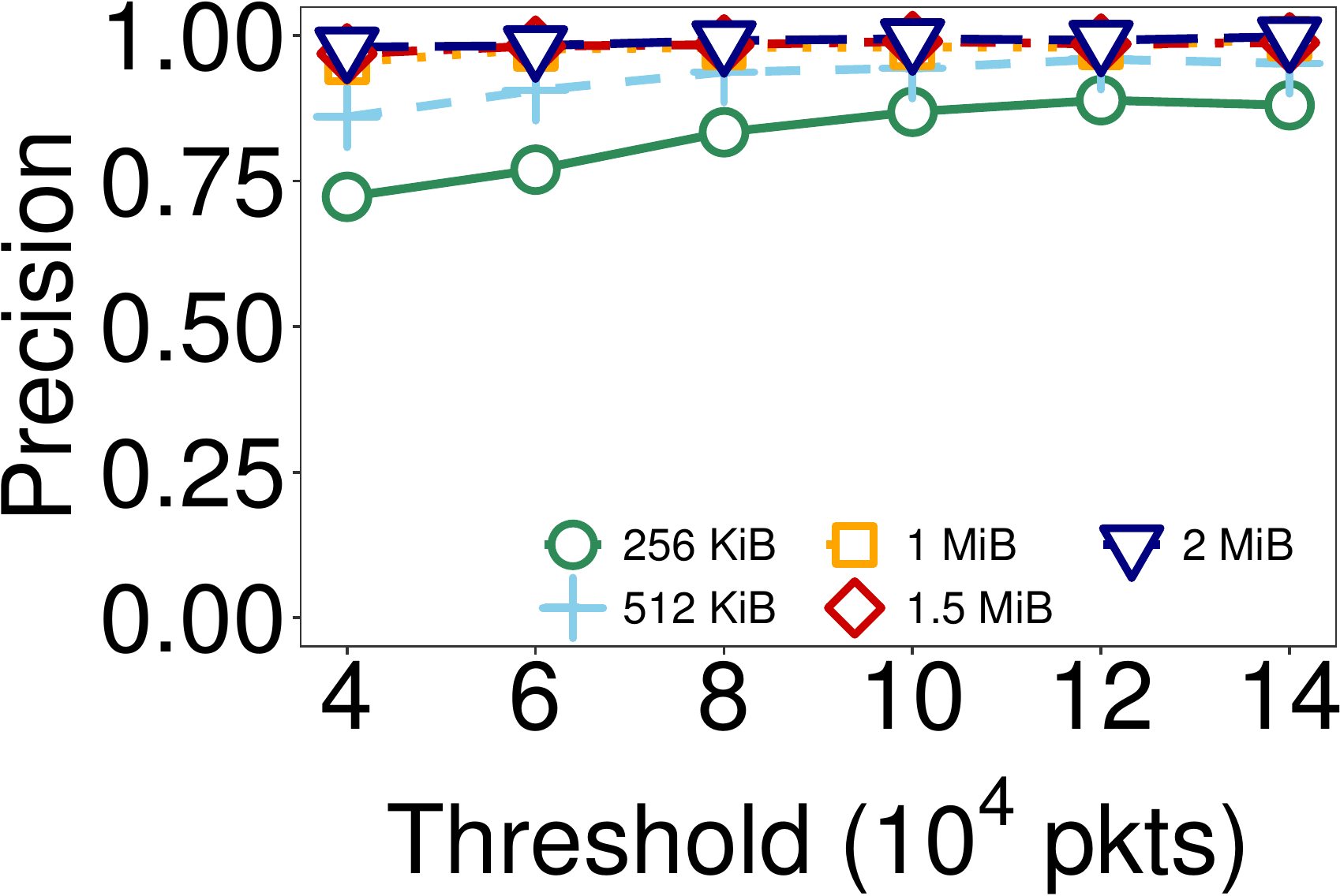} &
\includegraphics[width=1.7in]{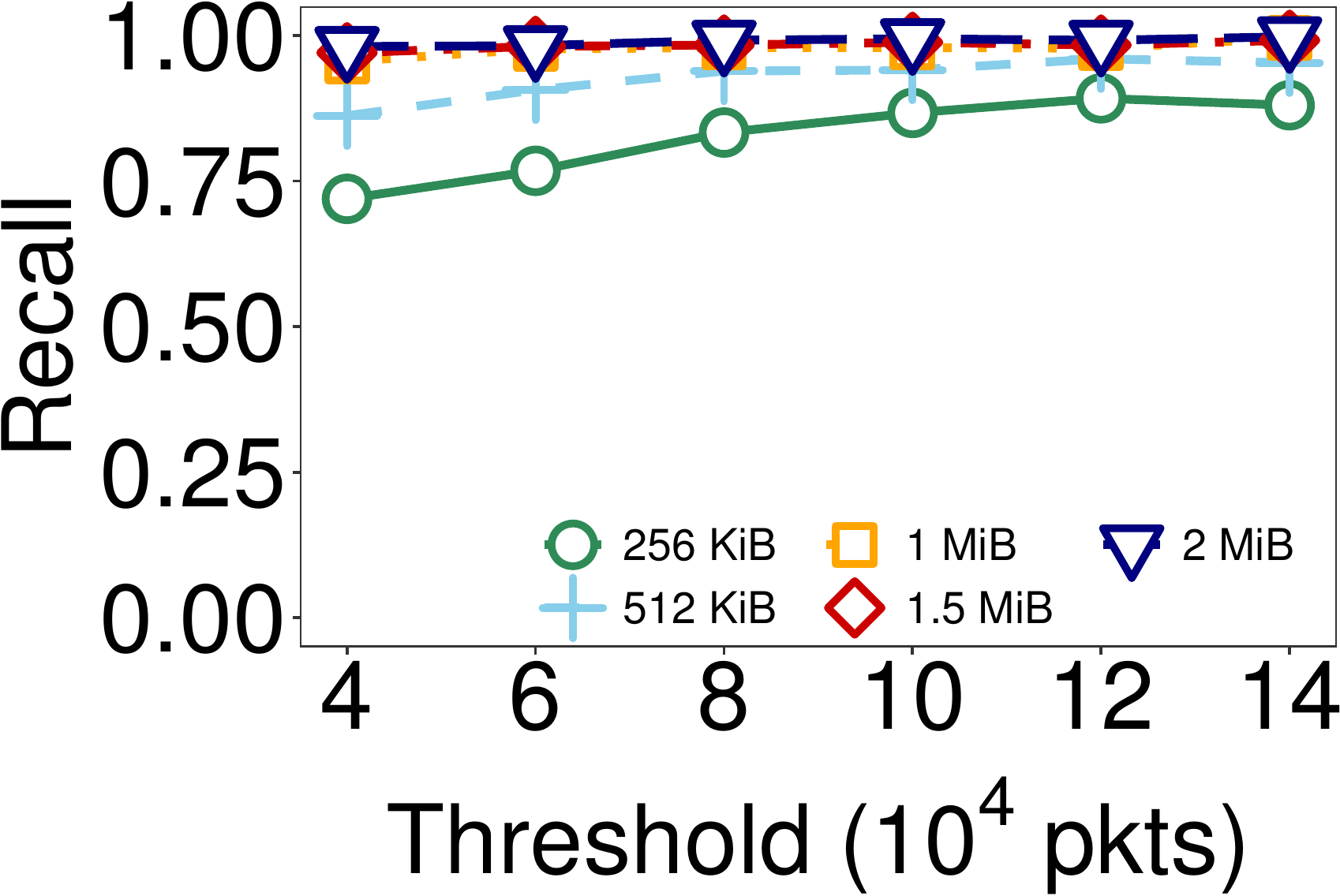} 
\vspace{-3pt}\\
{\small (a) Precision for 1D-byte} & 
{\small (b) Recall for 1D-byte} & 
{\small (c) Precision for 1D-bit} & 
{\small (d) Recall for 1D-bit} 
\vspace{3pt}\\
\includegraphics[width=1.7in]{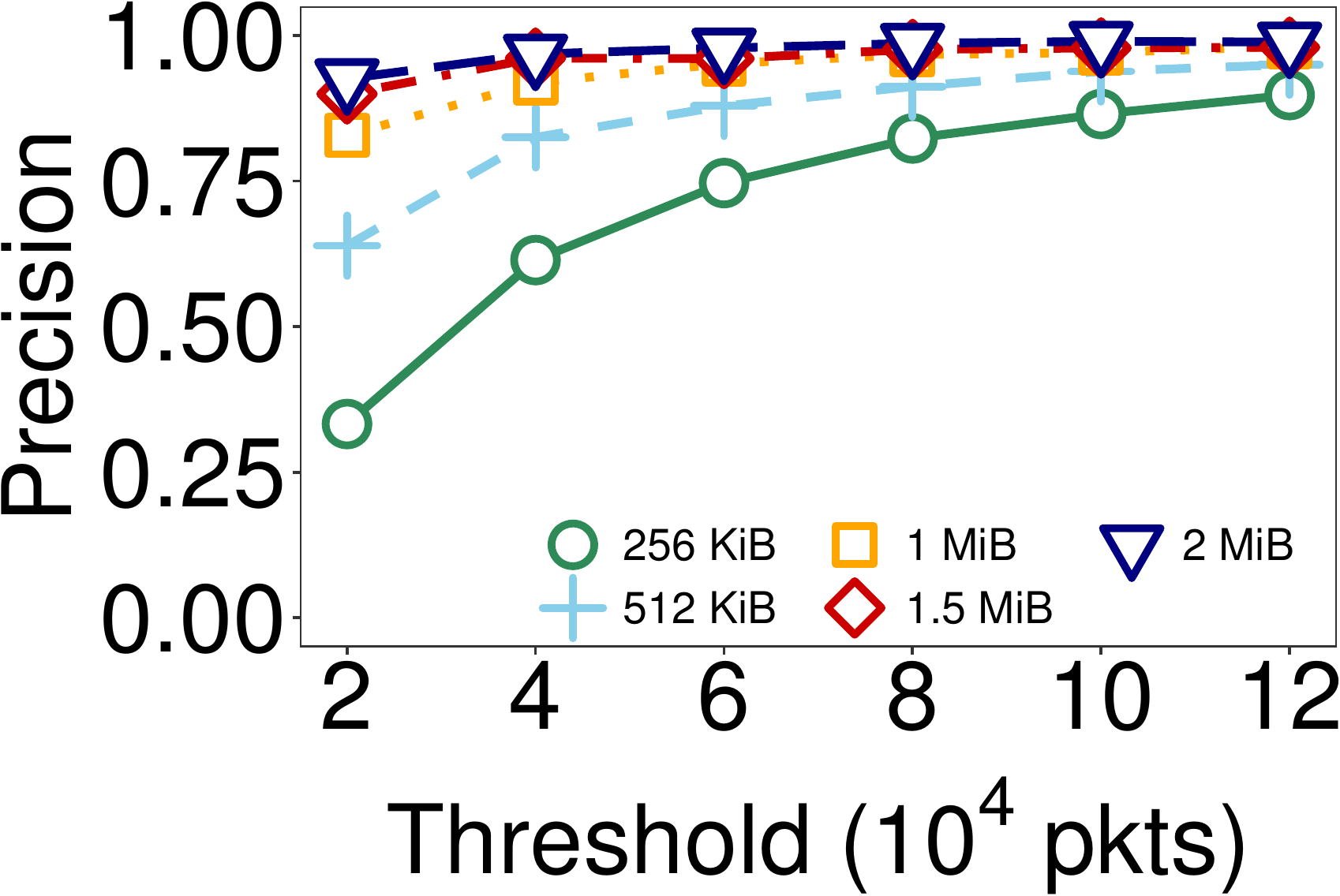} &
\includegraphics[width=1.7in]{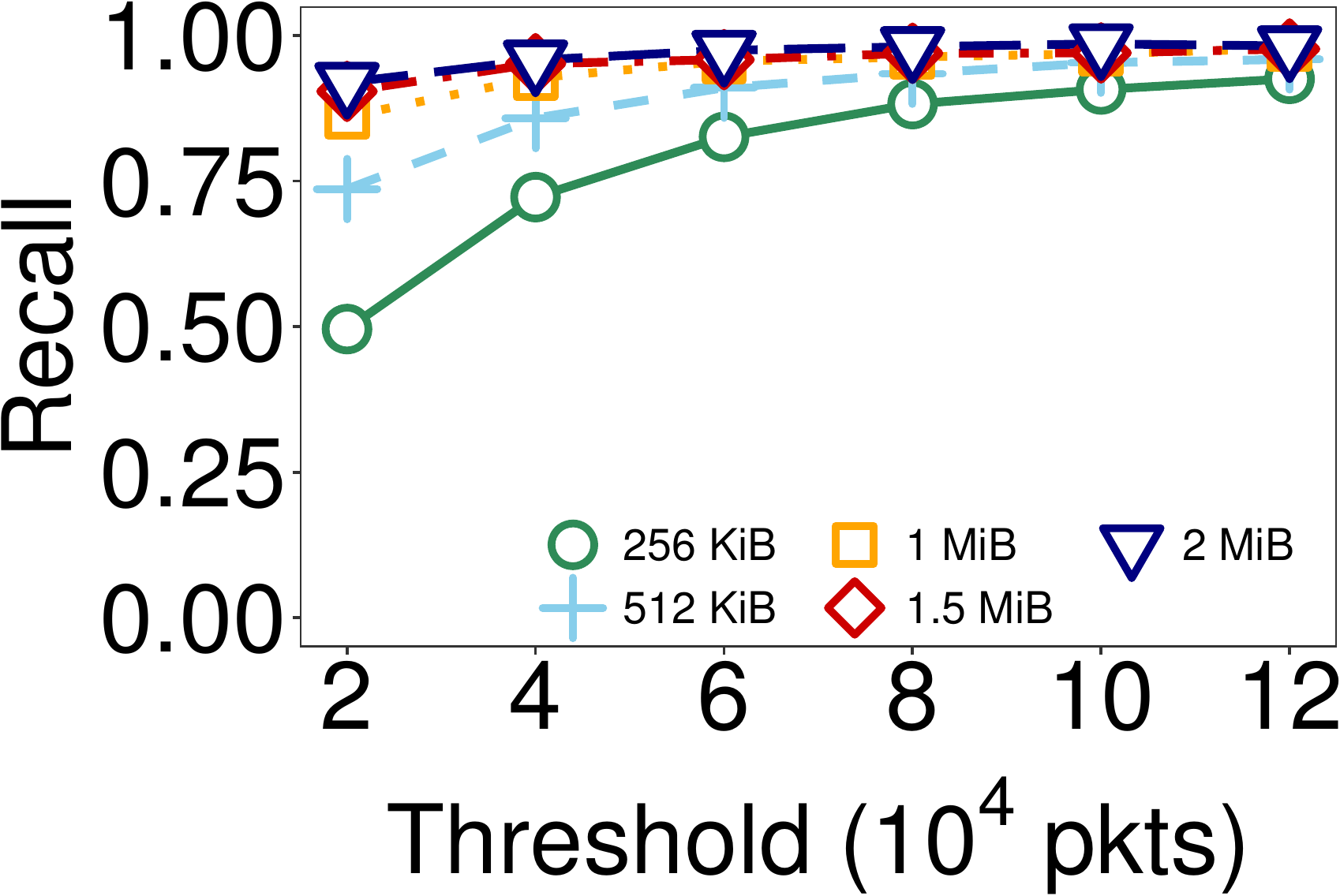} & 
\includegraphics[width=1.7in]{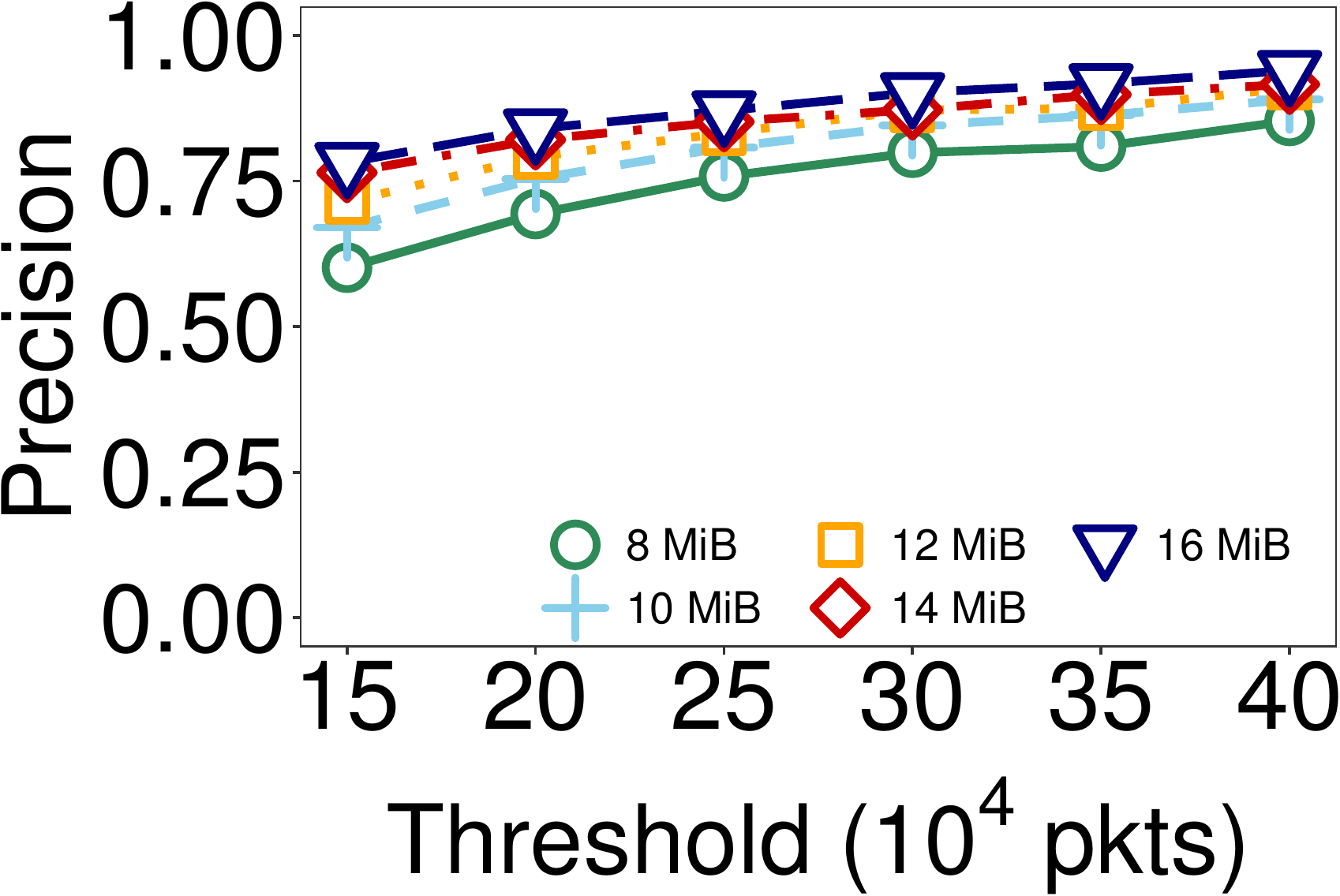} &
\includegraphics[width=1.7in]{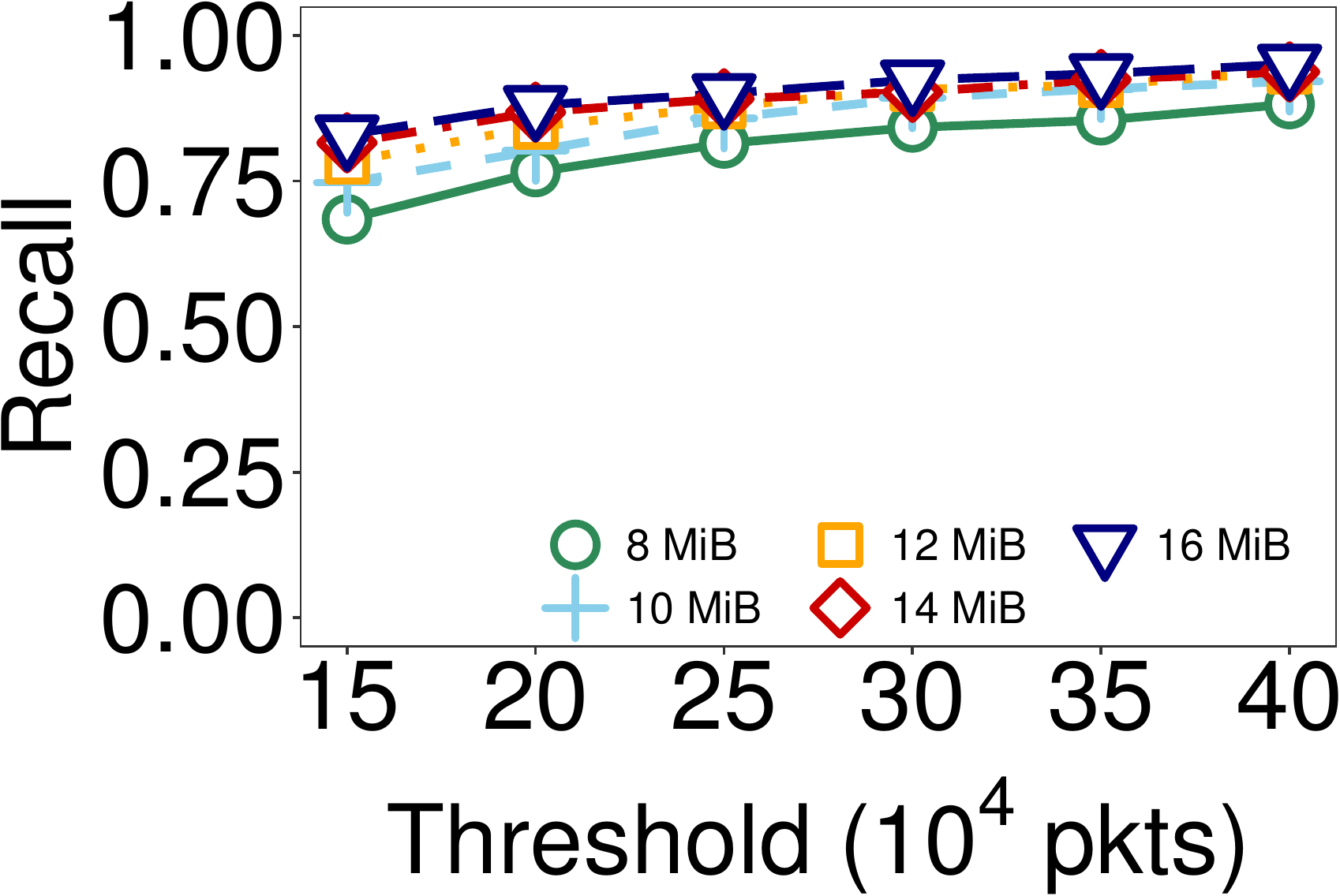} 
\vspace{-3pt}\\
{\small (e) Precision for 2D-byte} & 
{\small (f) Recall for 2D-byte} & 
{\small (f) Precision for 2D-bit} & 
{\small (h) Recall for 2D-bit} 
\end{tabular}
\vspace{-3pt}
\caption{(Experiment~2) Robustness of \sysname under various memory sizes.}
\label{fig:exp-2}
\vspace{-6pt}
\end{figure*}

\paragraph {(Experiment~3) Update throughput.} We benchmark the update
throughput of all HHH detection schemes on a server equipped with an Intel Xeon
E5-1630 3.70\,GHz CPU and 16\,GiB RAM. The server runs Ubuntu 14.04.5. To
exclude disk I/O overhead and stress-test each scheme, we first load the whole
trace into memory before running the experiment, and then process the trace as
fast as possible. Here, we focus on 1D-byte and 1D-bit HHH detection, while
similar performance trends are observed for 2D-byte and 2D-bit HHH detection.
We keep the same memory size setting for \sysname as in Experiment~1 and fix
the absolute threshold as 100,000 packets.

\begin{figure}[!t]
\centering
\begin{tabular}{@{\ }c@{\ }c}
\includegraphics[width=1.7in]{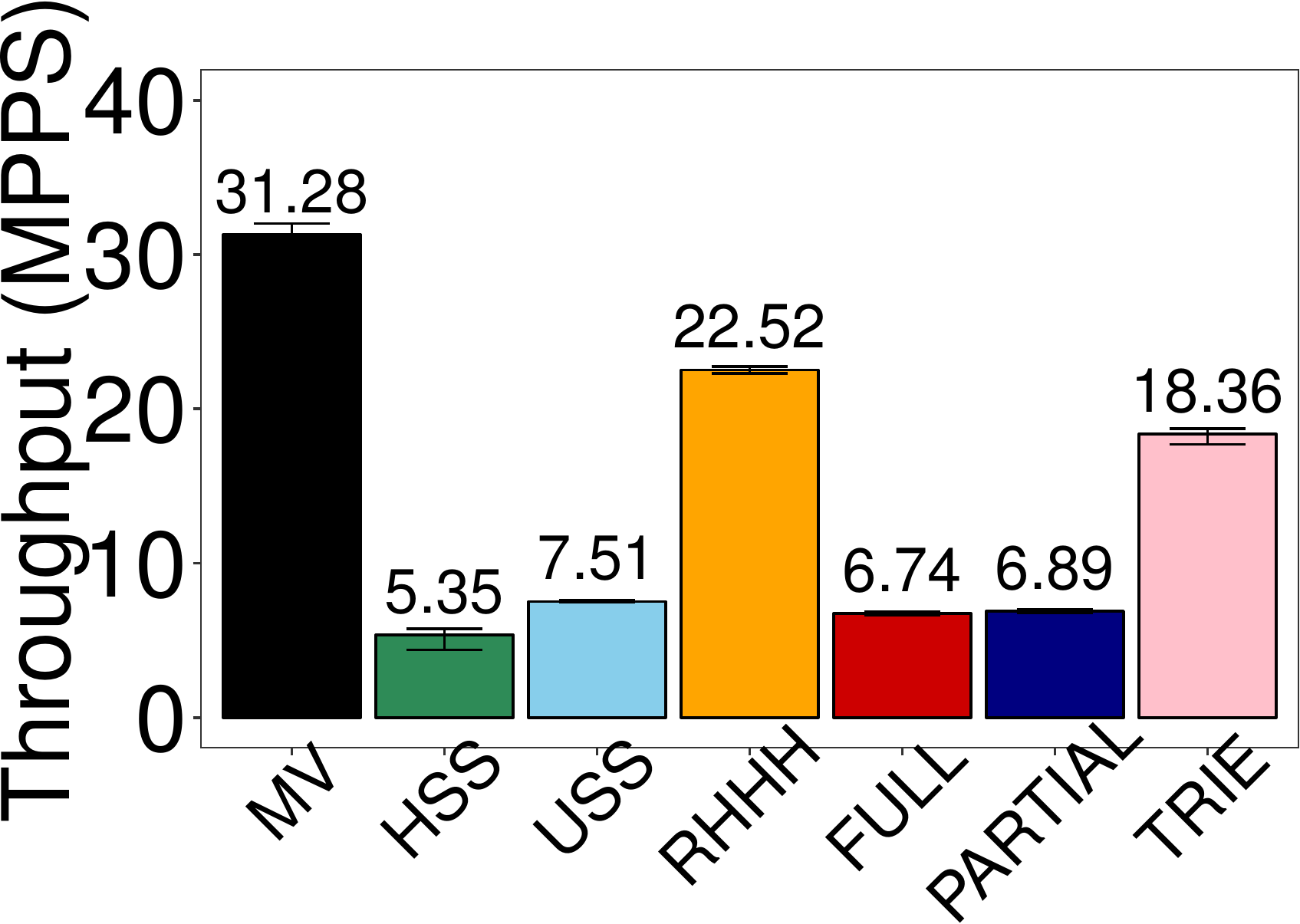} &
\includegraphics[width=1.7in]{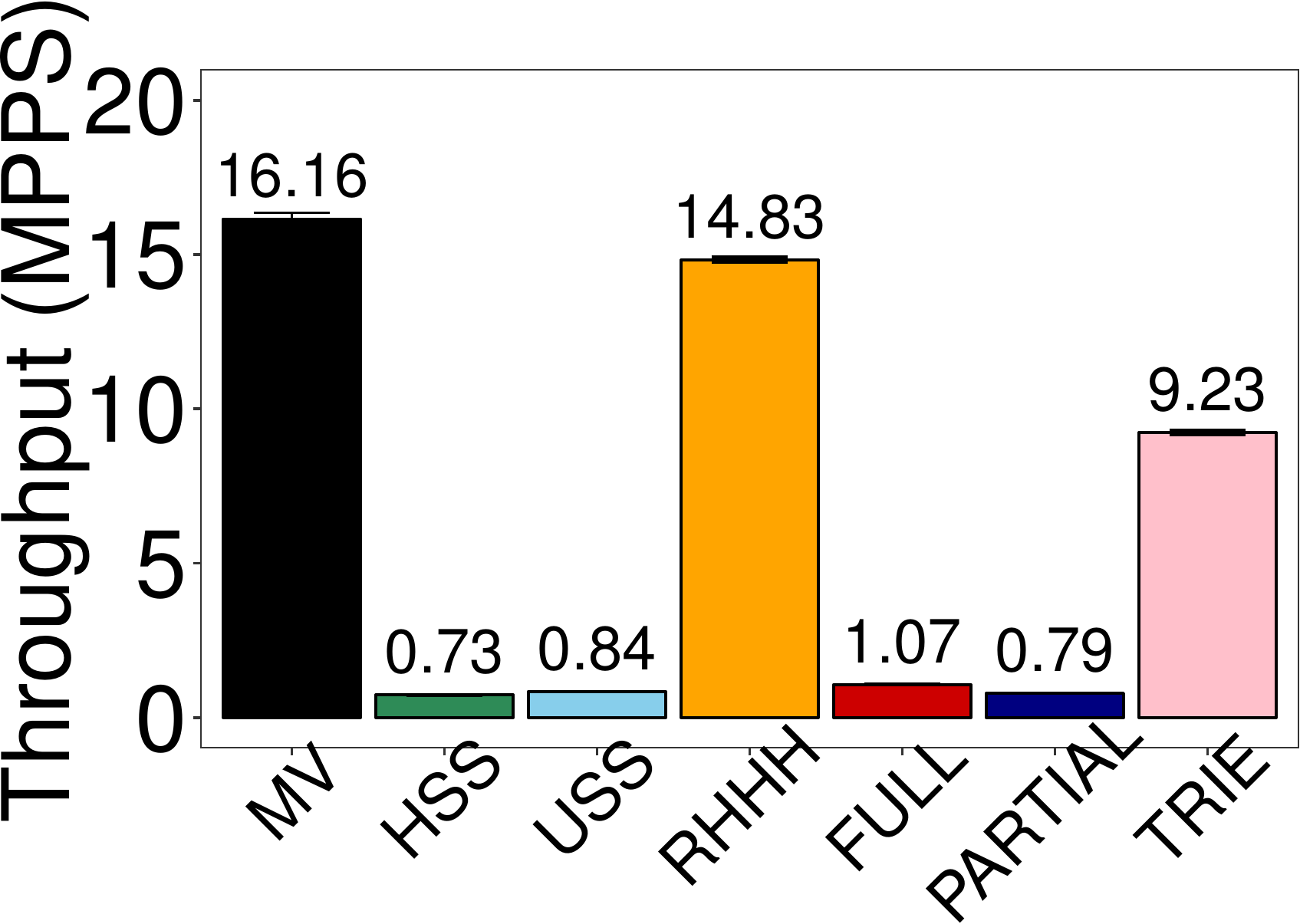} 
\vspace{-3pt}\\
{\small(a) 1D-byte } &
{\small(b) 1D-bit }
\end{tabular}
\vspace{-3pt}
\caption{(Experiment~3) Update throughput.}
\label{fig:exp-3-1}
\vspace{-6pt}
\end{figure}

Figures~\ref{fig:exp-3-1}(a) and \ref{fig:exp-3-1}(b) show the update
throughput of all schemes in million packets per second (MPPS) for 1D-byte and
1D-bit HHH detection, respectively; each error bar shows the maximum and
minimum throughput across different epochs for each scheme. 
\sysname achieves the highest throughput with up to $5.84\times$ and
$22.13\times$ throughput gain for byte-level and bit-level HHH detection,
respectively.  Both HSS and USS have the lowest throughput as they update the
sketch instance for every node in the hierarchy for each packet.  FULL,
PARTIAL, and TRIE also have low throughput, as they dynamically expand or
shrink their data structures during packet updates.  

Although RHHH supports constant-time updates per packet \cite{BenBasat2017},
it has lower throughput than \sysname in 1D HHH detection. The reason is that
for each packet update, RHHH accesses a single Space Saving instance, but
may incur multiple pointer assignments to update the linked lists in the Space
Saving data structure \cite{Metwally2005}.  RHHH can increase its throughput
via packet sampling (e.g., 10\% of packets in 10-RHHH \cite{BenBasat2017}),
but it increases the convergence time and has low accuracy. 

\begin{figure*}[!t]
\centering
\begin{minipage}{.5\textwidth}
\centering
\begin{tabular}{@{\ }cc}
\multicolumn{2}{c}{\includegraphics[width=2in]{fig/legend.pdf}} \\
\includegraphics[width=1.6in]{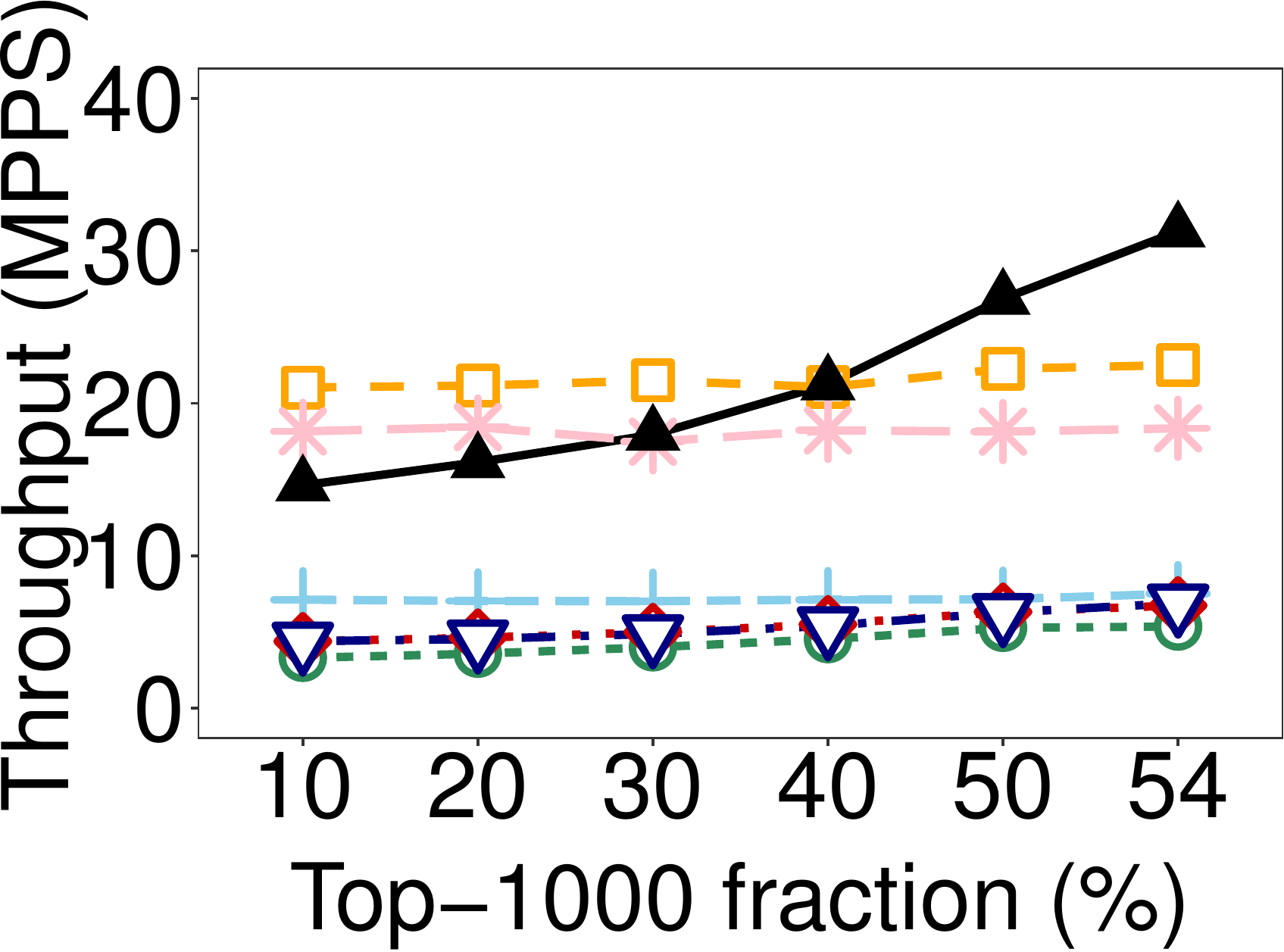} &
\includegraphics[width=1.6in]{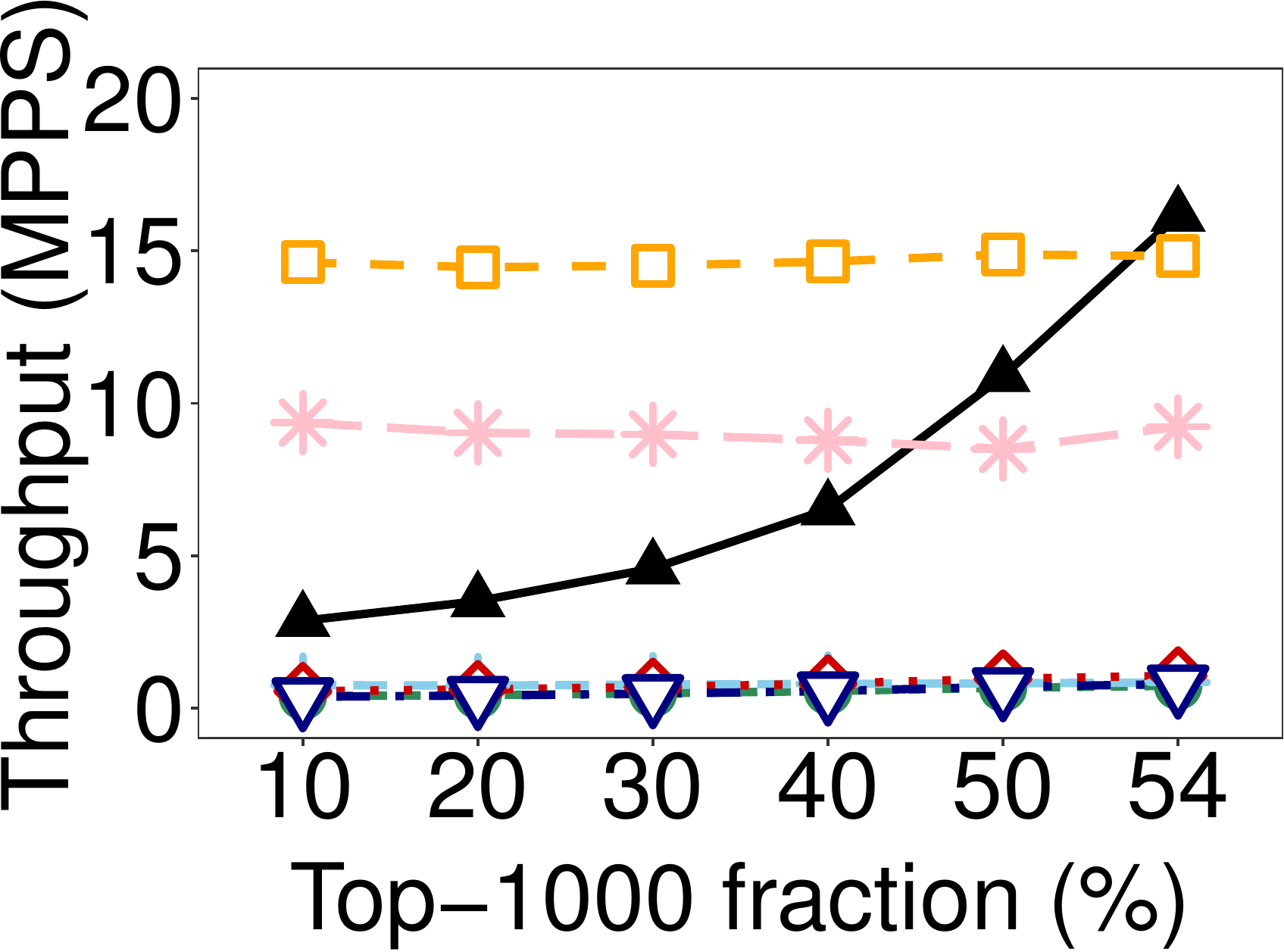}  
\vspace{-3pt}\\
{\small (a) 1D-byte} & 
{\small (b) 1D-bit} 
\end{tabular}
\vspace{-3pt}
\captionof{figure}{(Experiment~4) Throughput vs. skewness.}
\label{fig:exp-4}
\end{minipage}%
\begin{minipage}{.5\textwidth}
\centering
\begin{tabular}{@{\ }c@{\ }c}
\multicolumn{2}{c}{\includegraphics[width=3in]{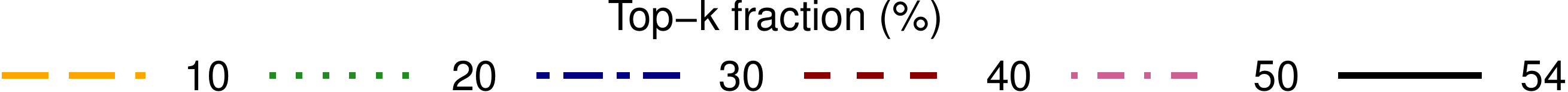}} \\
\includegraphics[width=1.7in]{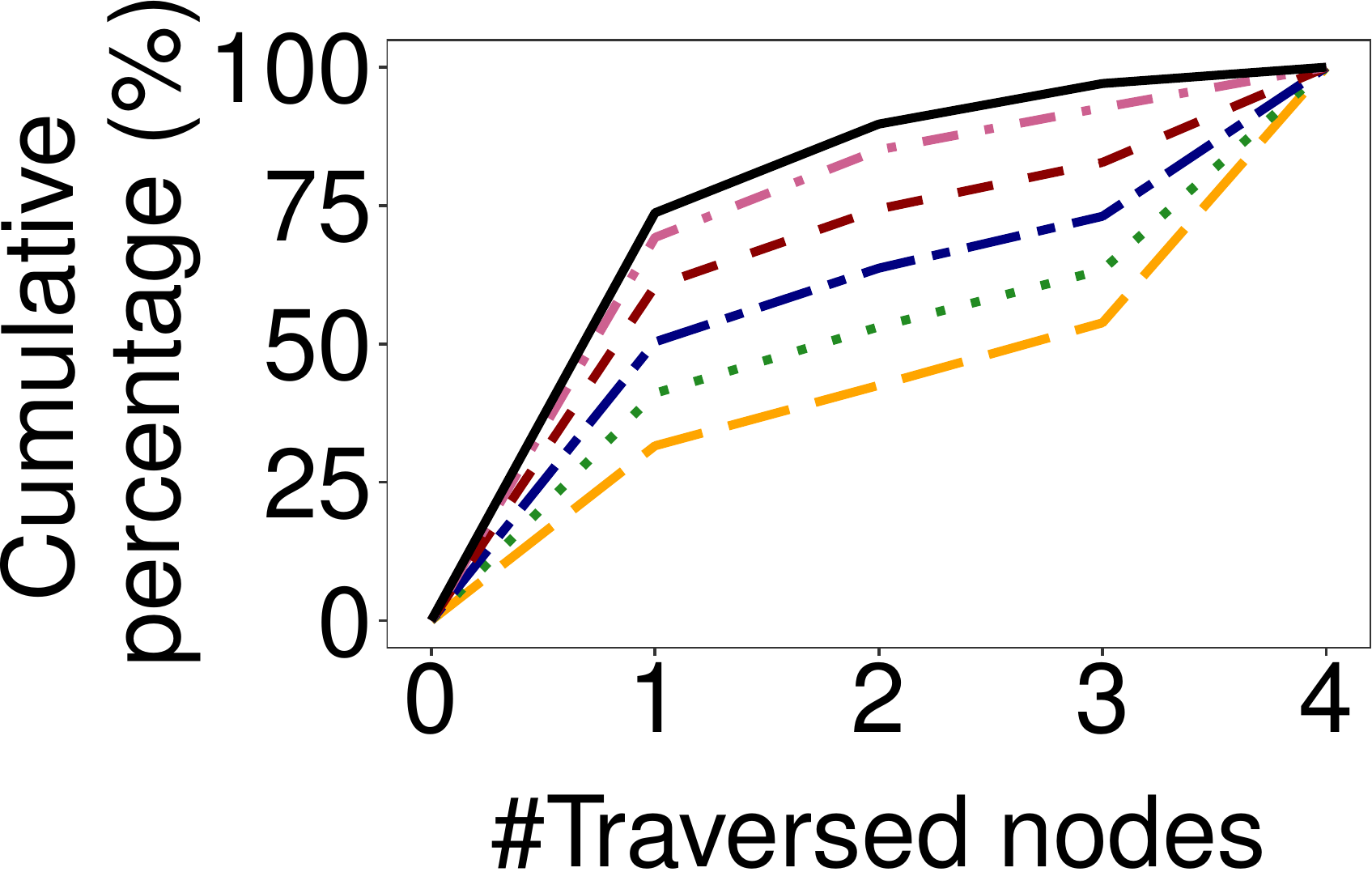} &
\includegraphics[width=1.7in]{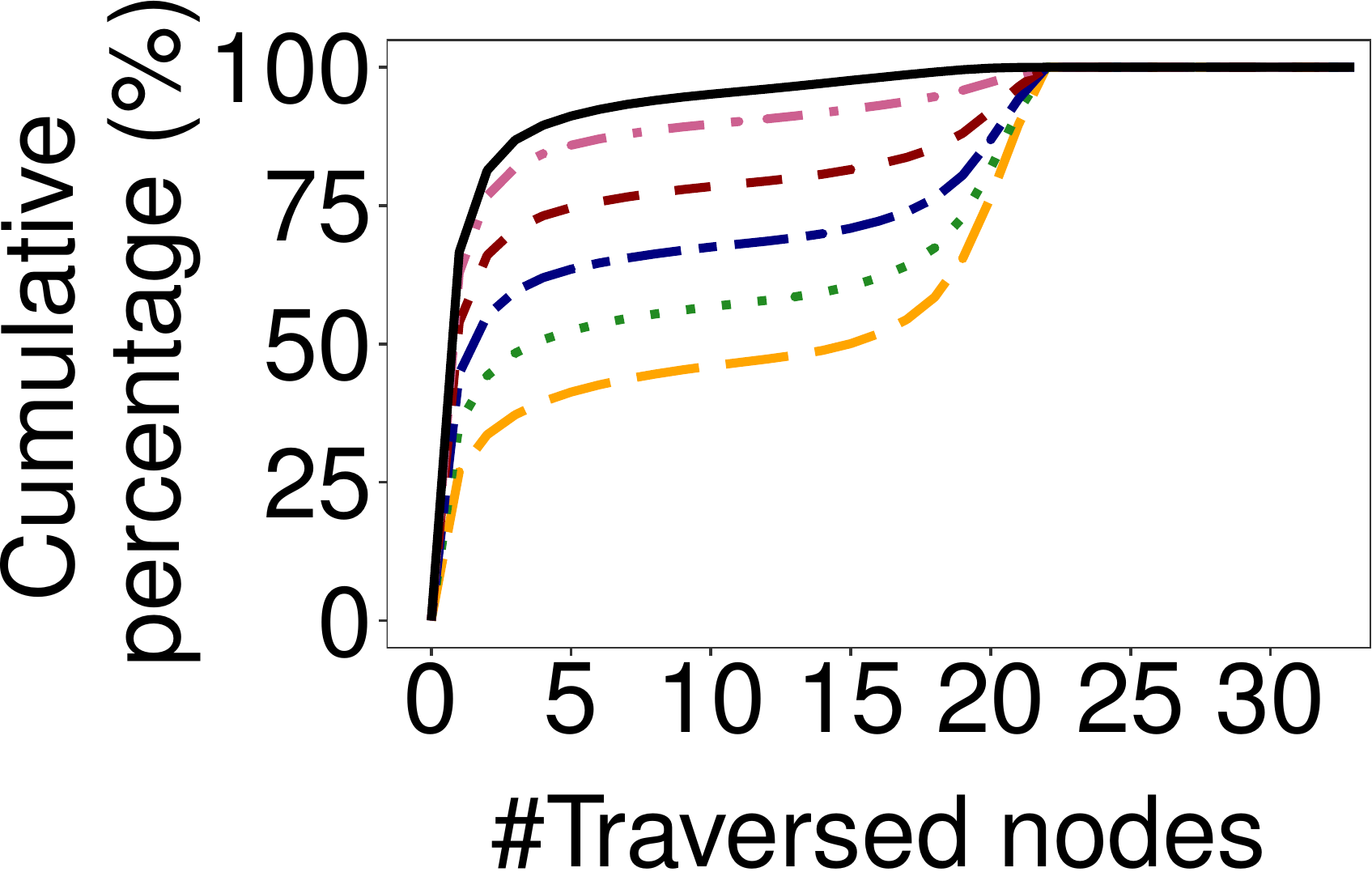}
\vspace{-3pt}\\
{\small (a) 1D-byte} & 
{\small (b) 1D-bit}  
\end{tabular}
\vspace{-3pt}
\captionof{figure}{(Experiment~5) Number of traversed nodes.}
\label{fig:exp-5}
\end{minipage}
\end{figure*}

\begin{figure*}[!t]
\centering
\begin{tabular}{@{\ }c@{\ }c@{\ }c@{\ }c}
\includegraphics[width=1.7in]{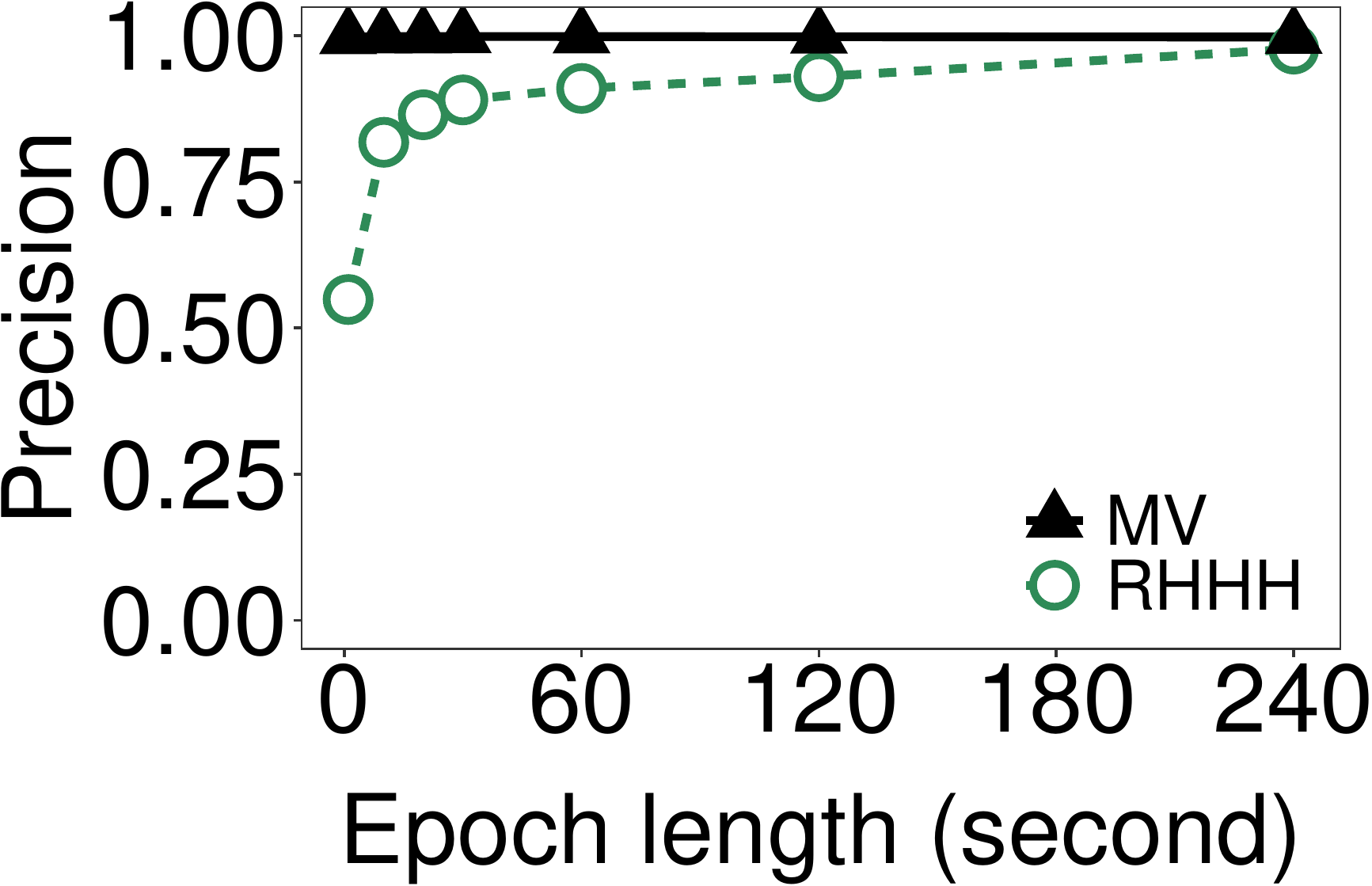} &
\includegraphics[width=1.7in]{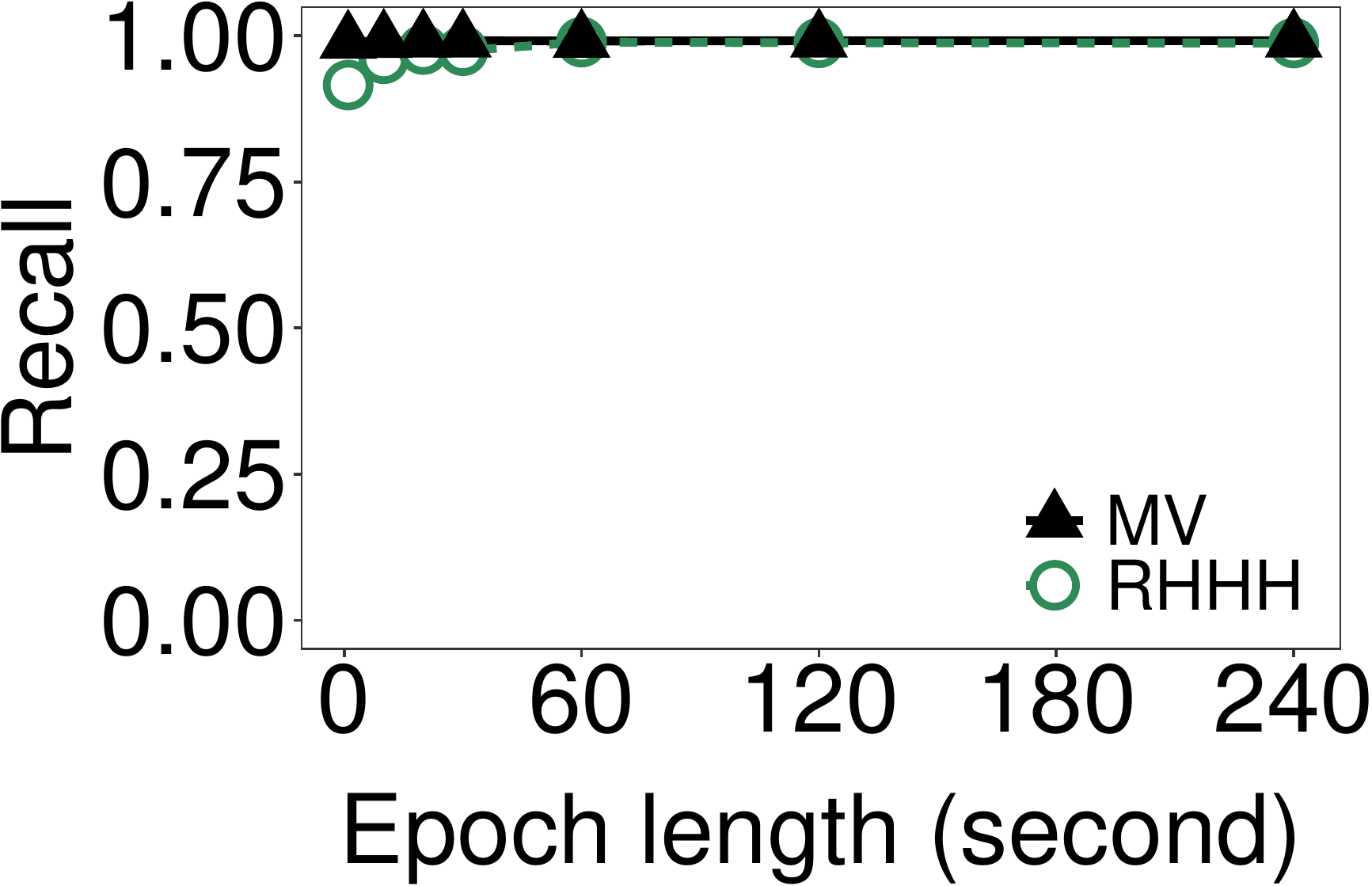} & 
\includegraphics[width=1.7in]{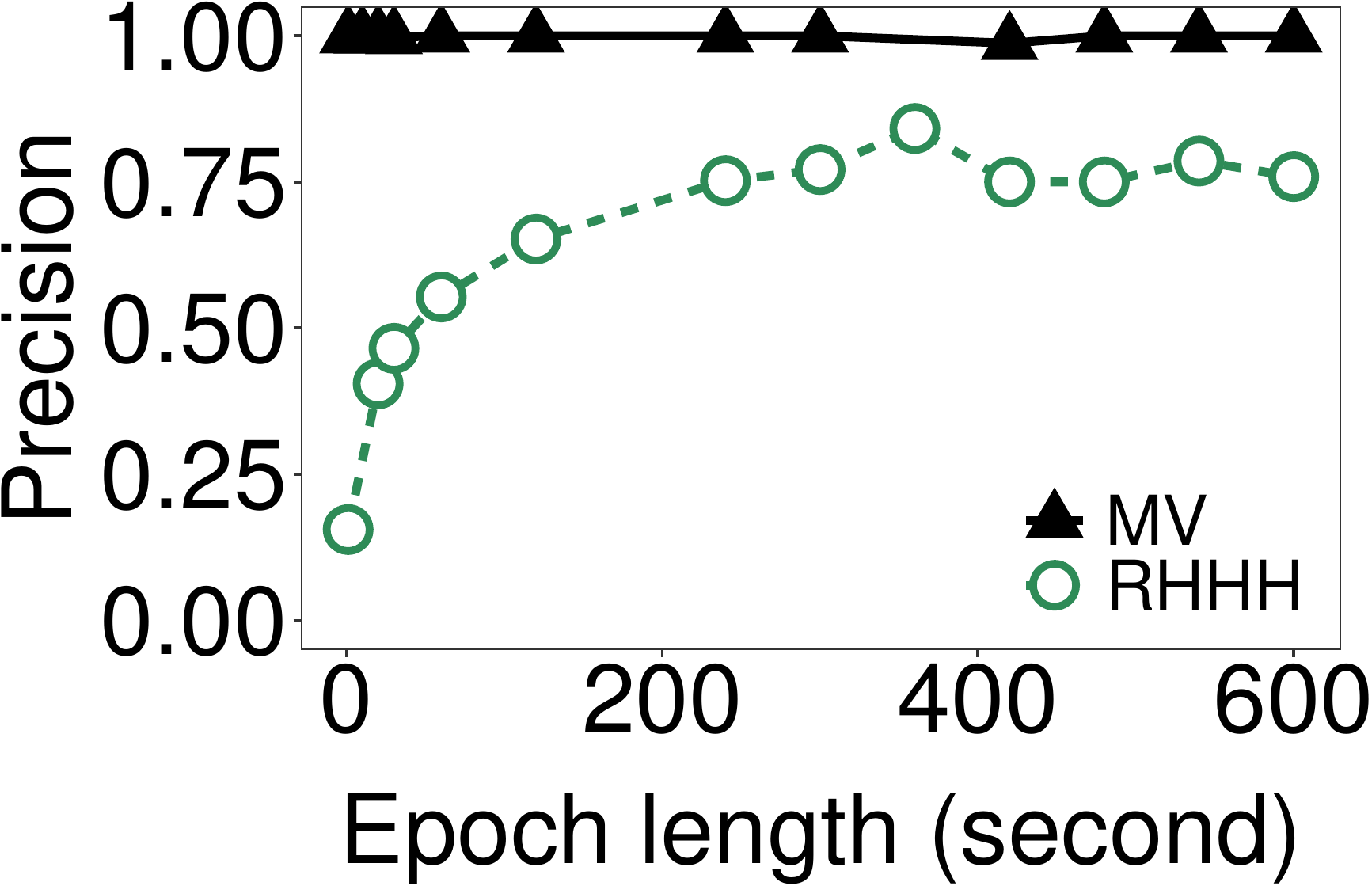} &
\includegraphics[width=1.7in]{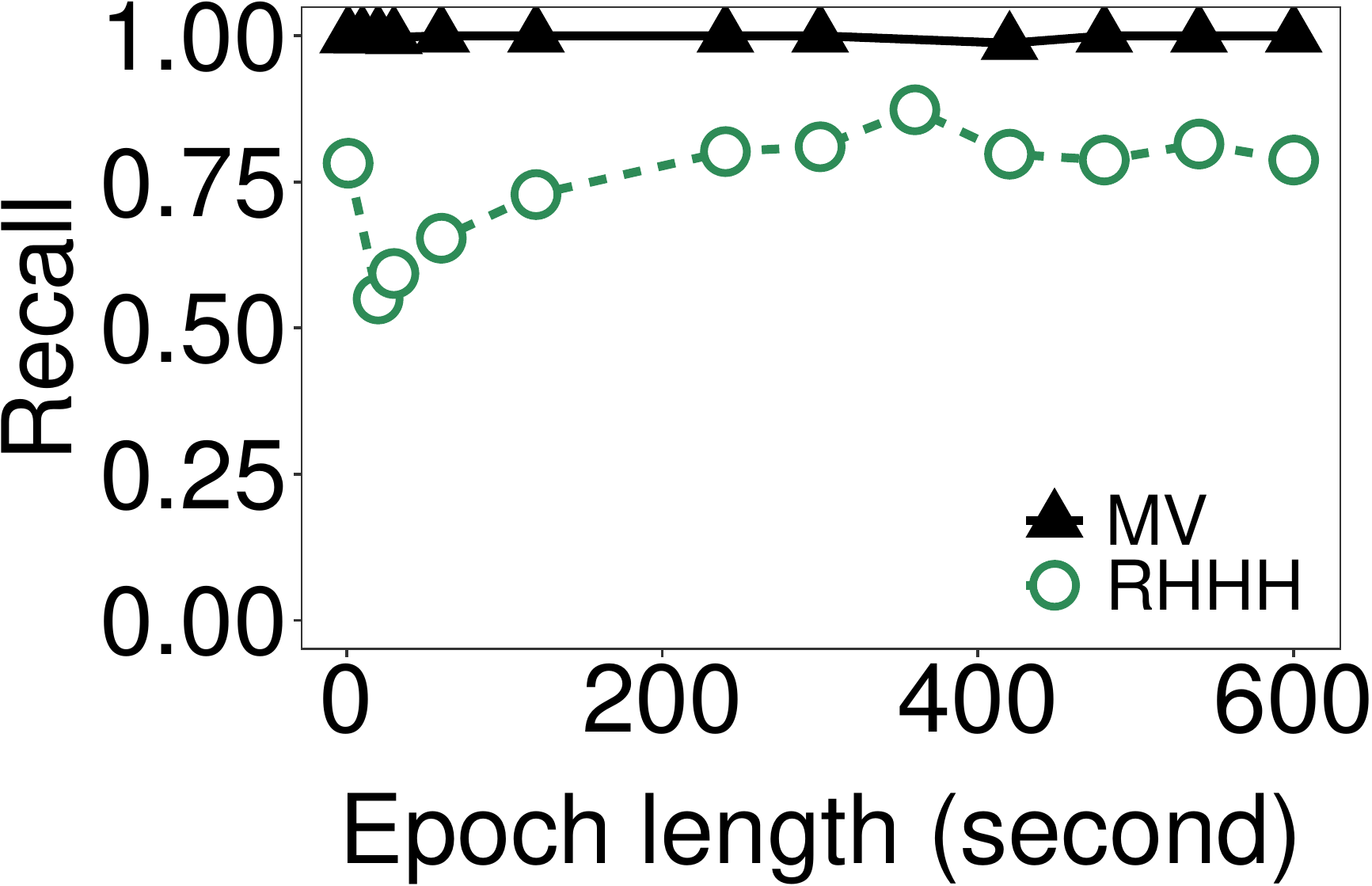} 
\vspace{-3pt}\\
{\small (a) Precision for 1D-byte} & 
{\small (b) Recall for 1D-byte} &
{\small (c) Precision for 1D-bit} & 
{\small (d) Recall for 1D-bit}
\end{tabular}
\vspace{-3pt}
\caption{(Experiment~6) Convergence.}
\label{fig:exp-6}
\vspace{-6pt}
\end{figure*}

\paragraph{(Experiment 4) Throughput versus skewness.}  While \sysname
is designed for highly skewed workloads, we evaluate the update
throughput of \sysname for less skewed workloads by varying the skewness
degree of the CAIDA traces.  In the original CAIDA traces used in our
evaluation, the top-1000 flows account for 54\% of the total number of packets
in the traces. We vary the skewness degree of the traces by controlling the
fraction of the total number of packets occupied by the top-1000 flows in each
epoch.  Specifically, we replace some packets of the top-1000 flows with new
packets that have randomly generated source and destination IP addresses, such
that the top-1000 flows account for a specified fraction (varied from 10\% to
50\%) of the total number of packets in each epoch.  A smaller specified
fraction implies a less skewed workload.  Here, we set $w_0$ as 5,000 and
3,000 in 1D-byte and 1D-bit detection, respectively. 

Figure~\ref{fig:exp-4} shows the update throughput of all schemes under
various skewness degrees for 1D-byte and 1D-bit HHH detection.  The throughput
of \sysname drops quickly as the specified fraction decreases (i.e., less
skewed), as more packets need to be pushed to higher levels.
Although \sysname's throughput decreases for less skewed workloads, its
throughput remains higher than other schemes except for RHHH and TRIE.

\paragraph{(Experiment 5) Number of traversed nodes.} To understand the update
throughput of \sysname, we collect the number of nodes being traversed by
\sysname in a hierarchy for each packet update in both 1D-byte and 1D-bit
hierarchies for different skewness degrees as specified in Experiment~4. 

Figure~\ref{fig:exp-5} shows the cumulative percentage of packets versus the
number of traversed nodes by \sysname for different skewness degrees.  We
first examine the results for the original CAIDA traces (i.e., the top-1000
fraction is 54\%).  In 1D-byte HHH detection, 73\% of packet updates traverse
only one node in the hierarchy, where each packet update on average traverses
only 1.39 nodes.  In 1D-bit HHH detection, the number of traversed nodes
slightly increases: only 66\% of packet updates traverse one node, while each
packet update on average traverses 2.36 nodes.  The reason is that as the
number of nodes increases in the 1D-bit hierarchy, each packet update
generally needs to traverse more nodes in order to be admitted by a candidate
HHH.  Thus, \sysname has lower throughput in 1D-bit HHH detection than in
1D-byte HHH detection. Nevertheless, since each packet update traverses only
one node in a hierarchy in most cases, it justifies the high update throughput
of \sysname (see Figure~\ref{fig:exp-3-1} in Experiment~3).  

We examine the results when the skewness degree decreases. As the
top-1000 fraction decreases from 54\% to 10\%, the fraction of packet updates
traversing only one node decreases from 73\% to 31\% for 1D-byte HHH
detection, and from 66\% to 26\% for 1D-bit HHH detection. Correspondingly,
the average number of traversed nodes per update increases from 1.39 to 2.72,
and from 2.36 to 11.62, respectively.  This explains the throughput drop of
\sysname for less skewed workloads. 

\paragraph {(Experiment~6) Convergence.}  We study the convergence 
by comparing the accuracy between \sysname and RHHH \cite{BenBasat2017} for
various epoch lengths.  
We use the first twelve minutes of the CAIDA traces and vary
the epoch length from one second to ten minutes (our default is one minute),
where the number of packets in each epoch on average varies from 0.5\,M to
401\,M.  For each epoch length, we divide the traces into multiple epochs (if
the epoch length is larger than six minutes, we consider one epoch only).  A
small epoch length (e.g., one second) implies a small number of packets in the
epoch, and any scheme that requires sufficient packets for convergence may
have low accuracy.  We set the absolute threshold for each epoch as
$\phi\mathcal{S}$, where we fix $\phi=0.01$ and $\mathcal{S}$ as the total
number of packets in that epoch.  We keep the same memory usage of \sysname
and RHHH.  We focus on 1D HHH detection, and similar observations are made for
the 2D cases. 

Figure~\ref{fig:exp-6} shows the results.  The accuracy of RHHH drops in small
epoch lengths, due to its slow convergence.  For example, its precision is less
than 0.9 if the epoch length is less than 30\,s in the 1D-byte case;
in the 1D-bit case, both its precision and recall converge to around 0.8 after
300 seconds (conforming to the results in the original paper
\cite{BenBasat2017}). In contrast, the precision and recall of \sysname are
higher than 0.99 in all settings. 

\paragraph{(Experiment~7) \sysname in hardware.}
We evaluate \sysname for 1D-byte HHH detection in a Tofino switch
\cite{tofino}.  We configure the number of buckets from arrays $A_0$ to $A_4$
as 2048, 2048, 2048, 256, and 1, respectively.  In this case, both the
precision and recall of \sysname are above 0.9 for an epoch length of one
second in our traces.   

Table~\ref{tab:tofino} summarizes the resource usage of \sysname in the Tofino
switch, in terms of the number of physical stages used, SRAM consumption,
the number of stateful ALUs consumed, and the message size overhead across
stages in the packet header vector (PHV).  \sysname occupies all 12 physical
stages of the switch.  Nevertheless, its average resource consumption per
stage is small, and the remaining resources in each occupied stage can still be
made available for other applications. For example, \sysname consumes only
2.81\% of SRAM and 27.18\% of stateful ALUs of the switch. The total size of
messages across stages, including packet header information and metadata needed
by \sysname, is 132~bytes, among which only 48~bytes are due to the metadata
from \sysname.  

We also validate that \sysname's throughput now achieves
100\,Gb/s in our testbed (bounded by our packet generation rate), and it does
not have any packet resubmission or recirculation.  As \sysname incurs limited
switch resource overhead, we conjecture that its throughput in switch hardware
can be even higher in production deployment.

\begin{table}[t]
\centering
\renewcommand{\arraystretch}{1.1}
\small
\begin{tabular}[c]{|c|c|c|c|}
\hline
\textbf{No. stages} & \textbf{SRAM (KiB)} & \textbf{No. SALUs} & \textbf{PHV
size (byte)}\\
\hline
12 (100\%) & 432 (2.81\%) & 13 (27.08\%)  & 132 (17.18\%) \\
\hline
\end{tabular}
\caption{(Experiment~7) Switch resource usage of \sysname; percentages are
fractions of total resources.}
\label{tab:tofino}
\vspace{-6pt}
\end{table}

\section{Related Work}
\label{sec:related} 

\paragraph{Dynamic data structures.} To maintain memory efficiency in HHH
detection, prior studies propose dynamic data structures that insert or delete
keys of interest on-the-fly.  Trie-based HHH detection
\cite{Zhang2004,Truong2009} tracks keys in trie nodes and
dynamically spawns new child nodes if a trie node has a byte count above a
splitting threshold.  Cormode et al.  \cite{Cormode2008} propose 
{\em full ancestry} and {\em partial ancestry}, both of which build
on Lossy Counting \cite{Manku2002} with hierarchy awareness.  Both algorithms
maintain a lattice structure that dynamically adds or removes nodes.  In
contrast, \sysname uses static memory allocation and incurs no dynamic memory
management overhead.

\paragraph{Extensions of HH detection.} Several studies extend existing
heavy-hitter-based solutions for HHH detection.  Lin et al. \cite{Lin2007} adapt
Space Saving \cite{Metwally2005} to improve the accuracy of 1D HHH detection.
Mitzenmacher et al.  \cite{Mitzenmacher2012} further extend Space Saving with
better space efficiency.  Randomized HHH (RHHH) \cite{BenBasat2017} extends the
solution by Mitzenmacher et al.  \cite{Mitzenmacher2012} with randomization: it
maintains a Space Saving instance for each level of the hierarchy and randomly
updates only one of the instances for each packet.  While RHHH achieves high
update throughput, it has slow convergence.  In contrast, \sysname preserves
the invertibility and static memory allocation of MV-Sketch and adopts a
pipelined design to achieve both lightweight updates and fast convergence in
HHH detection.

\paragraph{TCAM-based solutions.} Some studies \cite{Jose2011, Moshref2013,
Popescu2017} leverage TCAM counters in hardware switches for 1D HHH detection,
by matching and counting packets in the data plane and adapting the monitoring
rules for different prefixes based on counter values.  They rely on a
centralized controller to decide the rules on which specific aggregation
levels are monitored.  In contrast, \sysname can work entirely in the data
plane for general aggregation levels. 

\paragraph{Others.}  Some HHH detection solutions specifically address the
practical requirements of network measurement.  AutoFocus \cite{Estan2003} is
an offline traffic analysis tool for identifying large traffic clusters.  Cho
\cite{Cho2017} proposes a recursive partitioning approach for tractable HHH
detection from an operational perspective. 

\section{Conclusions}
\label{sec:conclusion} 

We revisit the HHH detection problem in network measurement.  We present
\sysname, a novel invertible sketch that supports both lightweight updates and
fast convergence in HHH detection and can be feasibly deployed in programmable
switches.  \sysname builds on the skewness property of IP traffic and the
pipelined executions of majority voting.  Theoretical analysis and prototype
evaluation in both software and hardware justify the design properties of
\sysname: high accuracy, high update throughput, fast convergence, and
resource efficiency in P4-based switch deployment. 

\bibliographystyle{abbrv}
\bibliography{paper}

\end{document}